\newtheorem{proposition}{Proposition}
\newtheorem{theorem}{Theorem}
\newtheorem{corollary}[proposition]{Corollary}
\newtheorem{lemma}[proposition]{Lemma}
\newtheorem{example}{Example}
\newtheorem{definition}{Definition}
\newcommand{\EE}{\mathbb{E}}
\newcommand{\RR
}{\mathbb{R}}
\newcommand{\DD}{\mathcal{D}}
\renewcommand{\epsilon}{\varepsilon}
\newcommand{\simiid}{\overset{\mathrm{iid}}{\sim}}
\DeclareMathOperator{\mono}{mono}
\DeclareMathOperator{\poly}{poly}
\DeclareMathOperator{\Rank}{Rank}
\DeclareMathOperator{\Var}{Var}
\DeclareMathOperator{\Beta}{beta}
\newcommand{\Comments}{1}
\newcommand{\mynote}[3]{\ifnum\Comments=1\textcolor{#1}{#2: #3}\fi}
\newcommand{\colleges}{\mathcal{F}}
\newcommand{\numcolleges}{m}
\newcommand{\totalsupply}{S}
\newcommand{\college}{f}
\title{Monoculture in Matching Markets}
\author{Kenny Peng\\Nikhil Garg}
\date{\url{kennypeng@cs.cornell.edu}, \url{ngarg@cornell.edu}}
\begin{document}

\maketitle

\begin{abstract}
    Algorithmic monoculture arises when many decision-makers rely on the same algorithm to evaluate applicants. An emerging body of work investigates possible harms of this kind of homogeneity, but has been limited by the challenge of incorporating market effects in which the preferences and behavior of many applicants and decision-makers jointly interact to determine outcomes.
    
    Addressing this challenge, we introduce a tractable theoretical model of algorithmic monoculture in a two-sided matching market with many participants. We use the model to analyze outcomes under monoculture (when decision-makers all evaluate applicants using a common algorithm) and under polyculture (when decision-makers evaluate applicants independently). All else equal, monoculture (1) selects less-preferred applicants when noise is well-behaved, (2) matches more applicants to their top choice, though individual applicants may be worse off depending on their value to decision-makers and risk tolerance, and (3) is more robust to disparities in the number of applications submitted.
\end{abstract}

\section{Introduction}
What happens when many decision-makers---in consequential domains like employment and college admissions---all evaluate applicants using the \textit{same} algorithm? This possibility has been termed ``algorithmic monoculture''---a name that references the problems arising in agriculture when a single crop is farmed extensively, amplifying the risk of correlated widespread failures. A recent line of work considers how a lack of variety in algorithms may warrant similar concerns.

Initial inquiries into the consequences of algorithmic monoculture have identified potential harms to both decision-makers and applicants. In seminal work initiating the formal study of algorithmic monoculture, \cite{kleinberg2021algorithmic} demonstrated that firms relying on a common algorithm may hire weaker applicants than when using idiosyncratic (but less individually accurate) hiring processes. On the applicant side, an emerging set of work highlights the concern of \textit{systemic exclusion}, which occurs when an applicant is denied from all opportunities \citep{creel2022algorithmic, bommasani2022picking, toups2023ecosystem, jain2023algorithmic}.

This emerging literature has been limited by the challenge of incorporating \textit{market} effects. Domains where monoculture may be especially impactful---such as employment and college admissions---share some common properties. First, participants on both sides of the market are in competition: workers compete over a limited number of job openings, and firms compete over workers who can each accept only one offer. Second, outcomes are determined by the preferences of participants on both sides: roughly speaking, a match is formed only if both a worker and firm prefer each other over their other options. Analyzing these interactions across many participants can be theoretically challenging; indeed, the existing literature considers neither markets with many participants nor two-sided interactions.

The technical contribution of our paper is a matching markets model that can be used to study algorithmic monoculture. In particular, we analyze equilibrium outcomes (i.e., stable matchings) that emerge under \textit{monoculture} (when firms each use the same method to evaluate applicants) and \textit{polyculture} (when firms use idiosyncratic methods to evaluate applicants). The model builds directly on the popular continuum model formalized by \cite{azevedo2016supply}, in which stable matchings are characterized by a cutoff structure in which applicants are matched to their favorite firm among those they can ``afford''---i.e., among the firms at which the applicant's \textit{score} exceeds the firm's \textit{cutoff}. In the model we introduce, an applicant's score is equal to an \textit{estimated value} that is a noisy realization of their \textit{true value}. Under monoculture, all firms use a single shared estimate of an applicant's value; under polyculture, firms obtain separate, independently drawn estimates of the applicant's value.

We show how modeling market effects with many participants advances the present understanding of monoculture in several ways, with \Cref{fig:dotsaveragerankmatched} illustrating our first two theoretical results. 
\begin{itemize}
    \item First, as long as noise is well-behaved (i.e., with lighter-than-exponential tail), essentially only the most preferred applicants are hired under polyculture when there are many firms. This shows that the existing (relatively modest) effect found by \cite{kleinberg2021algorithmic} with two firms can be fully strengthened when considering many firms. Our result gives a ``wisdom of the crowds''-like result, where even when individual firms are noisy, the market outcome can behave as if firms each had perfect information about applicants---even despite firms not explicitly sharing any information. Polyculture can thus significantly outperform monoculture in terms of firm welfare, even when the idiosyncratic evaluation processes of firms are much noisier than the shared algorithm used under monoculture.
    \item Second, we challenge---and add nuance to---existing work suggesting that monoculture harms applicants. In fact, we show that \textit{on average}, applicants are better off under monoculture than polyculture, in the sense that on average, applicants are matched to firms that they more prefer.\footnote{This finding is closely related to work in school choice that compares tie-breaking rules, where it has been shown that the use of a single lottery number, rather than multiple lottery numbers, results in more students being matched to their top choices (e.g., \cite{abdulkadirouglu2009strategy, de2023performance, ashlagi2019assigning, arnosti2023lottery, allman2023rank}). We discuss this connection further in \Cref{sec:extended_related}.} On the other hand, the \textit{individual} preference of an applicant for monoculture or polyculture varies based on their true value. Some applicants are strictly more likely to be matched to their top choice under monoculture, but strictly less likely to match at all; for these applicants, their preference could depend on other factors such as risk aversion.
    
    \item Third, we expand the scope of existing analysis by extending our model to incorporate \textit{differential application access}: when applicants submit a varying number of applications. This addresses an equity concern especially salient in college admissions, where the cost of application can be high.\footnote{Application fees to U.S. colleges were as high as 100 dollars in the 2022-2023 cycle \citep{wood2022colleges}. Moreover, the Common Application reported that the number of applications submitted by students varied significantly in the 2021-2022 cycle; for example, high-volume applicants who submitted 15 or more applications were 2.5x as likely to have attended a private high school \citep{kim2022firstyear}.} It also reveals how different application systems incentivize applicants to submit more applications. We find that applicants who submit many applications benefit much more under polyculture than under monoculture. Consequently, differences in the number of applications submitted harms college and firm welfare more under polyculture than monoculture. Overall, monoculture is more robust than polyculture with respect to differential application access.
\end{itemize} 
In this way, our approach strengthens, challenges, and broadens the scope of the existing monoculture literature. The effects we study emerge in large markets with two-sided preferences, the standard setting for algorithmic monoculture (e.g., prior works often motivate monoculture and polyculture through the lens of hiring). Moreover, the cutoff structure inherent in \cite{azevedo2016supply} results in clean intuition for each of our results---as well as precise characterizations at the level of individual applicants. Our work further contributes back to the matching literature; in particular, we study participant welfare with respect to their \textit{true preferences}, but in markets where matches are made according to \textit{noisy preferences}. In this way, our results shed light on stable matching outcomes under imperfect information.

\begin{figure}
    \begin{center}
    \includegraphics[width=\linewidth]{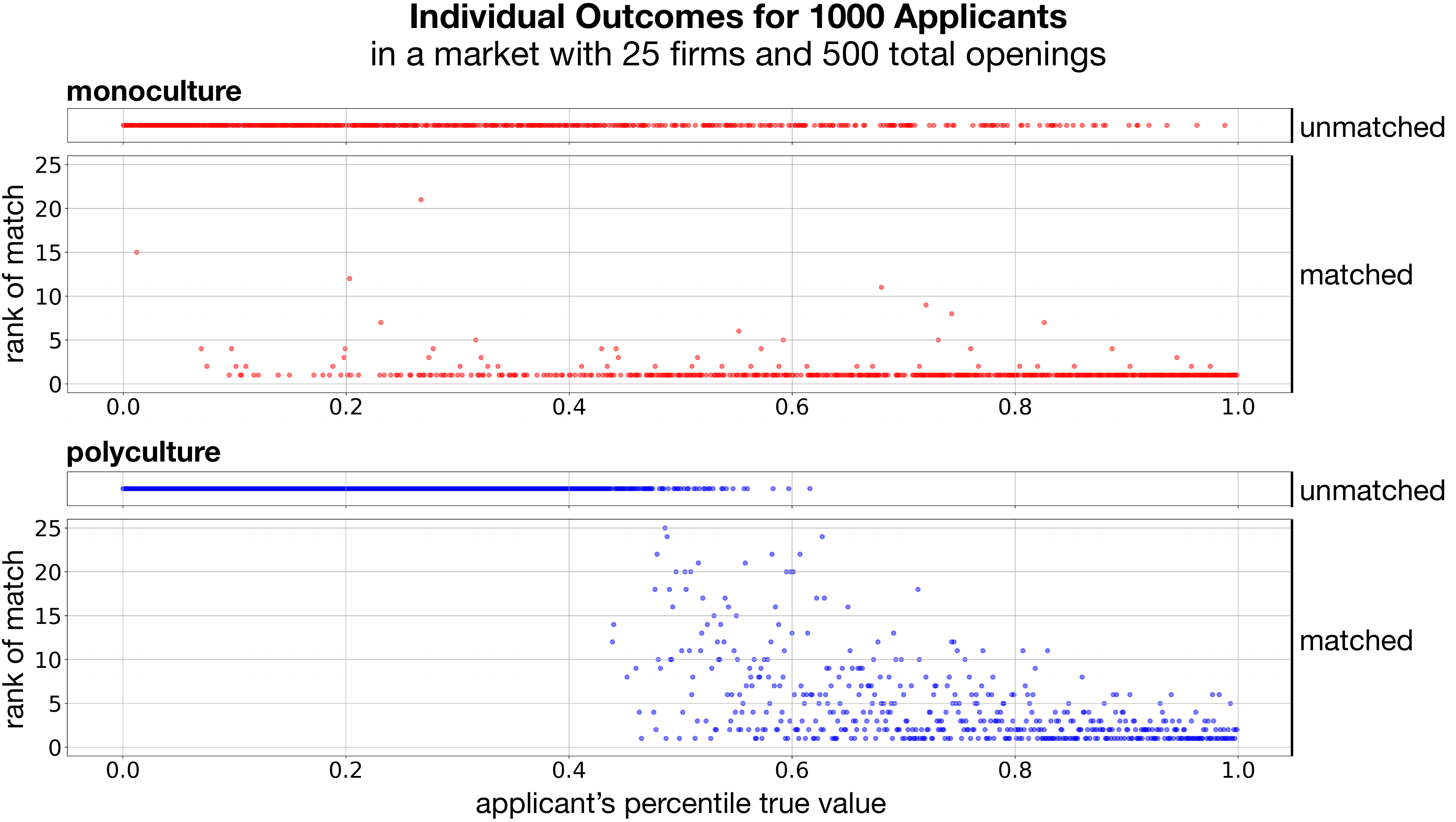}
    \caption[LoF entry]{Under monoculture, firm welfare is lower since there is more noise in who is matched (both low- and high-value applicants are matched), whereas under polyculture, almost exclusively high-value applicants are matched. However, overall applicant welfare is higher under monoculture, since the applicants who are matched tend to be matched with their top choices, whereas under polyculture, applicants match more frequently to lower-ranked choices. Effects further vary conditional on an applicant's true value.}\label{fig:dotsaveragerankmatched}
    \end{center}
\end{figure}

We now provide a more detailed overview of our model, results, and computational experiments.

\subsection{Overview of Model}

\begin{figure}[tbh]
    \begin{center}
        \includegraphics[width=0.9\linewidth]{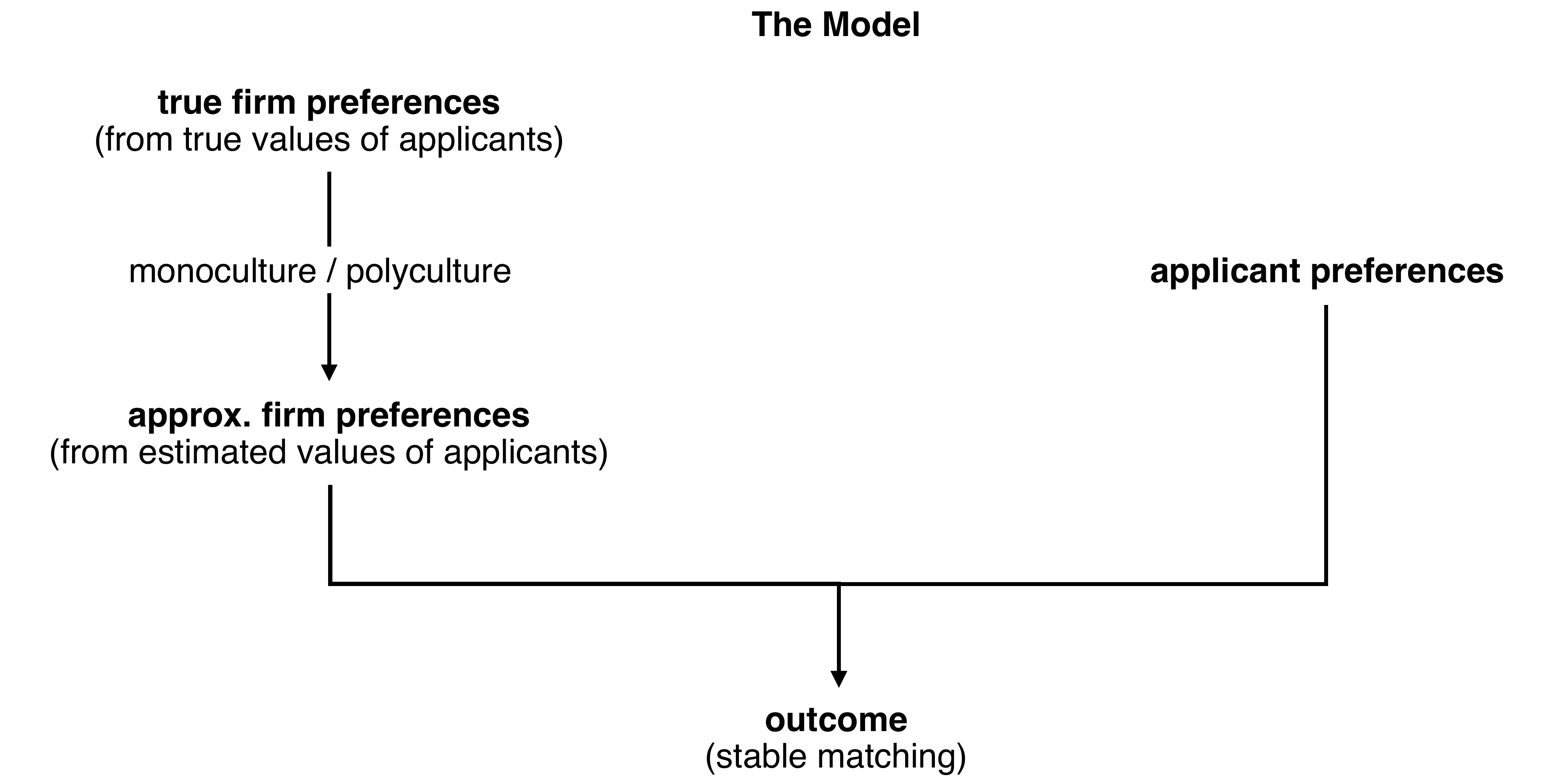}
    \end{center}
    \caption{Overview of our model. True preferences of firms depend on the \textit{value} of each applicant (firms all prefer higher values). However, firms do not know the true value of an applicant, but instead, each obtains a noisy \textit{estimated value}, which determine their approximate preferences. We model the market outcome as a stable matching with respect to the approximate preferences of firms and the true preferences of applicants.}
\end{figure}
We provide an informal overview of our theoretical model. In our model, each applicant has a single \textit{true value}, and firms each prefer higher-value applicants. However, firms do not have access to the true value of each applicant; rather they each obtain an \textit{estimated value} for the applicant. In our model, the estimated value is equal to the true value perturbed by some random noise. Each firm forms an approximation of their preferences according to their estimated values of applicants. Then:
\begin{itemize}
    \item In \textbf{monoculture}, all firms have a shared estimated value for each applicant (obtained, for example, through a shared algorithm).
    \item In \textbf{polyculture}, each firm has independent estimated values for each applicant (obtained, for example, through idiosyncratic evaluation processes).
\end{itemize}
So under monoculture, the firms all have the same approximate preferences over applicants (i.e., ranked list).

Applicants also have preferences over firms, which we assume reflect their true preferences. In our theoretical analysis, applicant preferences are set uniformly at random.\footnote{This is a common assumption to allow for tractability: e.g., \cite{ashlagi2017unbalanced}. In computational experiments, we show that the directional predictions of our results hold given more general (correlated) preferences.} Firms each have the same capacity and the total capacity of firms is less than the total number of applicants, so some applicants go unmatched.

\paragraph{Solution concept: Stable Matchings.} From the preferences of applicants and firms (which for firms are approximate), we study the outcome of the market, which we model as a \textit{stable matching} \citep{gale1962college}. A \textit{matching} assigns a subset of applicants to firms, such that firms are not matched with more applicants than they have the capacity for. Applicants can go unmatched. A matching is \textit{stable} if no pair of applicant and firm would jointly prefer to defect from the current matching to be matched with each other.

Stable matchings have a clean characterization in terms of \textit{cutoffs}, which can be described as follows. Each firm sets a cutoff. An applicant is given an ``offer'' by the firm if their estimated value at the firm exceeds the cutoff. An applicant then ``accepts'' their most preferred offer, and is matched to that firm. If these cutoffs are \textit{market-clearing}---roughly meaning that the capacities of firms are filled but not exceeded---then the associated matching is stable. Moreover, all stable matchings can be characterized by market-clearing cutoffs in this way \citep{azevedo2016supply}.

\paragraph{Formal setup and some key observations.}
In the formal model we consider---which builds on the model of \cite{azevedo2016supply}---there is a continuum of applicants. This means that there is no stochasticity in the model, even when applicants have randomly drawn preferences and estimated values. Given the symmetry of our theoretical setup (applicants have uniformly random preferences over firms), market-clearing cutoffs are the \textit{same} for all firms in the continuum model; each firm has the same ``bar'' for acceptance. This allows for tractable analysis of stable matchings under both monoculture and polyculture. In both cases, an applicant is matched (i.e., extended at least one offer) if and only if they have at least one estimated value that exceeds this common cutoff. In monoculture, each applicant has a single estimated value used by each firm, so they are matched if and only if this single estimated value exceeds the cutoff. Under polyculture, each applicant has a different estimated value used by each firm, so they are matched if and only if their maximum estimated value exceeds the cutoff. These two basic observations form the central pillars of our analysis---and underpin stark contrasts between market outcomes under monoculture and polyculture. %

\paragraph{Where is the model applicable?}
Our model is intended to describe many-to-one matching markets that share a few key features. First, one side of the market has shared preferences over the other side. Our canonical examples are firms or colleges that have shared preferences over applicants. Therefore, our model is likely most directly applicable for firms in the same industry. Meanwhile, we expect that many colleges have similar preferences over students. Considering the effect of monoculture in the presence of more heterogeneous preferences would be interesting future work---one might expect that monoculture would harm welfare in this setting since applicants' idiosyncratic firm-specific strengths would receive less weight.

A second feature of our model is that one side of the market \textit{approximates} their preferences over the other side. In our model, the firm/college side engages in this process. Indeed, in practice, firms and colleges do not know their true preferences (i.e., their preferences given perfect information about applicants), but rather rely on tools such as resumes, interviews, essays, or test estimated values to estimate their preferences. Future work might consider the effect of preference approximation on the other side (or on both sides) of the market.

Finally, our model assumes that both sides of the market prefer to be matched over being unmatched. For example, colleges prefer to fill seats over leaving seats empty. In particular, this means that the number of applicants who are matched remains is fixed in our model. One might consider what would happen when participants have stronger preferences, or consider downstream consequences, such as the effect of matching quality on the number of future available positions.

\subsection{Overview of Theoretical Results and Comparison to Past Work}

We now summarize the main theoretical results we obtain in our model and discuss the intuition behind these results. We show in computational experiments that these results provide accurate predictions of the outcomes of monoculture relative to the outcomes of polyculture in a wide range of simulated matching markets. 

We begin by focusing on two broad questions: how does monoculture compare to polyculture in terms of (1) the welfare of firms, and (2) the welfare of applicants? We then (3) extend our model to consider inequity, in terms of how many applications each applicant is able to submit.

\paragraph{Firm welfare.}
Our first main result (\Cref{thm:wisdom}) shows that under polyculture, the probability an applicant with a given value is matched approaches either $0$ or $1$ as the number of firms grows large. In other words, the set of applicants who are matched is essentially deterministic: those with values above some threshold are almost certainly matched, and those with values below the threshold are almost never matched. Therefore, under polyculture, firms achieve a ``wisdom of the crowds''-like effect, where they collectively match with only the highest-value applicants. This may be surprising given that each firm still only ranks applicants based on an individual noisy ranking of applicants; no information is explicitly shared across firms, and each firm's noise can be arbitrarily high. A consequence of the result is that firm welfare approaches optimal under polyculture when there are many firms, and therefore is higher than under monoculture.

We give some intuition for this result. Under polyculture, we show that whether a student is matched depends on the \textit{maximum} estimated value they obtain across firms---if the applicant has just one estimated value that exceeds the cutoff, they are matched. (Recall that these estimated values are independent noisy perturbations of an applicant's value.) The result then relies on a property of the distribution of an applicant's maximum estimated value. When the noise distribution has a lighter-than-exponential tail (e.g., uniform, Gaussian), this distribution \textit{concentrates}. Then, given two applicants, one with a higher value than the other, the probability that the maximum estimated value of the higher-value applicant exceeds the maximum estimated value of the lower-value applicant approaches $1$ as the number of estimated values (i.e., the number of firms) grows large. In other words, the maximum estimated value is able to effectively distinguish between students of different values, even when a single estimated value cannot. To give a concrete analogy, suppose you were trying to assess the ability of several students on a particular topic and administered a series of tests. Each individual test gives an independent but noisy assessment of a student's competence; therefore, some students with lower ability may outperform students of higher ability on a given test. Clearly, taking the average performance of students would distinguish their relative abilities given enough tests. Here, we use the fact that the \textit{best} performance of each student can also provide an accurate ordering of the students' abilities.

\Cref{thm:wisdom} may be compared to the findings of \cite{kleinberg2021algorithmic}, who consider a similar theoretical setup in a setting with two firms that each select one applicant. They find the same directional impact of polyculture: when two firms independently evaluate applicants, they can obtain better outcomes than when using a shared evaluation---even when this shared evaluation is more accurate than each individual independent evaluation. By considering a large matching market, with many firms and applicants,
we show that the effect of independent evaluations is significantly amplified when there are more firms. \Cref{thm:wisdom} shows that when there are many firms, not only does polyculture outperform monoculture, but polyculture can approximate the \textit{optimal} matching for firms where only the highest value applicants are chosen. Our model also clarifies the reason why polyculture can have this effect: the \textit{maximum} of individually noisy evaluations of an applicant can effectively distinguish between applicants.

\paragraph{Applicant Welfare.}
Our second set of results pertains to the welfare of applicants, which depends on if they (1) are matched at all, and (2) are matched to firms (colleges) that they more prefer. \Cref{thm:top-choice} considers both these desiderata. 

First, we show that all applicants are at least as likely to be matched to their \textit{top choice} under monoculture in comparison to polyculture---and in fact, that all applicants who are matched to some firm under monoculture are matched to their top choice. Intuitively, this is because under monoculture, any applicant who receives a high estimated value at one firm receives the same high estimated value at all firms, so applicants who receive offers tend to receive many offers. Therefore, monoculture maximizes total applicant welfare, as defined by the average rank of matches from the applicant perspective.

For individual applicants, however, monoculture is not always strictly better. For example, while all applicants are at least as likely to be matched to their top choice under monoculture, some applicants are less likely to be matched at all. Applicants above a threshold are almost certainly matched under polyculture; under monoculture, they may have a substantial probability of remaining unmatched. For a large set of applicants, there is a genuine trade-off between monoculture and polyculture, in the sense that under monoculture these applicants are strictly more likely to be matched to their top choice while strictly less likely to be matched overall. Thus, neither monoculture nor polyculture \textit{stochastically dominates} the other, and applicant preference for monoculture or polyculture depends, for example, on their level of risk aversion. \Cref{fig:dotsaveragerankmatched} illustrates these effects, showing average match rank and probability of matching for different applicants.  

The existing literature concerning the effect of algorithmic monoculture on applicant welfare has focused heavily on the concept of \textit{systemic exclusion}, where an applicant is denied opportunity from all firms \citep{creel2022algorithmic}. In the context of algorithmic classifiers, \cite{bommasani2022picking} study this by defining and measuring \textit{outcome homogeneity}. Our results challenge the prevalent intuition that monoculture increases the rate of systemic exclusion, i.e., the probability of an applicant being unmatched. Indeed, in our model, the total number of applicants matched remains the same regardless of whether applicants are evaluated according to monoculture or polyculture. This reflects settings in which the total capacity is constant---when colleges have a fixed number of seats, and when firms are looking to fill a fixed number of positions. Our results instead establish that the distribution of \textit{who} is systematically excluded (unmatched) changes: compared to polyculture, monoculture is more likely to exclude students who would be matched if firms had perfect information. Furthermore, the students who are more likely to be excluded under monoculture are sometimes more likely to match with their \textit{first} preference under monoculture. By taking a market-level view, the relationship between monoculture and applicant welfare becomes hazier. We note that our results do not address the concern that biases (such as racial or socioeconomic biases) can be exacerbated and made widespread by monoculture.

\paragraph{An Extension: Differential Application Access.} We extend our primary setup to include \textit{differential application access}---when different applicants submit a differing number of total applications (e.g., apply to a differing number of colleges). In practice, the number of applications submitted in college admissions varies significantly, and can be correlated with socioeconomic status: for example, in the 2021-2022 cycle, applicants who submitted 15 or more applications on the Common App were 2.5x as likely to have attended a private high school \citep{kim2022firstyear}. 

We are interested in the extent to which submitting additional applications improves the outcomes of an applicant. If submitting additional applications significantly improves outcomes for an applicant, this raises at least two concerns:
\begin{itemize}
    \item Disadvantaged applicants who submit fewer applications---for example, due to the high cost of application\footnote{In the 2022-2023 cycle, college applications in the United States cost up to \$100 \citep{wood2022colleges}.}---obtain worse outcomes than more advantaged counterparts with similar qualifications.
    \item Competitive effects force all students to submit as many applications as possible, even when this does not improve the welfare of applicants overall.
\end{itemize}
Our main result here (\Cref{thm:diff-app-access}) shows that applicants who submit more applications do not benefit under monoculture, but do benefit under polyculture. Intuitively, this is because an applicant's estimated value is the same across firms under monoculture, so submitting applications does not increase the likelihood that an applicant will have a higher estimated value. In fact, in the theoretical setting we study, it suffices for an applicant to apply to only their top choice: if their shared estimated value exceeds the shared cutoff, they will be matched to their most preferred option among those they applied to; otherwise, they will go unmatched regardless of where they applied. This is to be contrasted with polyculture, under which submitting more applications increases the likelihood that an applicant will receive a higher estimated value, since estimated values are drawn independently.

Our results also suggest that differential application access can lower the welfare of firms and colleges, since they are collectively less able to distinguish those who submitted many applications from those with higher values but who submitted fewer applications. An implication is that in order to obtain the benefits of polyculture (better distinguishing students with higher and lower value) it is important to ensure that applicants are on a level playing field---that different applicants submit a similar number of total applications.

Broadly speaking, our results suggest that monoculture is more robust than polyculture with respect to differential application access. This appears true from both sides of the market.

\subsection*{Overview of Computational Experiments}

Our theoretical results are developed in a setting with strong symmetry: applicants have uniformly random preferences over firms. In computational experiments, we test our main theoretical predictions across settings where applicant preferences have varying structure and degree of correlation. We find that the directional predictions of the theoretical results hold across these settings.
\section{Model and Preliminaries}
Our model builds on the model of \cite{azevedo2016supply}, in which there is a continuum of applicants and a finite number of firms (analogously, colleges). An applicant is identified by their true \textit{value}, which is distributed on $\RR$ according to a probability measure $\eta$. Let $\colleges = \{1, 2, \cdots, \numcolleges\}$ be the set of all firms. Firms have total capacity $\totalsupply < 1$ (so there are fewer positions than applicants) and each firm has the same capacity $\frac{\totalsupply}{\numcolleges}.$

An applicant with value $v$ is associated with an independent random variable $\theta(v)$, characterizing their preferences over firms and firms' estimated values for that applicant. The realization of $\theta(v)$ is their \textit{type}, which lies in $\Theta$. $\Theta := \mathcal{R} \times \RR^\numcolleges$ is the set of applicant types, where $\mathcal{R}$ is the set of preference orderings over the firms. If $\theta(v) = \theta = (\succ^\theta, e^\theta),$ then $\succ^\theta$ is the preference ordering of $v$ over firms and $e^\theta_\college$ is the \textit{estimated value} of the applicant at firm $\college$.

In the model, $\theta(v)$ plays a central role. First, it is used to model the process of preference approximation for firms. Firms all prefer applicants with higher true values, but in practice, rank applicants based on the estimated values given by $\theta(v)$. Second, it connects our model with that of \cite{azevedo2016supply}. In particular, the distribution $\eta$ over values in $\RR$ together with an appropriate choice of $\theta(v)$ induces a distribution over applicant types, which, alongside specifications of firm capacities, fully determines the model in \cite{azevedo2016supply}.\footnote{Formally, consider the probability distribution $\eta'$ over $\Theta$ such that for $A\subseteq \Theta$,
\begin{equation}
    \eta'(A) = \int_\RR \Pr[\theta(v)\in A]\,d\eta(v).
\end{equation}}

\paragraph{}
A \textit{matching} is a function $\mu: \Theta \rightarrow \colleges\cup \{\emptyset\}$ that satisfies the capacity constraints
\begin{equation}
    \int_\RR \Pr[\mu(\theta(v)) = \college]\,d\eta(v) = \frac{\totalsupply}{\numcolleges},\qquad \forall \college\in \colleges,
\end{equation}
as well as two additional technical assumptions: that $\mu^{-1}(\college)$ is measurable and that the set $\{\theta:\mu(\theta)\prec^\theta \college\}$ is open.\footnote{We remark that matchings are not typically required to fill capacities (i.e., the equality above is typically an inequality), but in our setting, since there is limited total capacity and applicants all prefer being matched to being unmatched, we are only concerned with matchings that fill capacity.}

Then $\mu(\theta(v))$ is a random variable indicating where an applicant with value $v$ is matched, with $\mu(\theta(v)) = \emptyset$ indicating that the applicant is not matched. When it is clear what $\theta$ is, we will abuse notation to write $\mu(v)$ instead of $\mu(\theta(v))$ to denote the random variable representing where an applicant with value $v$ is matched. Notice that the matching of an applicant depends only on their random type, reflecting the fact that their matching outcome depends directly on the scores they receive at each firm rather than their true value.

A matching is \textit{stable} if there does not exist a pair $\theta\in \Theta, \college\in \colleges$ such that $\college\succ^\theta \mu(\theta)$, and $e^\theta_\college > e^{\theta'}_{\college}$ for some $\theta'\in \mu^{-1}(\college)$. In other words, there is no applicant-firm pair that would jointly prefer to defect from the current matching to be matched with each other.

\subsection{A cutoff characterization of stable matchings}
A benefit of this continuum model is that stable matchings are characterized by a tractable cutoff structure \citep{azevedo2016supply}. Given a vector of \textit{cutoffs} $P = (P_1, \cdots, P_\numcolleges),$ an applicant with type $\theta$ can \textit{afford} firm $\college$ if $e^\theta_\college \ge P_\college$. The applicant's \textit{demand} at $P$ is $D^\theta(P)$, the applicant's most preferred firm (according to $\succ^\theta$) among those they can afford. If they cannot afford any firm, $D^\theta(P)=\emptyset$.
The \textit{aggregate demand} of a firm $\college$ at $P$ is
\begin{equation}
    D^\college(P) := \int_\RR \Pr[D^{\theta(v)}(P) = \college]\,d\eta(v),
\end{equation}
the measure of applicants who demand it.

A vector of cutoffs $P$ is \textit{market clearing} if and only if
\begin{equation}
    D^\college(P) = \frac{\totalsupply}{\numcolleges},\qquad \forall \college\in \numcolleges.\footnote{Again, as in the above footnotes, the standard definition in \cite{azevedo2016supply} would allow for an inequality here if no applicant would prefer to take the empty spots of a firm. This is not possible in our model, so we simplify our definition.}
\end{equation}

The following lemma then demonstrates that stable matchings are in one-to-one correspondence with market-clearing cutoffs.

\begin{lemma}[Supply and Demand Lemma \citep{azevedo2016supply}]\label{lem:supply-demand}
    A matching $\mu$ is stable if and only if there exists a vector of market-clearing cutoffs $P$ such that
    \begin{equation}
        \mu(\theta) = D^\theta(P).
    \end{equation}
\end{lemma}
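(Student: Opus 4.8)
The plan is to establish this as essentially a restatement of the Supply and Demand Lemma of \cite{azevedo2016supply}, adapted to our slightly modified definitions (in which matchings and market-clearing cutoffs must exactly fill capacities rather than merely respect them). Since the underlying result is already proven in \cite{azevedo2016supply}, the work is to check that the modification does not break the correspondence. First I would verify the easy direction: given market-clearing cutoffs $P$, the candidate matching $\mu(\theta) = D^\theta(P)$ is stable. Feasibility (the capacity equations) is immediate from the market-clearing condition $D^\college(P) = \totalsupply/\numcolleges$, and the measurability/openness technical assumptions hold because the demand map $D^\theta(P)$ is determined by threshold comparisons $e^\theta_\college \ge P_\college$. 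For stability, suppose toward contradiction that some pair $(\theta,\college)$ blocks: $\college \succ^\theta \mu(\theta)$ and $e^\theta_\college > e^{\theta'}_\college$ for some $\theta' \in \mu^{-1}(\college)$. Since $\theta'$ demands $\college$, $\theta'$ can afford $\college$, i.e. $e^{\theta'}_\college \ge P_\college$; hence $e^\theta_\college > P_\college$, so $\theta$ can also afford $\college$. But then $\college$ would be among the firms $\theta$ can afford and $\college \succ^\theta \mu(\theta) = D^\theta(P)$ contradicts $D^\theta(P)$ being the $\succ^\theta$-favorite affordable firm.

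For the converse, given a stable matching $\mu$, I would construct cutoffs by setting $P_\college := \inf\{ e^\theta_\college : \theta \in \mu^{-1}(\college)\}$ (the infimum estimated value among applicants matched to $\college$), with the convention that $P_\college = +\infty$-like behavior is handled if $\mu^{-1}(\college)$ is empty — though in our setting capacities are positive and filled, so each $\mu^{-1}(\college)$ has positive measure and the infimum is a real number. I would then argue, using stability, that (i) every applicant matched to $\college$ can afford $\college$ under $P$, (ii) no applicant strictly prefers an affordable firm to their match — otherwise, combined with the fact that firms matched to anyone at all have applicants with estimated values arbitrarily close to (or above) the cutoff, one extracts a blocking pair — so $\mu(\theta) = D^\theta(P)$, and (iii) the capacity equations for $\mu$ then translate into $D^\college(P) = \totalsupply/\numcolleges$, i.e. $P$ is market clearing. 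The openness assumption on $\{\theta : \mu(\theta) \prec^\theta \college\}$ is exactly what rules out pathological boundary cases where an applicant sits exactly at a cutoff on the wrong side.

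The main obstacle is purely bookkeeping around measure-zero boundary sets and the difference between our equality-form definitions and the inequality-form definitions in \cite{azevedo2016supply}: one must confirm that restricting attention to capacity-filling matchings and capacity-filling (rather than capacity-respecting) cutoffs preserves the bijection, using the standing assumptions that $\totalsupply < 1$ and that all applicants prefer being matched to being unmatched, which together force the \cite{azevedo2016supply} inequalities to hold with equality anyway. Rather than reprove the correspondence from scratch, I expect the cleanest write-up is a short argument that our model is a special case of theirs via the induced type distribution $\eta'$ described in the footnote, that every stable matching and every market-clearing cutoff vector in their sense automatically fills capacities in our setting, and hence \Cref{lem:supply-demand} follows directly from their Theorem. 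If a self-contained argument is preferred, the two-direction proof sketched above is routine modulo the measure-theoretic care noted.
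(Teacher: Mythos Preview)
Your proposal is correct, but note that the paper does not actually prove \Cref{lem:supply-demand}: it is stated as a citation of \cite{azevedo2016supply} and used as a black box, with the footnotes explaining why the equality-form definitions coincide with theirs in this setting. Your sketch (invoke the induced type distribution $\eta'$, observe that $\totalsupply<1$ and universal acceptability force the Azevedo--Leshno inequalities to bind, then import their theorem) is exactly the right justification and in fact supplies more detail than the paper gives.
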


The characterization of stable matchings given by the continuum model can be summarized as follows. Applicants each have an estimated value at each firm. Firms each have cutoffs. An applicant is then matched to their most preferred firm among those at which an applicant's estimated value is above the cutoff.

Next, we show how we can instantiate monoculture and polyculture within the model.

\subsection{Instantiating monoculture and polyculture}
We instantiate our model by specifying $\theta(v)$ for monoculture and polyculture. In what follows, we consider a fixed \textit{noise distribution} $\DD$.

\paragraph{Monoculture.} In monoculture, we take the random variable
\begin{equation}
    \theta_{\mono}(v) := (\succ, (v + X, v + X, \cdots, v + X)),
\end{equation}
where $\succ$ is drawn uniformly at random from $\mathcal{R}$ and $X\sim \DD.$ In other words, applicants have preferences distributed uniformly at random, and each applicant has a single estimated value $v+X$ that is shared across firms. This noisy estimated value equals the applicant's value perturbed by noise drawn from $\DD$. 

\paragraph{Polyculture.} In polyculture, we take the random variable
\begin{equation}
    \theta_{\poly}(v) := (\succ, (v + X_1, v + X_2, \cdots, v + X_\numcolleges)),
\end{equation}
where $\succ$ is drawn uniformly at random from $\mathcal{R}$ and $X_1,X_2,\cdots, X_\numcolleges \sim \DD$ are i.i.d.. Again, applicants have preferences distributed uniformly at random, but now each applicant has a distinct, independently drawn estimated value at each firm.

\paragraph{} For expository purposes, we instantiate our model with a fixed noise distribution $\DD$ shared across both monoculture and polyculture. However, as will often be obvious, this assumption is not necessary for many of our results. In some cases, (primarily, in Theorem 2) it does allow us to compare monoculture and polyculture \textit{ceteris paribus}.

\paragraph{A technical assumption on $\eta$ and $\DD$.} We will make a basic assumption on the probability measure $\eta$ and the probability distribution $\DD$. Let $\pi$ be the probability measure associated with $\DD$. Then we assume that both $\eta$ and $\pi$ have \textit{connected support}, where we mean that the smallest closed set of measure $1$ is an interval. Let these respective intervals be $[V_-, V_+]$ and $[X_-, X_+]$. The assumption of connected support means that there are no ``gaps'' in either distribution. This is theoretically convenient, since it ensures that there is a unique stable matching.

\subsection{Preliminary Analysis}

We now present some important preliminary analysis. The following important result implies that under both monoculture and polyculture, stable matchings take a simple form characterized by identical cutoffs at each firm.
\begin{lemma}[Equal Cutoffs Lemma]\label{prop:equal-cutoffs}
    If $\theta\in \{\theta_{\mono}, \theta_{\poly}\},$ then there is a unique vector of market-clearing cutoffs $P$, and
    $P_1 = P_2 = \cdots = P_\numcolleges.$
\end{lemma}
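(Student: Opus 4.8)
The plan is to split the statement into (a) existence and uniqueness of a market-clearing cutoff vector, and (b) the fact that all of its coordinates are equal. Part (a) I would obtain from \Cref{lem:supply-demand} together with the continuum-model results of \cite{azevedo2016supply}: existence of a stable matching, hence of market-clearing cutoffs, is unconditional, and uniqueness is precisely where the connected-support hypotheses on $\eta$ and $\pi$ are used. Intuitively, connected support forbids ``gaps'' that would let a firm move its cutoff across a zero-measure interval without changing any aggregate demand; equivalently, it makes the aggregate demand map strictly monotone in the directions that matter, which is the regularity condition guaranteeing a unique market-clearing $P$.

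For part (b) I would use the permutation symmetry built into both $\theta_{\mono}$ and $\theta_{\poly}$. In each case the preference order $\succ$ is uniform on $\mathcal{R}$, the estimated-value vector is exchangeable across firms ($(v+X,\dots,v+X)$ trivially; $(v+X_1,\dots,v+X_{\numcolleges})$ because the $X_i$ are i.i.d.), and the two are independent; hence the induced type distribution is invariant under the action of any $\rho \in S_{\numcolleges}$ that relabels firms in both the preference order and the estimated-value coordinates. The one computation to carry out is that aggregate demand is \emph{equivariant}: ``afford'' and ``favorite affordable firm'' both commute with relabeling, so $D^{\rho\cdot\theta}(\rho\cdot P) = \rho(D^{\theta}(P))$ pointwise in $\theta$, and integrating against the $\rho$-invariant type distribution gives $D^{g}(\rho\cdot P) = D^{\rho^{-1}(g)}(P)$ for every firm $g$. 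Consequently, if $P$ is market-clearing then so is $\rho\cdot P$ for every $\rho$; uniqueness from part (a) then forces $\rho\cdot P = P$ for all $\rho$, and taking $\rho$ to be the transposition of firms $i$ and $j$ gives $P_i = P_j$.

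I expect the uniqueness step, not the symmetry step, to be the main obstacle. Under monoculture the estimated-value vector lives on the diagonal of $\RR^{\numcolleges}$, so the type distribution is \emph{not} of full support on $\mathcal{R}\times\RR^{\numcolleges}$ and one cannot quote an off-the-shelf ``full support implies unique'' statement; the argument must instead run through the connected support of the law of $v+X$ on $\RR$. A self-contained route that also avoids any circularity between ``equal cutoffs'' and ``uniqueness'' is to argue (b) directly: sort the coordinates of a market-clearing $P$ as $P_1 \le \cdots \le P_{\numcolleges}$; a label-swapping coupling shows firm $f$ is demanded at least as often as firm $f+1$, so market clearing forces $D^{f}(P) = D^{f+1}(P)$; and this difference equals a positive multiple of $\Pr[P_f \le v+X < P_{f+1}]$ under monoculture (with an analogous expression under polyculture, where the coupling must swap estimated values as well as preference positions). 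Since $\sum_g D^g(P) = \Pr[v+X \ge P_1] = \totalsupply < 1$ and each $D^g(P) = \totalsupply/\numcolleges > 0$, all cutoffs lie in the interior of the support of $v+X$, where connected support of $\eta$ and $\pi$ forces every nonempty open subinterval to have positive mass; hence $\Pr[P_f \le v+X < P_{f+1}] = 0$ implies $P_f = P_{f+1}$. Uniqueness is then recovered by reducing to the scalar equation ``demand at $(c,\dots,c)$ equals $\totalsupply/\numcolleges$,'' whose left-hand side is continuous and strictly decreasing on the range where it lies in $(0, 1/\numcolleges)$.
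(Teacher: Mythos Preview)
Your plan is sound and would yield a valid proof, but it takes a genuinely different route from the paper's, and the comparison is worth spelling out.

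Your self-contained route (what you call approach 2) argues that \emph{every} market-clearing $P$ already has equal coordinates: sort, use a label-swapping coupling to get $D^f(P)\ge D^{f+1}(P)$, and then use connected support to force $P_f=P_{f+1}$. Under monoculture the difference is indeed $\tfrac{1}{f}\Pr[P_f\le v+X<P_{f+1}]$ and your argument goes through cleanly. Under polyculture, however, the ``analogous expression'' is messier than you suggest: the coupling still gives $D^f\ge D^{f+1}$, but the difference is not a single probability of a score interval, and showing it is strictly positive when $P_f<P_{f+1}$ requires an extra step (for instance, proceed inductively from $f=1$, where the needed interval of $v$ is $(P_1-X_+,P_1-X_-)\cap(V_-,V_+)$ and is always nonempty once you know $P_1$ lies in the interior of the score support). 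Likewise, your identity $\sum_g D^g(P)=\Pr[v+X\ge P_1]$ is specific to monoculture, so the ``cutoffs in the interior'' claim needs its own justification under polyculture.

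The paper avoids this case analysis entirely by invoking the Lattice Theorem of \cite{azevedo2016supply}: the set of market-clearing cutoffs is a complete lattice under coordinatewise $\sup/\inf$. Combined with your permutation-symmetry observation (if $P$ clears so does $\rho\cdot P$), the greatest and least elements of this lattice are constant vectors $(P_+,\dots,P_+)$ and $(P_-,\dots,P_-)$. Uniqueness then reduces to the scalar claim $P_+=P_-$, proved by noting that $c\mapsto\int\Pr[v+X>c]\,d\eta(v)$ (resp.\ with $X^{(\numcolleges)}$) is strictly decreasing on the relevant range by connected support. This handles monoculture and polyculture uniformly, with no coupling and no induction, and simultaneously delivers both ``equal coordinates'' and uniqueness. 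So your symmetry insight is exactly right; what the paper adds is leveraging the lattice structure to collapse the problem to one dimension before doing any real work.
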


We denote these unique \textit{shared cutoffs} by $P_{\mono}$ and $P_{\poly}$, respectively, with corresponding stable matchings $\mu_{\mono}$ and $\mu_{\poly}$. Recall that under our abuse of notation, $\mu_{\mono}(v) := \mu_{\mono}(\theta_{\mono}(v))$ and $\mu_{\poly}(v) := \mu_{\poly}(\theta_{\poly}(v))$ are the random variables representing where $v$ is matched under monoculture and polyculture.

\paragraph{}
This cutoff structure allows for clean analysis of stable matchings. Before moving to our main results, we make some basic observations that follow. The next proposition follows directly from the Supply and Demand Lemma (\Cref{lem:supply-demand}) and the Equal Cutoffs Lemma (\Cref{prop:equal-cutoffs}).

\begin{proposition}\label{prop:prob-match}
    For all $v \in \RR$,
    \begin{enumerate}
    \item[(i)] the probability an applicant with value $v$ is matched under monoculture is
\begin{equation}\label{eq:match-mono}
    \Pr[\mu_{\mono}(v) \in \colleges] = \Pr[v + X > P_{\mono}],
\end{equation}
where $X\sim \DD,$ and
\item[(ii)] the probability an applicant with value $v$ is matched under polyculture is
\begin{equation}\label{eq:match-poly}
        \Pr[\mu_{\poly}(v) \in \colleges] = \Pr[v + \max_{\college\in \colleges} X_\college > P_{\poly}],
    \end{equation}
    where $X_1,\cdots, X_\numcolleges\simiid \DD.$
\end{enumerate}
\end{proposition}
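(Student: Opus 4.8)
The plan is to read the claim off directly from the two lemmas already in hand, with essentially no additional computation. By the Equal Cutoffs Lemma (\Cref{prop:equal-cutoffs}), for $\theta \in \{\theta_{\mono}, \theta_{\poly}\}$ there is a unique market-clearing cutoff vector, and it has the form $P = (c, c, \ldots, c)$ with $c = P_{\mono}$ or $c = P_{\poly}$ respectively; by the Supply and Demand Lemma (\Cref{lem:supply-demand}) the corresponding unique stable matching satisfies $\mu(\theta) = D^\theta(P)$. Hence an applicant with type $\theta$ is matched to \emph{some} firm exactly when $D^\theta(P) \in \colleges$, i.e.\ exactly when they can afford at least one firm at the cutoff vector $P$. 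So the whole proof reduces to translating ``affords some firm'' into a statement about the noise variables.

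Next I would plug in the explicit forms of the types. Under monoculture, $\theta_{\mono}(v) = (\succ, (v+X, \ldots, v+X))$ with $X \sim \DD$; since the estimated value is identical at every firm and every cutoff equals $P_{\mono}$, the applicant affords firm $\college$ iff $v + X \ge P_{\mono}$, and this is the same condition for every $\college$, so ``affords some firm'' $\iff v + X \ge P_{\mono}$. Under polyculture, $\theta_{\poly}(v) = (\succ, (v+X_1, \ldots, v+X_\numcolleges))$ with $X_1, \ldots, X_\numcolleges \simiid \DD$; here the applicant affords firm $\college$ iff $v + X_\college \ge P_{\poly}$, so ``affords some firm'' $\iff \exists\,\college : v + X_\college \ge P_{\poly} \iff v + \max_{\college \in \colleges} X_\college \ge P_{\poly}$. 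In both cases the preference component $\succ$ of the type plays no role in whether the applicant is matched (it only affects \emph{where}), so the matched-or-not event is a function of the noise alone; taking probabilities over $X$ (resp.\ $X_1, \ldots, X_\numcolleges$) yields the two displayed identities, except with ``$\ge$'' in place of ``$>$''.

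The only point needing care — and the ``main obstacle,'' though it is a mild one — is replacing the weak inequality coming from the definition of \emph{afford} with the strict inequality in the statement. For this I would argue that the boundary events $\{v + X = P_{\mono}\}$ and $\{v + \max_{\college} X_\college = P_{\poly}\}$ have probability zero, so that $\Pr[v + X \ge P_{\mono}] = \Pr[v + X > P_{\mono}]$ and likewise for polyculture; this is where I would invoke the regularity of the model (a positive mass of applicants exactly at the common cutoff would be incompatible with the measurability/openness conditions imposed on matchings and with the uniqueness of market-clearing cutoffs established in \Cref{prop:equal-cutoffs}, and it is ruled out by the connected-support assumption on $\eta$ and $\DD$ in the spirit of \cite{azevedo2016supply}). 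Everything else is an immediate unwinding of definitions together with the two cited lemmas.
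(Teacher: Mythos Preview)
Your proposal is correct and follows exactly the approach the paper indicates: the paper states (without further proof) that the proposition ``follows directly from the Supply and Demand Lemma (\Cref{lem:supply-demand}) and the Equal Cutoffs Lemma (\Cref{prop:equal-cutoffs}),'' which is precisely what you unwind. Your attention to the $\ge$ versus $>$ discrepancy actually goes beyond the paper's treatment, which does not address it at all (though note that connected support alone does not strictly rule out atoms in $\DD$, so your justification there is a bit informal).
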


The above two propositions contrast the structure of stable matchings in the monoculture and polyculture economies. In monoculture, whether an applicant matches to a firm depends on a single noisy estimated value. In polyculture, whether an applicant matches depends on the \textit{maximum} of $\numcolleges$ noisy estimated values.

We can also use \Cref{prop:prob-match} to show that the shared cutoff is lower under monoculture than polyculture, which will be useful in later analysis. This explains the puzzle of why it is not the case that fewer applicants are matched overall under monoculture: when estimated values at firms become more correlated, the cutoff required for acceptance decreases.
\begin{corollary}[Lower Cutoff Under Monoculture]\label{cor:cutoff-inequality}
When $\numcolleges \ge 2,$ the shared cutoff in an economy is lower under monoculture than under polyculture:
    \begin{equation}
        P_{\mono} < P_{\poly}.
    \end{equation}
\end{corollary}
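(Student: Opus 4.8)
The plan is to leverage \Cref{prop:prob-match} together with the fact that both $\mu_{\mono}$ and $\mu_{\poly}$ fill exactly the same total capacity $\totalsupply$. Integrating the per-value match probabilities against $\eta$, the Equal Cutoffs Lemma and the Supply and Demand Lemma give
\begin{equation}
    \int_\RR \Pr[v + X > P_{\mono}]\,d\eta(v) = \totalsupply = \int_\RR \Pr\Bigl[v + \max_{\college\in\colleges} X_\college > P_{\poly}\Bigr]\,d\eta(v),
\end{equation}
where $X, X_1, \dots, X_{\numcolleges} \simiid \DD$. So it suffices to compare the two integrands as functions of the cutoff. Define $g_{\mono}(p) := \int_\RR \Pr[v + X > p]\,d\eta(v)$ and $g_{\poly}(p) := \int_\RR \Pr[v + \max_\college X_\college > p]\,d\eta(v)$. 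I would first argue that both are continuous and strictly decreasing in $p$ on the relevant range (using the connected-support assumption on $\eta$ and $\pi$, which guarantees no flat spots), so each equation $g_{\mono}(p) = \totalsupply$ and $g_{\poly}(p) = \totalsupply$ has a unique solution --- consistent with uniqueness of market-clearing cutoffs.

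The crux is then the pointwise inequality $g_{\mono}(p) < g_{\poly}(p)$ for every $p$ in the interior of the support of estimated values. This follows from a stochastic-dominance comparison at the level of a single applicant: for $\numcolleges \ge 2$, the random variable $\max_{\college\in\colleges} X_\college$ is strictly stochastically larger than a single $X$, i.e. $\Pr[v + \max_\college X_\college > p] \ge \Pr[v + X > p]$ always, with strict inequality whenever $p - v$ lies in the interior $(X_-, X_+)$ of the support of $\DD$ (because there $\Pr[X > p-v] \in (0,1)$, and $\Pr[\max_\college X_\college > p - v] = 1 - (1 - \Pr[X > p-v])^{\numcolleges}$ strictly exceeds $\Pr[X > p-v]$). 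Integrating over $v$ against $\eta$, and noting that the connected support of $\eta$ ensures a positive-measure set of $v$ for which the strict inequality holds at any interior $p$, yields $g_{\mono}(p) < g_{\poly}(p)$ strictly.

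Combining the pieces: we have $g_{\mono}(P_{\mono}) = \totalsupply = g_{\poly}(P_{\poly}) > g_{\mono}(P_{\poly})$ (the last step using the strict pointwise inequality at $p = P_{\poly}$, which needs $P_{\poly}$ to be interior to the relevant support --- something I would justify from $\totalsupply < 1$ and $\totalsupply > 0$, so the cutoff is neither so low that everyone is matched nor so high that no one is). Hence $g_{\mono}(P_{\mono}) > g_{\mono}(P_{\poly})$, and since $g_{\mono}$ is strictly decreasing this forces $P_{\mono} < P_{\poly}$, as claimed.

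The main obstacle I anticipate is the bookkeeping around supports: I need to verify that $P_{\poly}$ (and the range over which I claim strict monotonicity) actually sits in the interior where $\Pr[X > p - v] \in (0,1)$ for a positive-$\eta$-measure set of values $v$, so that the strict inequality is not vacuous. The condition $0 < \totalsupply < 1$ is what rules out the degenerate cases, and the connected-support hypotheses on both $\eta$ and $\DD$ are exactly what prevent flat spots in $g_{\mono}$ and $g_{\poly}$; handling the boundary of the support carefully (e.g. the possibility $X_+ = +\infty$) is the only genuinely delicate point, and everything else is a routine monotonicity-and-dominance argument.
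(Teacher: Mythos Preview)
Your proposal is correct and follows essentially the same approach as the paper: both hinge on the market-clearing identity $g_{\mono}(P_{\mono})=\totalsupply=g_{\poly}(P_{\poly})$ together with the strict stochastic dominance $\Pr[X^{(\numcolleges)}>t]>\Pr[X>t]$ on the interior of the support, applied on the positive-$\eta$-measure interval $(P_{\poly}-X_+,P_{\poly}-X_-)$. The only cosmetic difference is packaging --- the paper argues by contradiction (assume $P_{\mono}\ge P_{\poly}$ and chain $\Pr[v+X>P_{\mono}]\le\Pr[v+X>P_{\poly}]<\Pr[v+X^{(\numcolleges)}>P_{\poly}]$), whereas you phrase it directly via strict monotonicity of $g_{\mono}$; the underlying ingredients and the support bookkeeping you flag are identical.
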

Intuitively, the result holds because under polyculture, applicants have multiple chances at getting an estimated value that is higher than the shared cutoff; so in order for the same number of applicants to be matched under monoculture and polyculture, the cutoff under polyculture must be lower.

\subsection{Maximum-Concentrating Noise}

We now introduce two key definitions that will be useful for reasoning about the noise distribution $\DD$. In particular, we define a particular class of noise distributions that we will focus our later analysis on.

\begin{definition}[Maximum Order Statistic]
    The \textbf{maximum order statistic} of $n$ draws from a distribution $\DD$ is defined as the random variable
    \begin{equation}
        X^{(n)} := \max \{X_1, X_2, \cdots, X_n\}
    \end{equation}
    where $X_1, \cdots, X_n\simiid \DD.$
\end{definition}

The maximum order statistic plays an important role in our analysis, since the probability an applicant with value $v$ is matched under polyculture \eqref{eq:match-poly} can be rewritten as
\begin{equation}\label{eq:match-poly-order}
    \Pr[\mu_{\poly}(v) \in \colleges] = \Pr[v + X^{(\numcolleges)} > P_{\poly}].
\end{equation}

\begin{definition}[Maximum-Concentrating Distribution]\label{def:maximum-concentrating}
    A distribution $\DD$ is \textbf{maximum concentrating} if for all $\epsilon > 0$,
    \begin{equation}
        \lim_{n\rightarrow \infty} \Pr\left[|X^{(n)} - \EE[X^{(n)}]| > \epsilon\right] = 0.
    \end{equation}
\end{definition}
In other words, a distribution is maximum concentrating if its maximum order statistic satisfies a ``weak law of large numbers.''

\begin{example}[The Uniform Distribution is Maximum Concentrating]\label{ex:uniform}
    Intuitively, the uniform distribution $U[0,1]$ is maximum concentrating because the maximum of $n$ draws from $U[0,1]$ is almost certainly near $1$ when $n$ grows large.\footnote{We can make this intuition precise. The maximum order statistic of $n$ draws from $U[0,1]$ is distributed according to the beta distribution $\Beta(n,1)$, which has mean $1 - \frac{n}{n+1}$ and variance $\frac{n}{(n+1)^2(n+2)}.$ Therefore,
    \begin{align}
        \EE[X^{(n)}] &= \frac{n-1}{n}\\
        \Var[X^{(n)}] &= \frac{n}{(n+1)^2(n+2)}.
    \end{align}
    To show that $U[0,1]$ is maximum concentrating, it suffices to apply Chebyshev's inequality:
    \begin{equation}
        \Pr\left[\left|X^{(n)} - \EE[X^{(n)}]\right| > \epsilon \right] < \frac{\Var[X^{(n)}]}{\epsilon} = \frac{n}{\epsilon(n+1)^2(n+2)}.
    \end{equation}
    The result follows since $\lim_{n\rightarrow \infty} \frac{n}{\epsilon(n+1)^2(n+2)} = 0.$}
\end{example}

It is possible to show that all bounded distributions are maximum-concentrating. Being bounded is not, however, a necessary condition---the Gaussian distribution is also maximum-concentrating despite $\EE[X^{(n)}]$ growing arbitrarily large as $n$ increases. Roughly speaking, maximum-concentrating distributions are those with an upper tail that is ``lighter than exponential'' (i.e., has a tail that is lighter than the exponential or Laplace distribution).

In general, it is possible to determine if a distribution is maximum concentrating using the Fisher-Tippett-Gnedenko theorem, a cornerstone of Extreme Value Theory.\footnote{See \cite{haan2006extreme} for a detailed exposition.} As in \Cref{ex:uniform}, it also suffices to show that the variance of a distribution's maximum order statistic converges to $0$ as $n\rightarrow \infty$; the result then follows directly from Chebyshev's inequality.
\section{Main Results: Firm and Applicant Welfare}\label{sec:results}

\subsection{Firm welfare}\label{sec:results-college}
In this section, we focus on the perspective of firms, who would like to admit the most-preferred applicants (i.e., those with the highest true values). To do this, we analyze---under monoculture and polyculture---the distribution of applicants who are matched. Specifically, we determine the probability that an applicant with a given value is matched (this, of course, is also of interest to the applicant). This allows us to find the expected value of matched applicants.

As it turns out, strong behavior emerges under polyculture as the number of firms $\numcolleges$ in the economy grows large. In particular, when the noise distribution $\DD$ is maximum-concentrating (see \Cref{def:maximum-concentrating}), only the highest-value applicants are matched; as a function of an applicant's value, the probability that an applicant is matched approaches a step function when $\numcolleges$ approaches $\infty$. This, in turn, implies that the matching under polyculture approaches optimal.

Define $v_S$ to be the unique value such that $\eta((v_S, \infty)) = S$, i.e., the threshold at which the mass of students with a value above $v_S$ equals the overall firm capacity. Then note that in the firm-optimal matching, only applicants with value greater than $v_S$ are matched. We may now state our first main result. 

\begin{theorem}[Wisdom of the Crowds in Polyculture but not Monoculture]\label{thm:wisdom}
    Suppose $\DD$ is maximum concentrating. Then:
    \begin{enumerate}
        \item[(i)] Under polyculture, the probability an applicant is matched approaches $0$ or $1$ as $\numcolleges\rightarrow \infty$:
            \begin{equation}
                \lim_{m\rightarrow \infty} \Pr[\mu_{\poly}(v)\in \colleges]
                = \begin{cases}
                    0 &\quad v < v_S \\
                    1 &\quad v > v_S
                \end{cases}
                \quad,
            \end{equation}
            and firm welfare approaches optimal.
        \item[(ii)] Under monoculture, the probability an applicant is matched
        \begin{equation}
            \Pr[\mu_{\mono}(v)\in \colleges]
        \end{equation}
        is constant in $\numcolleges$, and firm welfare is constant and suboptimal.
    \end{enumerate}
\end{theorem}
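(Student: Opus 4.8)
The plan is to treat the two parts separately; part (ii) is essentially immediate, while part (i) reduces to controlling the polyculture cutoff as $\numcolleges\to\infty$.

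\textbf{Part (ii).} By \Cref{prop:prob-match}(i), $\Pr[\mu_{\mono}(v)\in\colleges]=\Pr[v+X>P_{\mono}]$ with $X\sim\DD$. Summing the market-clearing conditions over the $\numcolleges$ firms (equivalently: total matched mass equals total capacity $S$, so integrating the per-applicant match probability gives $S$) yields $\int_\RR\Pr[v+X>P_{\mono}]\,d\eta(v)=S$. This equation does not involve $\numcolleges$; since $\DD$ is nondegenerate with connected support $[X_-,X_+]$, the left-hand side is continuous and strictly decreasing in $P_{\mono}$ with range $(0,1)\ni S$, so it pins down a unique $P_{\mono}$ independent of $\numcolleges$, and $\Pr[\mu_{\mono}(v)\in\colleges]$ is therefore constant in $\numcolleges$. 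For suboptimality, let $g(v):=\Pr[v+X>P_{\mono}]$, which is continuous, nondecreasing, and strictly increasing through $(0,1)$ on the nondegenerate interval $(P_{\mono}-X_+,P_{\mono}-X_-)$; the firm-optimal matched set is $\{v>v_S\}$, and $\int g\,d\eta=S=\eta((v_S,\infty))=\int\mathbf 1[v>v_S]\,d\eta$. Writing $\phi:=g-\mathbf 1[\,\cdot\,>v_S]$, we have $\phi\ge0$ on $\{v\le v_S\}$, $\phi\le0$ on $\{v>v_S\}$, and $\int\phi\,d\eta=0$, so $\int v\,\phi\,d\eta\le v_S\int\phi\,d\eta=0$, strictly because $g$ is a genuine soft threshold straddling $v_S$; hence the expected true value of applicants matched under monoculture is strictly below the firm optimum and, being constant in $\numcolleges$, never approaches it.

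\textbf{Part (i).} By \eqref{eq:match-poly-order}, $\Pr[\mu_{\poly}(v)\in\colleges]=\Pr[v+X^{(\numcolleges)}>P_{\poly}]$, where now both $X^{(\numcolleges)}$ and the cutoff $P_{\poly}$ depend on $\numcolleges$; since $\EE[X^{(\numcolleges)}]$ may itself diverge (e.g.\ Gaussian noise) I recenter. Set $a_{\numcolleges}:=\EE[X^{(\numcolleges)}]$, $c_{\numcolleges}:=P_{\poly}-a_{\numcolleges}$, and $h_{\numcolleges}(v):=\Pr[\mu_{\poly}(v)\in\colleges]=\Pr[X^{(\numcolleges)}-a_{\numcolleges}>c_{\numcolleges}-v]$; as in part (ii), market clearing gives $\int_\RR h_{\numcolleges}(v)\,d\eta(v)=S$ for every $\numcolleges$. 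From \Cref{def:maximum-concentrating} I extract the two facts I need: for every fixed $\delta>0$, $\Pr[X^{(\numcolleges)}-a_{\numcolleges}>\delta]\to0$ and $\Pr[X^{(\numcolleges)}-a_{\numcolleges}>-\delta]\to1$ as $\numcolleges\to\infty$ (both follow from $\Pr[|X^{(\numcolleges)}-a_{\numcolleges}|>\delta]\to0$). I then claim $c_{\numcolleges}\to v_S$. Every subsequential limit of $(c_{\numcolleges})$ is finite: if $c_{\numcolleges_k}\to+\infty$ then $h_{\numcolleges_k}(v)\to0$ for every $v$, so dominated convergence forces $\int h_{\numcolleges_k}\,d\eta\to0\ne S$, and likewise $c_{\numcolleges_k}\to-\infty$ forces $\int h_{\numcolleges_k}\,d\eta\to1\ne S$ (using $S<1$). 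For a finite subsequential limit $c_\infty$, the two facts above give $h_{\numcolleges_k}(v)\to1$ for $v>c_\infty$ and $h_{\numcolleges_k}(v)\to0$ for $v<c_\infty$, so dominated convergence gives $S=\int h_{\numcolleges_k}\,d\eta\to\eta((c_\infty,\infty))$; since $v_S$ is the unique solution of $\eta((v,\infty))=S$ (by the connected support of $\eta$), $c_\infty=v_S$. Hence $c_{\numcolleges}\to v_S$, and for fixed $v\ne v_S$: if $v>v_S$ then eventually $c_{\numcolleges}-v<(v_S-v)/2<0$, so $h_{\numcolleges}(v)\ge\Pr[X^{(\numcolleges)}-a_{\numcolleges}>(v_S-v)/2]\to1$, and symmetrically $h_{\numcolleges}(v)\to0$ for $v<v_S$; this is the claimed step-function limit. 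Finally, the expected true value of the matched set under polyculture, $\int v\,h_{\numcolleges}(v)\,d\eta(v)$, converges to $\int v\,\mathbf 1[v>v_S]\,d\eta(v)$ by dominated convergence (using $\int|v|\,d\eta<\infty$), which is exactly the firm-optimal value.

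The main obstacle is the middle step, locating the recentered cutoff $c_{\numcolleges}$: the match probabilities $h_{\numcolleges}$ and the cutoff constrain one another — concentration yields the pointwise limit of $h_{\numcolleges}$ only once we know where $c_{\numcolleges}$ sits, while locating $c_{\numcolleges}$ requires the constraint $\int h_{\numcolleges}\,d\eta=S$, and passing that constraint to the limit itself requires the pointwise limits — and $\EE[X^{(\numcolleges)}]$ may be unbounded, so one cannot argue about $P_{\poly}$ directly. Recentering by $\EE[X^{(\numcolleges)}]$ together with the subsequence / dominated-convergence argument above is what untangles this; the remaining steps are routine bookkeeping.
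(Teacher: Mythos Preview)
Your proof is correct and follows essentially the same route as the paper: both parts hinge on recentering the polyculture cutoff by $\EE[X^{(\numcolleges)}]$ and using the market-clearing identity $\int h_{\numcolleges}\,d\eta=S$ together with concentration to pin the recentered cutoff $P_{\poly}-\EE[X^{(\numcolleges)}]$ at $v_S$, after which the step-function limit is immediate. The only stylistic difference is that the paper's \Cref{lem:potato} carries out this localization via explicit $\epsilon$--$\delta$ bounds, whereas you do it by a subsequential-limit argument with dominated convergence; your treatment of the firm-welfare claims (the rearrangement inequality in part (ii) and the DCT pass in part (i)) is also a bit more explicit than the paper's.
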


\begin{figure}
    \centering
    \subfloat{{\includegraphics[width=0.49\linewidth]{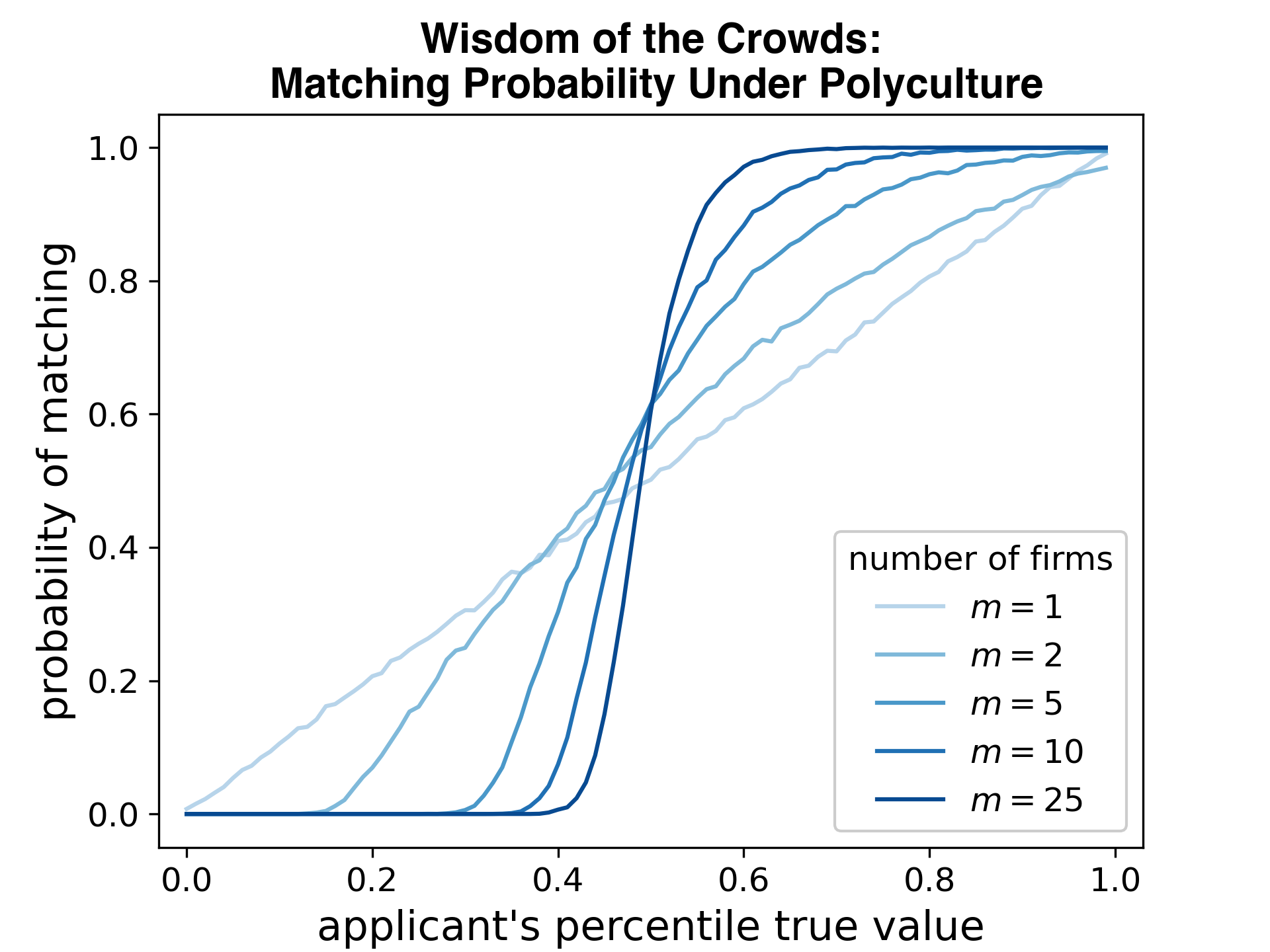}}}
    \subfloat{{\includegraphics[width=0.49\linewidth]{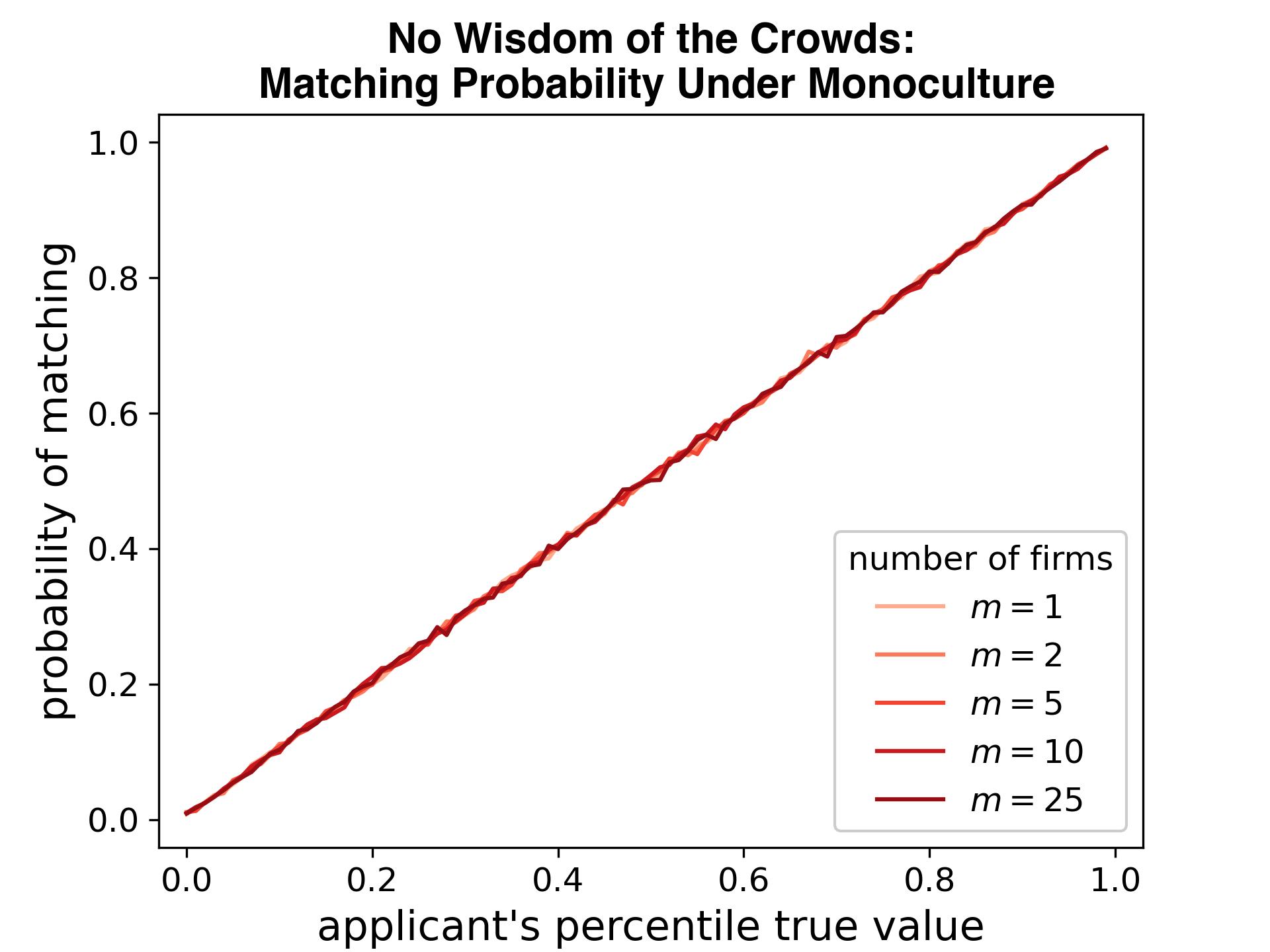}}}
    \caption{As the number of firms $\numcolleges$ increases, $\Pr[\mu_{\poly}(v)\in \colleges]$ approaches a step function, meaning that only the highest-value applicants are matched. Meanwhile, $\Pr[\mu_{\mono}(v)\in \colleges]$ remains the same regardless of the number of firms; in particular, high-value applicants may have positive probability of being unmatched. There are 1000 total applicants and firms have total capacity 500. Applicants have values drawn uniformly from $[0,1]$ and uniformly-random preferences. Noise is drawn uniformly from $[-0.5, 0.5].$ The plot is generated by taking the average over 1000 simulations.}
    \label{fig:wisdom}
\end{figure}

Despite individual firms only having noisy preferences, under polyculture, they collectively admit only the highest-value applicants as $\numcolleges$ grows large (as illustrated in \Cref{fig:wisdom}). This gives a ``wisdom of the crowds''-like result, where individually noisy information is pooled together to provide a more accurate assessment. The remarkable feature of our context is that firms do not directly share information with one another; rather, it is the market at large (specifically, the structure of stable matchings) that facilitates the pooling of information.\footnote{This holds even despite an apparent ``winner's curse'' \citep{capen1971competitive, thaler1988anomalies, cox1984search}: firms who extend an offer to an applicant tend to have an upward biased estimate of the applicant's value.}

Taking a step back, \Cref{thm:wisdom} offers a comparison between monoculture and polyculture. For any maximum-concentrating distribution $\DD=\DD_{\poly}$, as the number of firms grows large, only the highest-value applicants are matched under polyculture. On the other hand, for \textit{any} distribution $\DD=\DD_{\mono}$, there will always be some ``error'' in who is matched under monoculture, and the amount of error does not depend on the number of firms. Thus, from \Cref{thm:wisdom}, we can predict that polyculture---when there are sufficiently many firms---will produce a matching with higher firm welfare than monoculture. Notably, this observation does not depend on the relationship between the accuracy of the noise distributions in monoculture and polyculture; even when a shared algorithm is significantly more accurate than firms' independent evaluations, firms can be better off all using these independent evaluations.

\begin{figure}
    \begin{center}
        \includegraphics[width=0.9\linewidth]{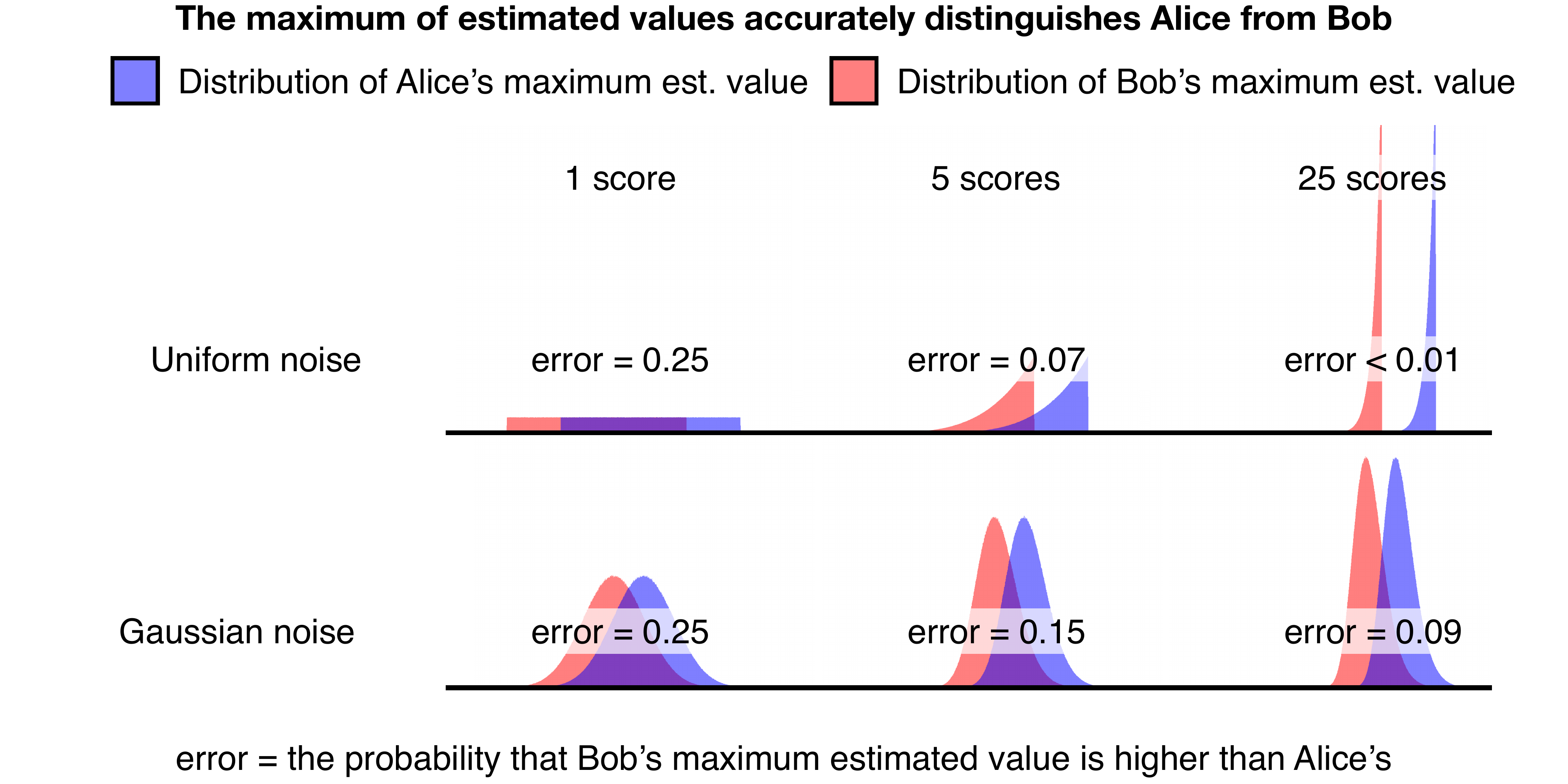}
    \end{center}
    \caption{Suppose Alice's true value is higher than Bob's. Estimated values are noisy, such that given a single estimated value, Bob's estimated value has a 25\% chance of being higher than Alice's. However, when considering several independently drawn estimated values, the probability that Bob's \textit{maximum} estimated value exceeds Alice's vanishes in the number of estimated values drawn. As shown, this effect is more gradual for the Gaussian distribution than the uniform distribution, though both are maximum-concentrating.}
    \label{fig:max-concentrating}
\end{figure}

\paragraph{Intuition for \Cref{thm:wisdom}.}
We now give some intuition for \Cref{thm:wisdom}(i), which at a high level, is that the maximum estimated value---which is what determines if an applicant is matched---is able to effectively ``distinguish'' between applicants of higher and lower values. Consider applicants Alice and Bob. Alice's value $v_A$ is higher than Bob's value $v_B$. We show that if $\DD$ is maximum-concentrating, the probability that Bob is matched but Alice is unmatched vanishes as the number of firms $\numcolleges$ grows large. If Bob is matched but Alice is not, then Bob's maximum estimated value must be higher than Alice's. The probability this occurs is given by
\begin{equation}
    \Pr[v_A + X_A^{(\numcolleges)} < v_B + X_B^{(\numcolleges)}],
\end{equation}
where $X_A^{(\numcolleges)}$ and $X_B^{(\numcolleges)}$ are independent variables distributed according to $X^{(m)}$. This reduces to
\begin{equation}
    \Pr[X_B^{(\numcolleges)} - X_A^{(\numcolleges)} > v_A - v_B].
\end{equation}
Since $\DD$ is maximum concentrating, the probability that two independent draws from $X^{(\numcolleges)}$ differ by more than the constant $v_A - v_B$ vanishes as $\numcolleges$ grows large. Thus, Alice's maximum estimated value will almost always be higher than Bob's. This is illustrated in \Cref{fig:max-concentrating}, which shows the distribution of Alice and Bob's maximum estimated values depending on the number of firms $m$.

\subsection{Applicant welfare}\label{sec:results-student}
We now consider matching from the applicant perspective in greater detail. Our main result (\Cref{thm:top-choice}) shows a few things. First, it shows that all applicants are more likely to be matched to their top choice firm under monoculture than under polyculture (part (i))---and moreover, that, under monoculture, all applicants who are matched to \textit{some} firm are matched to their \textit{top} firm (part (ii)). This implies that total applicant welfare---as measured by the average rank of where applicants are matched---is optimal under monoculture (and better than under polyculture). This is illustrated in \Cref{fig:match-rank}.

These findings have a counterintuitive flavor to it. In particular, it challenges the prevalent intuition that from the perspective of an applicant, monoculture is worrisome due to the increased possibility of being systematically shut out: being rejected from one means being rejected from all. But because the total number of applicants matched remains the same, the proportion of applicants facing this systematic rejection is the same in any case. On the flip side, it is true that an offer from one firm implies an offer from all firms under monoculture, which is not true under polyculture. This is true even though the number of applicants receiving \textit{at least one} offer remains the same. This property implies that under monoculture, more applicants receive their top choice.

Meanwhile, in part (iii), we show that for some applicants, even if they are just as likely or more likely to match to their top choice under monoculture, they are less likely to be matched overall. In fact, there exists a set of applicants of positive measure that is simultaneously more likely to match to their top choice under monoculture while less likely to match overall. For these applicants, neither monoculture or polyculture stochastically dominates the other.

\paragraph{}
Proceeding to the statement of the result, we define $\Rank_{\theta(v)}(\college) := |\{\college':\college'\succeq^{\theta(v)} \college\}|$ to be the rank of an applicant's matched firm. (We let $\Rank_{\theta(v)}(\emptyset) = 0$.) Then $\Rank_{v}(\mu(v))$ is a random variable giving the rank of the (random) firm an applicant of value $v$ is matched to under $\mu$. Also recall that $(X_-, X_+)$ is the support of the noise distribution $\DD$.

\begin{figure}
    \centering
    \includegraphics[width=10cm]{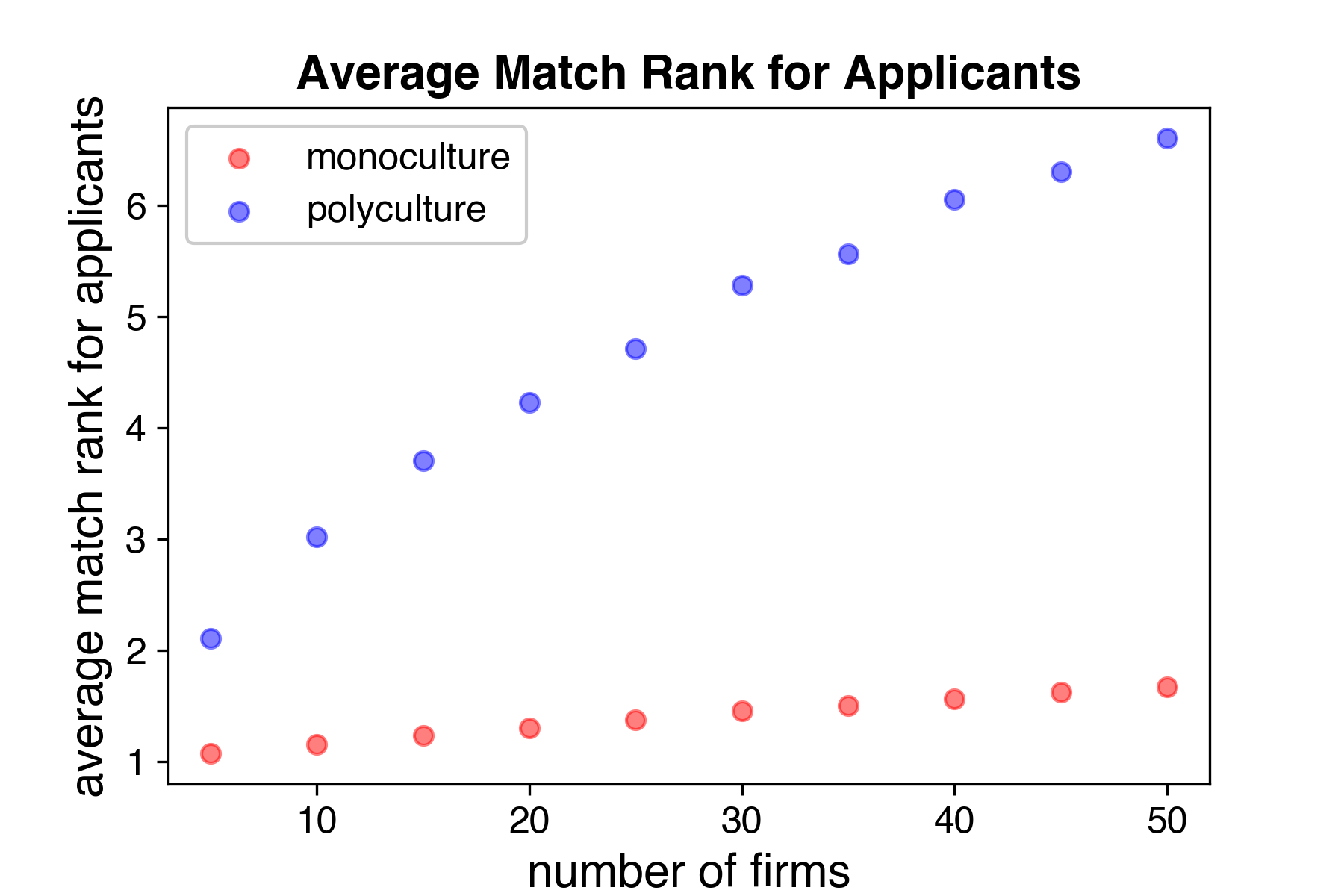}
    \caption{Applicants receive more-preferred options under monoculture. There are 1000 total applicants and firms have total capacity 500. Thus, if there are $\numcolleges$ firms, each has capacity $\frac{500}{\numcolleges}$. Applicants have values drawn uniformly from $[0,1]$ and uniformly-random preferences. Noise is drawn uniformly from $[-0.5, 0.5].$ In simulation, the slight increase in the average rank of matches under monoculture is due to stochasticity in the finite setting; cutoffs are not exactly the same across firms, so a small number of applicants matched to one firm are not matched to their top firm. The plot is generated by taking the average over 100 simulations.}
    \label{fig:match-rank}
\end{figure}

\begin{theorem}[Likelihood of Matching to Top Choice or At All]\label{thm:top-choice}
The following hold:
\begin{enumerate}
    \item[(i)] For all $v$, the probability an applicant matches to their top choice is at least as high under monoculture as under polyculture:
    \begin{equation}\label{eq:prob-first}
        \Pr[\Rank_v(\mu_{\mono}(v)) = 1] \ge \Pr[\Rank_v(\mu_{\poly}(v)) = 1],
    \end{equation}
    and the inequality is strict for all $v\in (P_{\mono} - X_+, P_{\mono} - X_-),$ a set of positive $\eta$-measure.
    \item[(ii)] For all $v$, if $\mu_{\mono}(v)\in \colleges$,
    \begin{equation}
        \Rank_v(\mu_{\mono}(v)) = 1,
    \end{equation}
    meaning that all applicants who match under monoculture are matched to their top choice, and that the matching attains optimal applicant welfare.
    \item[(iii)] 
    When $\DD$ is maximum concentrating, for all $v\in (v_S, P_{\mono} - X_-),$ which is a set of positive $\eta$-measure, and for all $\numcolleges$ sufficiently large, an applicant with value $v$ is less likely to match under monoculture than under polyculture:
    \begin{equation}\label{eq:tiger-2}
        \Pr[\mu_{\mono}(v)\in \colleges] < \Pr[\mu_{\poly}(v)\in \colleges].
    \end{equation}
\end{enumerate}
\end{theorem}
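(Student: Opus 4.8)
\textbf{Proof proposal for \Cref{thm:top-choice}.}

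The plan is to prove the three parts in order, using the cutoff characterization (\Cref{prop:prob-match}) as the main tool throughout. For part (ii), I would start from the Equal Cutoffs Lemma: under monoculture every firm has the same cutoff $P_{\mono}$, and an applicant's estimated value $v+X$ is identical at every firm. So the set of firms the applicant can afford is either all of $\colleges$ (if $v+X > P_{\mono}$) or none of them (if $v + X \le P_{\mono}$). In the former case the applicant's demand $D^\theta(P_{\mono})$ is, by definition, their most preferred firm among \emph{all} firms, i.e.\ their top choice, so $\Rank_v(\mu_{\mono}(v)) = 1$ whenever $\mu_{\mono}(v)\in\colleges$. This is essentially immediate from the structure and also yields the ``optimal applicant welfare'' claim, since no matching can do better than giving every matched applicant their first choice (and the mass of matched applicants is fixed at $S$).

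For part (i), the inequality $\Pr[\Rank_v(\mu_{\mono}(v)) = 1] \ge \Pr[\Rank_v(\mu_{\poly}(v))=1]$ I would obtain by combining part (ii) with \Cref{prop:prob-match}. By part (ii), $\Pr[\Rank_v(\mu_{\mono}(v)) = 1] = \Pr[\mu_{\mono}(v)\in\colleges] = \Pr[v+X > P_{\mono}]$. Under polyculture, an applicant is matched to their top choice only if they can afford their top choice, i.e.\ only if the estimated value at that particular firm exceeds $P_{\poly}$; since the top choice is a uniformly random firm independent of the estimated values, $\Pr[\Rank_v(\mu_{\poly}(v)) = 1] = \Pr[v + X_1 > P_{\poly}]$ for a single $X_1\sim\DD$ (I'd argue this carefully: conditioned on which firm is the top choice, being matched there is exactly the event that that firm is afforded). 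Then the desired inequality reduces to $\Pr[v+X > P_{\mono}] \ge \Pr[v+X > P_{\poly}]$, which follows from $P_{\mono} < P_{\poly}$ (\Cref{cor:cutoff-inequality}) since the CDF is monotone. Strictness on the interval $(P_{\mono}-X_+, P_{\mono}-X_-)$ follows because on that interval $v + X_+ > P_{\mono}$ and $v + X_- < P_{\mono}$, so $\Pr[v+X>P_{\mono}]$ strictly exceeds $\Pr[v+X>P_{\poly}]$ using connected support of $\pi$ (the probability mass strictly between the two cutoffs is positive); that this interval has positive $\eta$-measure requires checking it has nonempty interior and lies in the support of $\eta$, which should follow from the connected-support assumptions and $P_{\mono}\in(V_-+X_-, V_++X_+)$ or similar.

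For part (iii), the goal is to exhibit $v\in(v_S, P_{\mono}-X_-)$ with $\Pr[\mu_{\mono}(v)\in\colleges] < \Pr[\mu_{\poly}(v)\in\colleges]$ for large $\numcolleges$. Here I would invoke \Cref{thm:wisdom}(i): since $\DD$ is maximum concentrating, for any fixed $v > v_S$ we have $\Pr[\mu_{\poly}(v)\in\colleges]\to 1$ as $\numcolleges\to\infty$. On the other hand, for $v < P_{\mono}-X_-$ we have $v + X < v + X_+ \le$ --- wait, rather: $v < P_{\mono} - X_-$ means $v + X_- < P_{\mono}$, so there is positive probability (by connected support of $\pi$) that $v+X \le P_{\mono}$, hence $\Pr[\mu_{\mono}(v)\in\colleges] = \Pr[v+X>P_{\mono}] < 1$ by a strictly positive margin that is \emph{independent of $\numcolleges$} (by part (ii) of \Cref{thm:wisdom}, the monoculture match probability doesn't move with $\numcolleges$). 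So for $\numcolleges$ large enough the polyculture probability exceeds this fixed constant, giving the strict inequality. The last thing to check is that $(v_S, P_{\mono}-X_-)$ is nonempty and has positive $\eta$-measure; I expect this is where a small amount of care is needed --- one must verify $v_S < P_{\mono} - X_-$, i.e.\ that the monoculture cutoff is high enough relative to $v_S$ and the noise support. Intuitively, since under monoculture a positive mass of applicants with values below $v_S$ get matched (they get lucky with noise) and the total matched mass is exactly $S = \eta((v_S,\infty))$, a compensating positive mass of applicants with values above $v_S$ must go unmatched, which forces $P_{\mono}$ strictly above $v_S + X_-$; making this rigorous is the main obstacle, and I would do it by a direct supply-counting argument using \Cref{prop:prob-match}(i) and the definition of $v_S$.
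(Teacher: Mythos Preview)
Your proposal is correct and follows essentially the same route as the paper: part (ii) from the Equal Cutoffs Lemma plus the shared estimate, part (i) by reducing both top-choice probabilities to $\Pr[v+X>P_{\mono}]$ versus $\Pr[v+X>P_{\poly}]$ and invoking \Cref{cor:cutoff-inequality} (with strictness from connected support of $\pi$), and part (iii) by combining \Cref{thm:wisdom}(i) for the polyculture limit with the $\numcolleges$-independence of the monoculture probability. The only cosmetic difference is in the nonemptiness argument for $(v_S, P_{\mono}-X_-)$: the paper observes directly that every $v\ge P_{\mono}-X_-$ is matched with probability $1$, so $\eta([P_{\mono}-X_-,\infty))<S=\eta((v_S,\infty))$, which is a slightly cleaner version of your ``compensating mass'' supply-counting idea.
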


The high-level reasoning behind these results is fairly straightforward. First, in part (i), the probability an applicant is matched to their top choice is exactly the probability that their estimated value at that firm exceeds the shared cutoff: $P_{\mono}$ under monoculture and $P_{\poly}$ under polyculture. We know that $P_{\mono}<P_{\poly}$ (from \Cref{cor:cutoff-inequality}), so this probability is at least as large under monoculture in comparison to under polyculture (and in many cases, strictly larger).

To show part (ii), observe that if an applicant's common estimated value under monoculture exceeds the shared cutoff $P_{\mono}$, that means that the applicant can afford any firm; thus, they are matched to their top choice.

Finally, part (iii) is a consequence of \Cref{thm:wisdom}, where we showed that for applicants with value $v$ above a certain threshold $v_S$, their likelihood of being matched approaches $1$ as $m\rightarrow \infty$, while their likelihood of being matched remains constant (and less than $1$) under monoculture when $v < P_{\mono} - X_-$. (Applicants with true value above $P_{\mono} - X_-$ are guaranteed to be matched under monoculture, and thus do not gain an advantage from polyculture in this respect. Even if they receive the worst possible noise $X_-$, their estimated value still exceeds the threshold $P_{\mono}$.)

We remark that part (i) depend on the noise distribution $\DD$ being the same under monoculture and polyculture. For example, in part (i), consider the extreme case in which there is no noise under polyculture, but substantial noise under monoculture. Then all of the top applicants are guaranteed to be matched to their top choice under polyculture, and may not be matched at all under monoculture. It follows that, for these top applicants, the probability of matching to their top choice is higher under polyculture than monoculture in this setting. Part (iii) holds for distinct distributions $\DD_{\mono}$ and $\DD_{\poly}$ as long as $D_{\poly}$ is maximum-concentrating and $(X_-, X_+)$ is defined to be the support of $\DD_{\mono}.$

\section{An Extension: Differential Application Access}\label{sec:diff-access}

We now consider when different applicants can apply to different numbers of firms (analogously, colleges)---which we refer to as \textit{differential application access}. In particular, we assume some discrete probability distribution $\kappa$ over $\{1, 2, \cdots, \numcolleges\}$ that determines the number of firms an applicant can apply to. In particular, an applicant with value $v$ can apply to $k(v)$ firms, where $k(v)$ is drawn independently from $\kappa$. All other aspects of our model remain the same.

Under both monoculture and polyculture there is a natural Nash equilibrium in which applicants all apply to their top $k$ firms. In this equilibrium, we find that applicants who can apply to more firms benefit under polyculture, producing inequitable outcomes. This is not true under monoculture. Consequently, while firm welfare is not harmed when applicants can apply to a different number of firms under monoculture, it is harmed under polyculture. Together, these results suggest that monoculture is more robust than polyculture with respect to differential application access.

\subsection{Model setup}
We begin by instantiating differential application access within our model.
An applicant $v$ who can apply to $k$ firms must choose an application strategy (i.e., which $k$ firms to apply to). A strategy profile is given by a vector of maps $S = (S_1, S_2, \cdots, S_\numcolleges),$ where $S_k: \RR\times \mathcal{R}\rightarrow \{S\in 2^{\colleges}: |S|=k\}$. Therefore, an applicant who who has value $v$ and a preference ordering $\succ$ and who can apply to $k$ firms applies to the firms in $S_k(v, \succ).$ Therefore, a strategy profile uniquely determines where an applicant with a given value, preference list, and level of application access applies.

We will focus on the strategy profile $S$ such that $S_k(v, \succ)$ gives the $k$ most preferred firms according to $\succ$. Therefore, strategies do not depend at all on $v$. We will later show that this intuitive set of strategies forms a Nash equilibrium. Fixing this strategy profile $S$, we can now define $\theta(v)$ for monoculture and polyculture in the presence of differential application access.

\paragraph{Monoculture with differential application access.} Define
\begin{equation}
    \theta_{\mono, \kappa}(v) = (\succ, e)
\end{equation}
where
\begin{equation}
    e_c = 
    \begin{cases}
        v + X &\quad \college\in S_{k(v)}(v, \succ)\\
        -\infty &\quad \college\notin S_{k(v)}(v, \succ)
    \end{cases},
\end{equation}
for $\succ$ drawn uniformly at random from $\mathcal{R}$, $k(v)$ drawn from $\kappa,$ and $X$ drawn from $\DD$. As before, an applicant with value $v$ is given a random preference ordering over firms. However, now the applicant's estimated values across firms depends also on a random draw from $\kappa$, which determines how many firms they can apply to. Recall that $e$ gives the vector of estimated values of an applicant across firms. Therefore, under monoculture, this estimated value is shared across the firms the applicant applies to (which are the applicant's top-$k(v)$ most preferred firms), but is $-\infty$ at other firms (meaning that the applicant will never be matched to firms they do not apply to).

\paragraph{Polyculture with differential application access.} Define
\begin{equation}
    \theta_{\poly, \kappa}(v) = (\succ, e)
\end{equation}
where
\begin{equation}
    e_\college = 
    \begin{cases}
        v + X_\college &\quad \college\in S_{k(v)}(v, \succ)\\
        -\infty &\quad \college\notin S_{k(v)}(v, \succ)
    \end{cases},
\end{equation}
for $\succ$ drawn uniformly at random from $\mathcal{R}$, $k(v)$ drawn from $\kappa,$ and $X_1,\cdots,X_\numcolleges\simiid \DD.$ Here, an applicant gets an independently drawn estimated value at each firm they apply to.

\paragraph{} We now give the Equal Cutoffs Lemma in this setting, which again, is a consequence of symmetry in applicants' preferences over firms.
\begin{lemma}[Equal Cutoffs Lemma for Differential Application Access]\label{lem:equal-cutoffs-diff-access}
    If $\theta\in \{\theta_{\mono, \kappa}, \theta_{\poly, \kappa}\},$ then there is a unique vector of market-clearing cutoffs $P$, and
    $P_1 = P_2 = \cdots = P_\numcolleges.$
\end{lemma}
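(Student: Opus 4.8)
The plan is to mimic the proof of the original Equal Cutoffs Lemma (\Cref{prop:equal-cutoffs}), adapting it to the setting where applicants are, in addition to a uniformly random preference ordering, endowed with an independently drawn application budget $k(v)\sim\kappa$. The key structural observation is that the distribution of an applicant's type is still symmetric under relabeling of the firms: an applicant with value $v$, preference ordering $\succ$, and budget $k$ applies to the top $k$ firms of $\succ$ and receives estimated values at exactly those firms (a common $v+X$ under monoculture, independent $v+X_\college$ under polyculture), and $-\infty$ elsewhere. Since $\succ$ is uniform over $\mathcal{R}$ and $k(v)$ is drawn independently of $\succ$, for any permutation $\sigma$ of $\colleges$ the induced type distribution is invariant under relabeling firm $\college$ as $\sigma(\college)$. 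I would state this permutation-invariance precisely as the first step.

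Second, I would invoke existence and uniqueness of market-clearing cutoffs. Existence follows from \cite{azevedo2016supply} applied to the induced type distribution $\eta'$; uniqueness follows from the connected-support assumption on $\eta$ and $\DD$ (the same hypothesis used for \Cref{prop:equal-cutoffs}), which rules out ``gaps'' in aggregate demand as a function of the cutoff vector. One subtlety worth checking here: the $-\infty$ entries mean the estimated-value distribution at a given firm is a mixture of a point mass at $-\infty$ (applicants who didn't apply there) and an atomless part; I should confirm this does not break the uniqueness argument — intuitively it does not, since the relevant quantity is still continuous and strictly monotone in the cutoff over the range where demand is positive, but I'd want to state that the argument of \Cref{prop:equal-cutoffs} goes through verbatim given connected support.

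Third, the heart of the argument: given the unique market-clearing cutoff vector $P=(P_1,\dots,P_\numcolleges)$, apply an arbitrary permutation $\sigma$ to obtain $P^\sigma$ with $P^\sigma_\college = P_{\sigma(\college)}$. By the permutation-invariance of the type distribution established in step one, $P^\sigma$ is also market-clearing: $D^\college(P^\sigma) = D^{\sigma^{-1}(\college)}(P) = \totalsupply/\numcolleges$ for every $\college$. By uniqueness, $P^\sigma = P$, so $P_{\sigma(\college)} = P_\college$ for all $\sigma$ and all $\college$, forcing $P_1 = \cdots = P_\numcolleges$.

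The main obstacle — really the only place any care is needed — is verifying that the uniqueness half of the cutoff characterization still holds in the presence of the $-\infty$ estimated values introduced by differential application access, since this is what licenses the symmetrization argument in step three. If uniqueness failed, permutation-invariance would only give that the \emph{set} of market-clearing cutoffs is permutation-closed, not that every such cutoff is symmetric. I expect this to reduce to the same monotonicity-plus-connected-support reasoning as in the original lemma, but it is the step I would write out most carefully; everything else is a routine transcription of the proof of \Cref{prop:equal-cutoffs}.
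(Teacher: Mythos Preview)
Your plan is correct in outline and would succeed, but it is organized differently from the paper's proof, and the difference matters for exactly the obstacle you flag. You propose to establish uniqueness of the market-clearing cutoff vector \emph{first} (for general, possibly asymmetric cutoffs), and then deduce equality of coordinates from permutation-invariance. The paper instead runs the argument in the opposite order: permutation-invariance of the type distribution implies the set $Z$ of market-clearing cutoffs is permutation-closed, so by the Lattice Theorem (\Cref{prop:lattice}) the lattice extremes $\bigvee Z$ and $\bigwedge Z$ are both constant vectors $(P_+,\dots,P_+)$ and $(P_-,\dots,P_-)$; it then shows $P_+=P_-$ by a one-dimensional strict-monotonicity argument on the map $P\mapsto\int\Pr[v+X^{(\kappa)}>P]\,d\eta(v)$, using connected support exactly as in \Cref{prop:equal-cutoffs}. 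This simultaneously yields uniqueness and equal cutoffs, and crucially never requires a general multi-dimensional uniqueness statement for the $-\infty$-augmented score distribution. Your worry that the $-\infty$ atoms might break an off-the-shelf Azevedo--Leshno uniqueness theorem is well placed; the paper simply sidesteps it via the lattice reduction. If you carry out your step two by ``transcribing \Cref{prop:equal-cutoffs} verbatim,'' you will in fact be executing the paper's lattice-based argument, at which point your step three becomes redundant --- so your proposal converges to the paper's proof, just with a slightly misleading narrative decomposition.
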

We denote these unique stable matchings by $\mu_{\mono, \kappa}$ and $\mu_{\poly, \kappa},$ with corresponding \textit{shared cutoffs} $P_{\mono, \kappa}$ and $P_{\poly, \kappa}$. Recall that under our abuse of notation, $\mu_{\mono, \kappa}(v) := \mu_{\mono, \kappa}(\theta_{\mono, \kappa}(v))$ and $\mu_{\poly, \kappa}(v) := \mu_{\poly, \kappa}(\theta_{\poly, \kappa}(v))$ are the random variables representing where $v$ is matched under monoculture and polyculture.

The strategy profile we consider here forms an ex-ante Nash equilibrium, meaning that no applicants---prior to knowing their estimated values at each firm---would apply to a set of firms different from their top $k$.
\begin{proposition}[Nash Equilibrium]\label{prop:nash}
    For any $\kappa$, and under both monoculture and polyculture, no applicant benefits ex-ante by deviating from $S$.
\end{proposition}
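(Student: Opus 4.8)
The plan is to show that for an arbitrary applicant --- one with preference ordering $\succ$ who may apply to $k$ firms --- applying to the top $k$ firms under $\succ$ weakly dominates any other choice of $k$ firms, in expectation over the estimated values (and, under polyculture, over the noise draws), holding fixed the shared cutoff $P$ and the behavior of all other applicants. The key structural fact I would lean on is the Equal Cutoffs Lemma for Differential Application Access (\Cref{lem:equal-cutoffs-diff-access}): every firm sets the \emph{same} cutoff $P$, and --- crucially --- this common cutoff is unaffected by a single applicant's deviation, since the continuum model has no stochasticity and one applicant is a measure-zero perturbation. So the deviation analysis is genuinely a single-agent optimization against a fixed cutoff vector $(P,\dots,P)$.

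Given that, the argument splits cleanly by culture. Under monoculture, an applicant who applies to a set $T$ with $|T|=k$ is matched iff their single shared estimated value $v+X$ exceeds $P$, and in that case they are matched to their most-preferred firm \emph{in $T$}. The event ``matched at all'' is $\{v+X > P\}$ regardless of which $k$ firms are in $T$; conditioned on that event, the rank of the match is minimized (best) exactly when $T$ contains the applicant's top firm, then their second, and so on --- i.e., when $T$ is the top-$k$ prefix. Since being matched to a more-preferred firm is ex-ante at least as good, and the probability of being matched is identical across all $T$, the top-$k$ prefix is a best response. Under polyculture, applying to $T$ means the applicant is matched iff $\max_{\college \in T}(v+X_\college) > P$; since the $X_\college$ are i.i.d., the distribution of this maximum depends only on $|T|=k$, not on \emph{which} firms are in $T$, so again the probability of matching is the same for every $T$ of size $k$. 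Conditioned on being matched, the applicant gets their favorite affordable firm within $T$; a coupling/exchange argument (swap any firm in $T$ not among the top $k$ for the highest-ranked top-$k$ firm not in $T$, re-using the same noise draws) shows the resulting match rank can only improve, so the top-$k$ prefix is again a best response.

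The main obstacle --- and the step I would be most careful about --- is making the polyculture conditional comparison rigorous, because there the \emph{identity} of the firms in $T$ interacts with both affordability and preference rank in a correlated way (the noise at each applied-to firm is its own draw). The clean way to handle this is an explicit coupling: fix the realized noise vector $(X_1,\dots,X_\numcolleges)$, compare the outcome under $T$ versus $T'$ where $T'$ is obtained from $T$ by one prefix-improving swap, and check pointwise that $\Rank$ under $T'$ is $\le \Rank$ under $T$ (treating ``unmatched'', rank $0$ by the paper's convention, as the worst outcome --- here one must double-check the convention so the inequality points the right way, since the paper sets $\Rank(\emptyset)=0$ while lower rank is better, so I would phrase welfare in terms of a utility that is decreasing in rank and $0$ when unmatched, or equivalently argue about the match \emph{quality}). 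Iterating the swap transforms any $T$ into the top-$k$ prefix without ever decreasing utility on any noise realization, hence in expectation. I would also note explicitly that ``ex-ante'' is exactly the right notion here --- the applicant deviates before seeing the $X_\college$ --- which is what lets us average over the noise and makes the $|T|$-only dependence of the matching probability do its work; a purely interim (post-estimate) deviation would be a different and uninteresting statement. The monoculture half needs none of this machinery and can be dispatched in a sentence or two.
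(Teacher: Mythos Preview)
Your approach matches the paper's: both reduce the claim to the Equal Cutoffs Lemma for differential access (\Cref{lem:equal-cutoffs-diff-access}), after which it is a single-agent best-response argument against a fixed common cutoff. The paper's own proof is a single sentence and does not spell out the polyculture step you flag as delicate. One genuine caution on that step: the coupling you describe in your detailed plan --- ``fix the realized noise vector $(X_1,\dots,X_\numcolleges)$'' and compare $T$ to a swapped $T'$ pointwise --- does \emph{not} work as written. If you swap out firm $a$ for a better-ranked firm $b$ while keeping $X_a$ and $X_b$ fixed, then on realizations where $X_a$ is large and $X_b$ is small the swap can make you unmatched, so the pointwise inequality fails. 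The coupling that does work (and that your earlier phrase ``re-using the same noise draws'' seems to intend) is the exchangeable one: draw $Y_1,\dots,Y_k \simiid \DD$ and assign $Y_j$ to the $j$-th most-preferred firm in $T$, for every candidate $T$. Then the index $j^* = \min\{j : v + Y_j > P\}$ of the best affordable slot depends only on the $Y_j$'s and not on $T$, and the resulting match rank is the $\succ$-rank of the $j^*$-th best firm in $T$, which is minimized exactly when $T$ is the top-$k$ prefix. With that correction your plan goes through and is, if anything, more careful than the paper's one-line justification.
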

This is an immediate consequence of \Cref{lem:equal-cutoffs-diff-access}. Since cutoffs at all firms are the same, an applicant is strictly harmed by applying to a set of firms other than their top $k$, since doing so does not increase their chances of being matched to any given firm (or set of firms).

\subsection{Analysis}

\begin{figure}
    \centering
    \includegraphics[width=10cm]{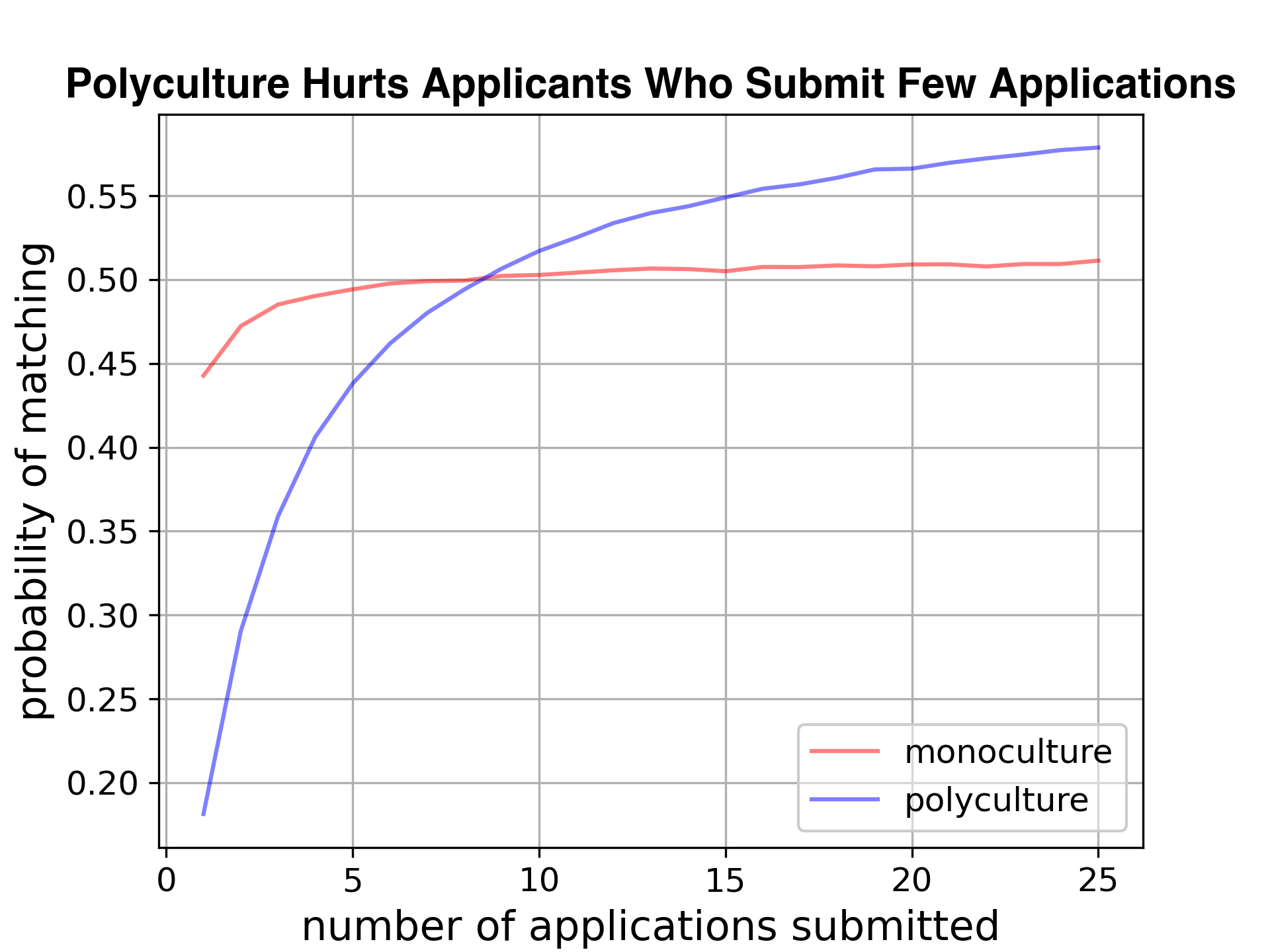}
    \caption{We plot the probability of matching conditional on $k$, the number of applications submitted. There are 1000 applicants and 25 firms have total capacity 500. Applicants have values drawn uniformly from $[0,1]$ and uniformly-random preferences. Noise is drawn uniformly from $[-0.5, 0.5].$ Each applicant can apply to $k$ colleges, drawn uniformly from $\{1,2,\cdots,25\}.$ Probabilities are averages over 10,000 simulations.}
    \label{fig:diff_access_num_apps}
\end{figure}

We now show that while monoculture is robust to differential application access, polyculture is not. Our main result in this section is that under monoculture, applicants who apply to more firms do not gain an advantage, while under polyculture, applicants who apply to more firms are more likely to be matched. Again recall that $(X_-, X_+)$ is the interval on which $\DD$ is supported.

\begin{theorem}[Differential Application Access in Monoculture and Polyculture]\label{thm:diff-app-access}
The following hold:
    \begin{itemize}
        \item[(i)] In monoculture, applicants who apply to more firms do not gain an advantage: For all $v$, 
        \begin{equation}
            \Pr[\mu_{\mono, \kappa}(v)\in \colleges\,|\,k(v)=k] = \Pr[\mu_{\mono}(v)\in \colleges]
        \end{equation} 
        is constant in $k$.
        \item[(ii)] In polyculture, applicants who apply to more firms gain an advantage: For all $v$, 
        \begin{equation}
           \Pr[\mu_{\poly, \kappa}(v)\in \colleges\,|\,k(v)=k] 
        \end{equation}
        is increasing in $k$, and strictly increasing in $k$ for $v\in (P_{\poly, \kappa} - X_+, P_{\poly, \kappa} - X_-).$
    \end{itemize}
\end{theorem}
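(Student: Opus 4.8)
The plan is to reduce both parts to the shared-cutoff structure of \Cref{lem:equal-cutoffs-diff-access} together with an explicit description of when an applicant can afford a firm under differential application access. Fix an applicant of value $v$ with preference ordering $\succ$ who, under the strategy profile $S$, applies to their $k$ most preferred firms, and write $P$ for the relevant shared cutoff. Since this applicant's estimated value is $-\infty$ at the $\numcolleges-k$ firms they do not apply to, those firms are never affordable; the applicant is matched if and only if at least one of their top $k$ firms is affordable (i.e.\ has estimated value $\ge P$), in which case they match to the most preferred affordable firm. So everything comes down to comparing the estimated values at the top $k$ firms against $P$.

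For part (i), under $\theta_{\mono,\kappa}$ the applicant has the single shared estimated value $v+X$ at each of their top $k$ firms, so some top-$k$ firm is affordable if and only if $v+X \ge P_{\mono,\kappa}$; and in that event their overall top choice (always among their top $k$, since $k\ge 1$) is affordable, so they match to it. Hence $\Pr[\mu_{\mono,\kappa}(v)\in\colleges \mid k(v)=k] = \Pr[v+X > P_{\mono,\kappa}]$, which does not involve $k$. It remains to identify $P_{\mono,\kappa}$ with $P_{\mono}$: by the same observation the aggregate demand of any firm $\college$ at a common cutoff $P$ under $\theta_{\mono,\kappa}$ is $\frac{1}{\numcolleges}\int_{\RR}\Pr[v+X\ge P]\,d\eta(v)$ (the applicant demands $\college$ iff $\succ$ ranks $\college$ first and $v+X\ge P$), which is exactly the aggregate demand under plain monoculture; the market-clearing equation is therefore identical, and its solution is unique, so $P_{\mono,\kappa}=P_{\mono}$ and the claimed equality with $\Pr[\mu_{\mono}(v)\in\colleges]$ from \Cref{prop:prob-match} follows.

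For part (ii), under $\theta_{\poly,\kappa}$ the applicant receives i.i.d.\ estimated values $v+X_1,\dots,v+X_k$ at their top $k$ firms, so some top-$k$ firm is affordable if and only if $\max_{i\le k}(v+X_i)\ge P_{\poly,\kappa}$, i.e.\ $v+X^{(k)}\ge P_{\poly,\kappa}$; hence the conditional matching probability is $\Pr[v+X^{(k)}>P_{\poly,\kappa}]$. Coupling via $X^{(k+1)}=\max(X^{(k)},X_{k+1})\ge X^{(k)}$ shows this is non-decreasing in $k$, and the increment from $k$ to $k+1$ equals $\Pr[v+X^{(k)}<P_{\poly,\kappa},\ v+X_{k+1}\ge P_{\poly,\kappa}]$, which factors (by independence of $X^{(k)}$ and $X_{k+1}$) as $\Pr[v+X^{(k)}<P_{\poly,\kappa}]\cdot\Pr[v+X\ge P_{\poly,\kappa}]$ with $X\sim\DD$. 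Since $\DD$ has connected support $[X_-,X_+]$, so does $X^{(k)}$, so the first factor is positive exactly when $v<P_{\poly,\kappa}-X_-$ and the second exactly when $v>P_{\poly,\kappa}-X_+$; thus the increment is strictly positive precisely on $(P_{\poly,\kappa}-X_+,\,P_{\poly,\kappa}-X_-)$, giving strict monotonicity there.

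The computations are short, so the real content lies in two places. The main subtlety is the identity $P_{\mono,\kappa}=P_{\mono}$ in part (i): one must notice that under monoculture, restricting an applicant to their top $k$ firms changes nothing, because whenever the applicant can afford any firm they can afford their top choice, so both the individual demand and the aggregate demand are literally unchanged from the no-restriction case. The only other point needing care is pinning down the exact interval of strict monotonicity in part (ii), which relies on $X^{(k)}$ inheriting connected support $[X_-,X_+]$ from $\DD$ so that $\Pr[v+X^{(k)}<P_{\poly,\kappa}]>0$ holds exactly when $v<P_{\poly,\kappa}-X_-$.
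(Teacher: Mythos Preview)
Your proof is correct and follows essentially the same route as the paper. The paper's argument is terser: for (i) it simply asserts $P_{\mono,\kappa}=P_{\mono}$ (which you justify explicitly via the aggregate-demand computation), and for (ii) it writes the conditional matching probability as $1-F_X^k(P_{\poly,\kappa}-v)$ and reads off monotonicity in $k$ directly from $F_X\in[0,1]$ (strict when $F_X\in(0,1)$), whereas your coupling computation of the increment $F_X^k(1-F_X)$ is the same identity unpacked probabilistically.
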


This result is illustrated in \Cref{fig:diff_access_num_apps}, which shows that while the probability of matching does not depend significantly on $k$ under monoculture, it is increasing in $k$ under polyculture. A consequence of the result is that students with higher values can go unmatched while students with lower values but who submit more applications may be matched. From the firm (or college) perspective, this means that the ability to distinguish higher- and lower-value applicants is diminished. We demonstrate this in our computational experiments (see, e.g., \Cref{fig:exp-diff-access-college-welfare}). This further suggests that the strong result of \Cref{thm:wisdom} is not robust to differential application access, and as a consequence, firms and colleges under polyculture are incentivized to ``level the playing field'' to ensure that applicants each submit a similar number of total applications.

Notice that, as in \Cref{thm:wisdom}, the results here hold independently of the relationship between the noise distributions in monoculture and polyculture.

\section{Computational Experiments: Correlated Preferences}
In this section, we perform computational experiments to test the accuracy of predictions from our theoretical results in more general settings. We focus on our three broad theoretical findings: that all else equal, in comparison to polyculture, monoculture:
\begin{enumerate}
    \item[(1)] selects less-preferred applicants (\Cref{thm:wisdom}),
    \item[(2)] matches more applicants to their top choice, though effects for individual applicants vary depending on their value (\Cref{thm:top-choice}),
    \item[(3)] is more robust to differential application access (\Cref{thm:diff-app-access}).
\end{enumerate}

\paragraph{Correlated applicant preferences.}
We proved our theoretical results in a setting where applicants have uniformly random preferences over firms. We consider computational experiments in which applicants can have correlated preferences over firms. To introduce this correlation, we follow \cite{ashlagi2017unbalanced} in using a random utility model adopted from \cite{hitsch2010matching}.\footnote{In \cite{ashlagi2017unbalanced}, the model is used to generate preferences on both sides of the market; in our setting, we only use the model to generate applicant-side preferences, i.e., how applicants rank firms.} 

The model generates applicant preferences in the following way. Each applicant $i$ has a characteristic $x_i^D$ drawn independently from $U[0,1]$. Each firm has two characteristics, $x_\college^A$ and $x_\college^D,$ both drawn independently from $U[0,1].$ Then the utility applicant $i$ for being matched to firm $\college$ is
\begin{equation}
    u_i(\college) = \beta x_\college^A - \gamma(x_i^D - x_\college^D)^2 + \epsilon_{i\college},
\end{equation}
where $\epsilon_{i\college}$ is drawn independently from the standard logistic distribution. Here, $x_\college^A$ is a vertical measure of firm quality, shared by all applicants, so $\beta$ controls the level of correlation between preferences. $x_i^D$ and $x_\college^D$ are ``locations'' of the applicant and firm, and $\gamma$ controls the preference for an applicant to be ``close'' to the firm. $\epsilon_{i\college}$ accounts for other idiosyncratic factors.

When $\beta=\gamma=0,$ we recover the uniformly random preferences used in our theoretical analysis. When $\beta$ grows large, preferences become fully correlated: applicants all have the same preferences. When $\gamma$ increases, there is more correlation in preferences between ``nearby'' applicants.

\paragraph{Experiment details.}
We now consider a market in which there are 1000 applicants and 10 firms each with capacity 50. (So half of applicants are matched.) We let applicant values be distributed uniformly from $0$ to $1$ (so $\eta$ is the uniform measure on $[0,1]$), and take the noise distribution $\DD$ to be the Gaussian distribution $\mathcal{N}(0,\frac{1}{2})$. We vary $\beta$ and $\gamma$ between $0$ and $20$. The plots we show are all in this setting, and provide averages over 100 random instantiations of the market.

\begin{figure}
    \centering
    \subfloat{{\includegraphics[width=0.49\linewidth]{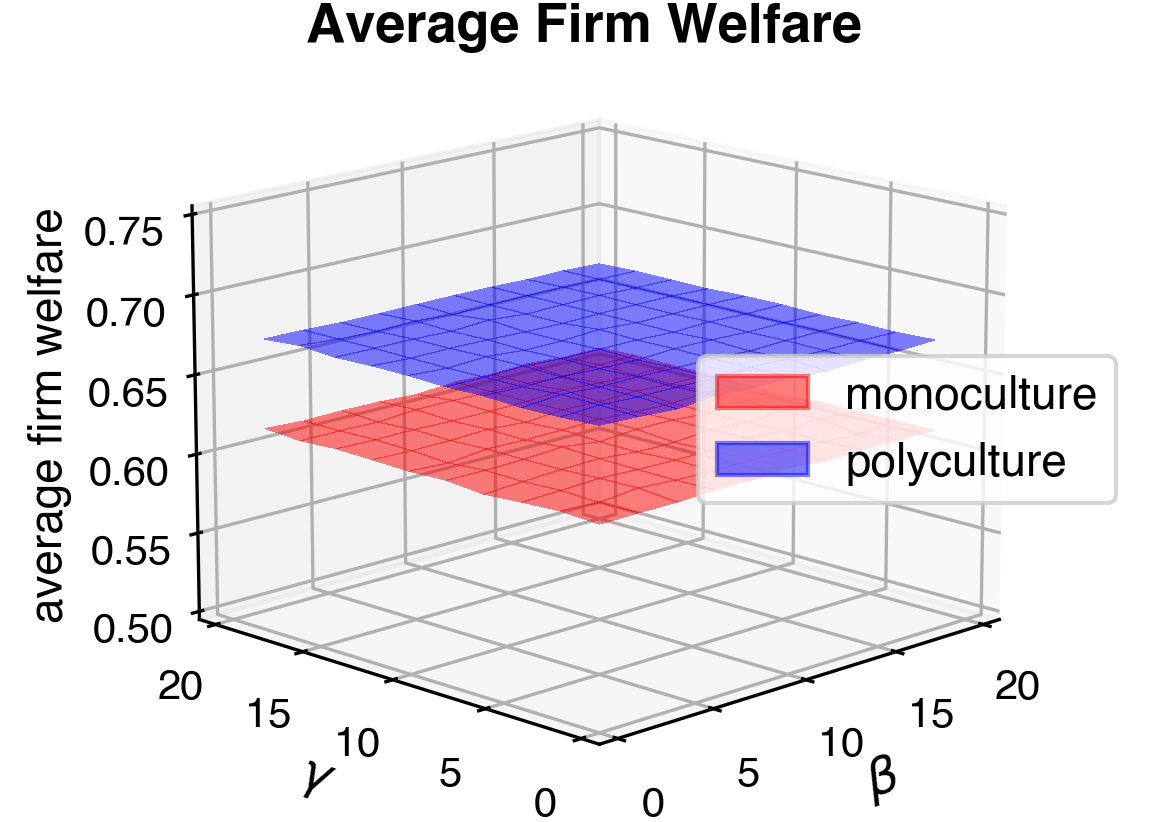}}}
    \subfloat{{\includegraphics[width=0.49\linewidth]{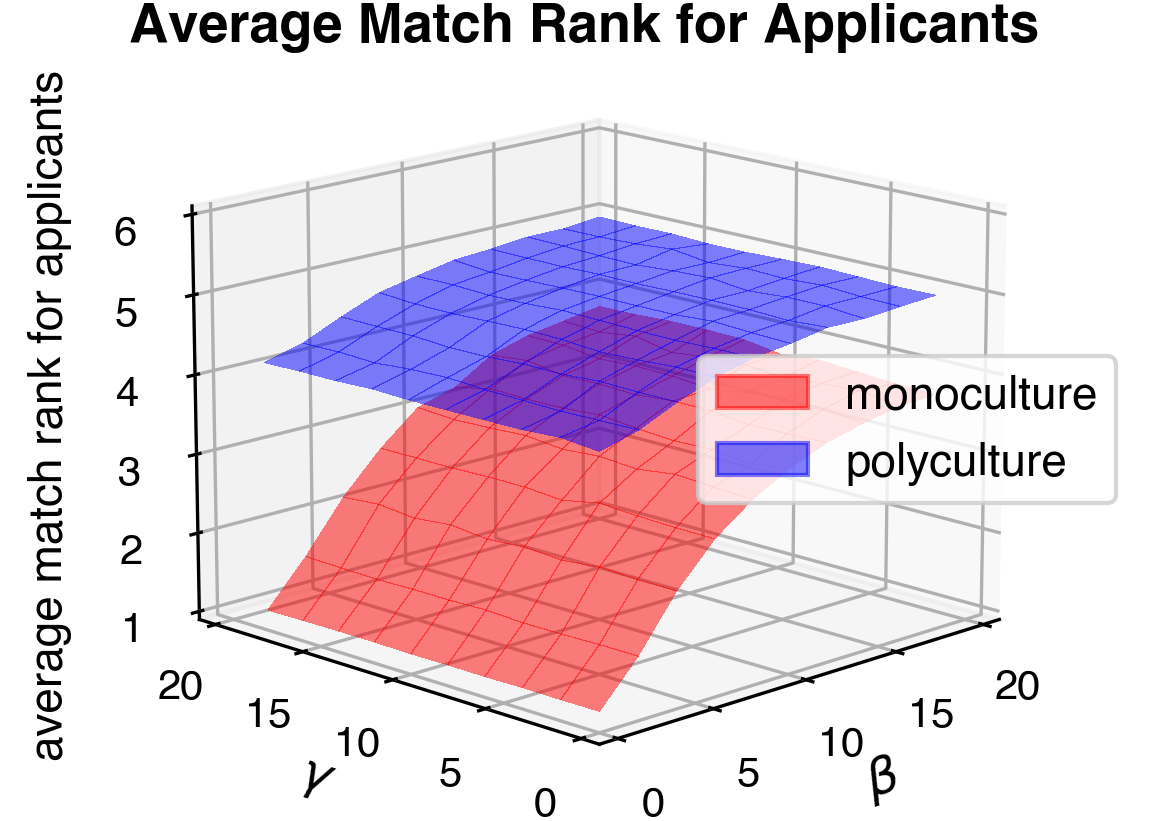}}}
    \caption{The average percentile value of matched applicants (left) and the average rank of firms matched to applicants (right), as a function of $\beta$ and $\gamma$, which control the level of global and local correlation between applicant preferences. (The case $\beta=\gamma=0$ corresponds to our theoretical setup.) For all choices of $\beta$ and $\gamma$ we consider here, average firm welfare is higher under polyculture while average applicant welfare is higher under monoculture (note that lower corresponds to a better outcome in the right plot).}
    \label{fig:exp-firm-applicant-welfare}
\end{figure}

\subsection{Firm welfare}

We test our first prediction: that polyculture selects more-preferred applicants in comparison to monoculture. To do this, we compute the average percentile value of applicants matched to firms, so that a higher average percentile corresponds to higher firm welfare. As shown in \Cref{fig:exp-firm-applicant-welfare} (left), average firm welfare is higher under polyculture across all $\beta$ and $\gamma$ we consider. In fact, our experiments suggest that firm welfare in both monoculture and polyculture remains consistent regardless of the correlation structure we choose.

\subsection{Applicant welfare}

We now test our second prediction: that monoculture yields higher total applicant welfare. Here, we measure total applicant welfare as the average rank of applicant matches---conditional on matching with a firm, what is expected rank of an applicant's match according to their preference list? A lower average rank corresponds to better applicant outcomes, since this means that applicants are on average matched to more-preferred options. As shown in \Cref{fig:exp-firm-applicant-welfare} (right), the average rank of applicant matches is worse under polyculture than monoculture for all $\beta$ and $\gamma$ we consider. Notice that for both monoculture and polyculture, the average rank of applicant matches increases as $\beta$ increases. Intuitively, this is true because the increased correlation between applicant preferences means that it is harder for all applicants to receive their preferred options.

We further test more specific predictions from \Cref{thm:top-choice}, that an applicant is more likely to be matched to their top choice under monoculture (\Cref{thm:top-choice}(i)), and that some applicant's are more likely to be matched overall under polyculture (\Cref{thm:top-choice}(iii)), leading to a non-stochastic-dominance result. Our experiments are plotted in \Cref{fig:exp-prob-match}. We vary the level of global correlation $\beta$ and consider the probability of matching conditional on an applicant's true value. Our experiments confirm the two predictions. Note that as $\beta$ increases, the probability of an applicant matching to their top choice is smaller both for monoculture and polyculture. Since many applicants share the same top choice, few applicants can be matched to their top choice.

\begin{figure}
    \centering
    \subfloat{{\includegraphics[width=0.49\linewidth]{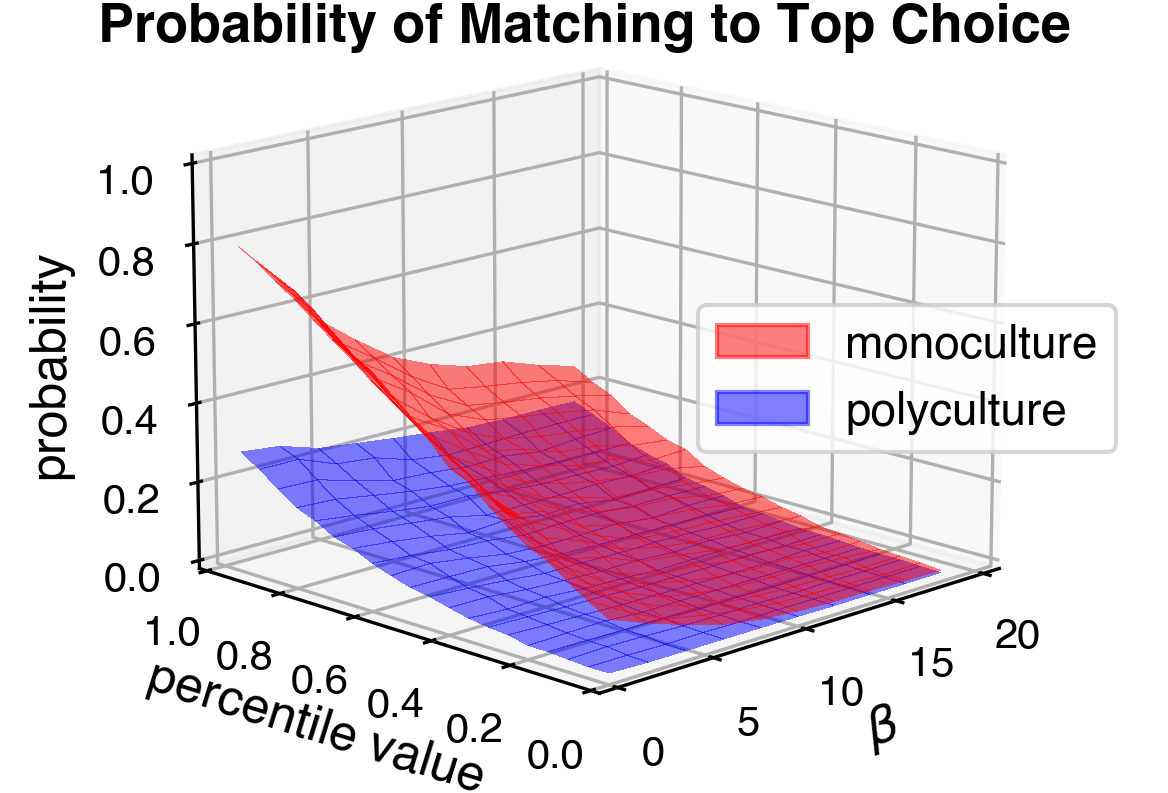}}}
    \subfloat{{\includegraphics[width=0.49\linewidth]{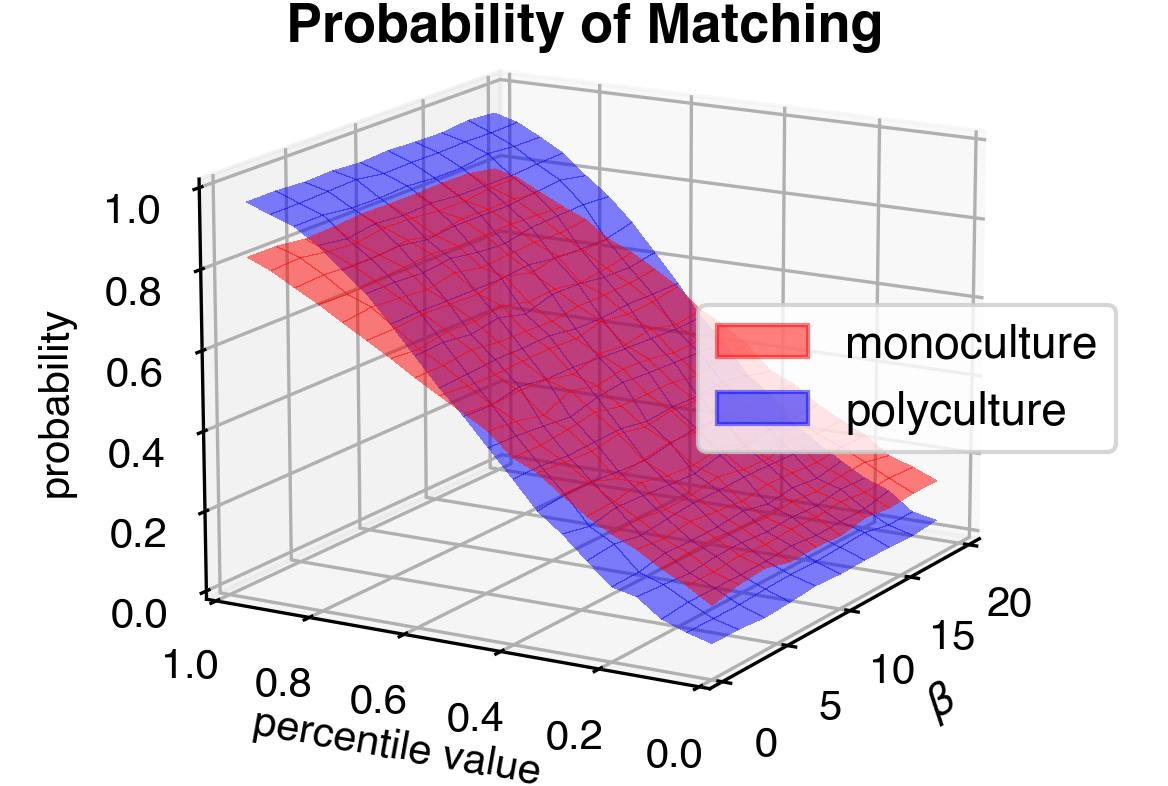}}}
    \caption{The probability an applicant matches to their top choice (right) and to any firm at all (right), as a function of the applicant's percentile true value and $\beta$ the level of global correlation in applicant preferences.}
    \label{fig:exp-prob-match}
\end{figure}

\subsection{Differential application access}

We now test our third main theoretical prediction: that monoculture is more robust to polyculture with respect to variation in the number of applications submitted by different applicants. Recall that in order to study differential access, it is necessary to specify an application strategy for each applicant---for an applicant that can apply to $k$ firms, which $k$ do they select? In our theoretical setup---where there was no correlation in applicant preferences---we benefited from the existence of a simple Nash equilibrium, where applicants all apply to their top $k$ choices. Given arbitrary correlation structures, the characterization of Nash equilibria remains an open question, and is an area of continued inquiry.\footnote{\cite{chade2006simultaneous} and \cite{ali2021college} consider the optimal choice of $k$ firms to apply to when admissions probabilities are independent and correlated, respectively. Neither, however, consider interactions between the strategies of multiple applicants. \cite{haeringer2009constrained} show the existence of Nash equilibria in the deferred acceptance algorithm (in addition to other mechanisms) when applicants can only apply to $k$ firms. However, it does not consider the stochasticity present in our model.}

Therefore, we implement two heuristic strategies: applying to one's $k$ most-preferred firms, and applying to $k$ randomly selected firms (in the order of the applicant's true preferences). The former, as we showed in \Cref{prop:nash}, is a Nash equilibrium when applicant preferences are fully uncorrelated (i.e., when $\beta = \gamma = 0).$ Intuitively, this strategy performs poorly when preferences are correlated: applicants who have low value and who can only apply to one firm should not apply to their most preferred firm, which is likely to be highly competitive. In this case, the second heuristic strategy we propose, applying to randomly selected firms (in the true order of the applicant's preferences) is likely to perform better. Roughly, applying randomly implements the commonly-used ``reach-match-safety'' approach, which has also been theoretically justified by \cite{ali2021college}.

In our experiments, we consider $\kappa$ to be the uniform distribution on $\{1,2,\cdots,10\}$, so applicants are allowed to apply to a random number of firms between $1$ and $10$. In \Cref{fig:exp-diff-access}, we plot the difference in the probability of matching between applicants who can apply to between $6$ and $10$ firms and applicants who can apply to between $1$ and $5$ firms. This is a measure of the benefit accrued by an applicant who applies to more firms. A larger difference is equivalent to less robustness to differential application access, since this implies that applicants who apply to more firms have a larger chance of being matched. As shown in \Cref{fig:exp-diff-access}, this difference is larger under polyculture than under monoculture for all choices of $\beta$ and $\gamma$ we consider. This holds both when applicants apply to their most-preferred firms and random firms.

\begin{figure}
    \centering
    \includegraphics[width=\linewidth]{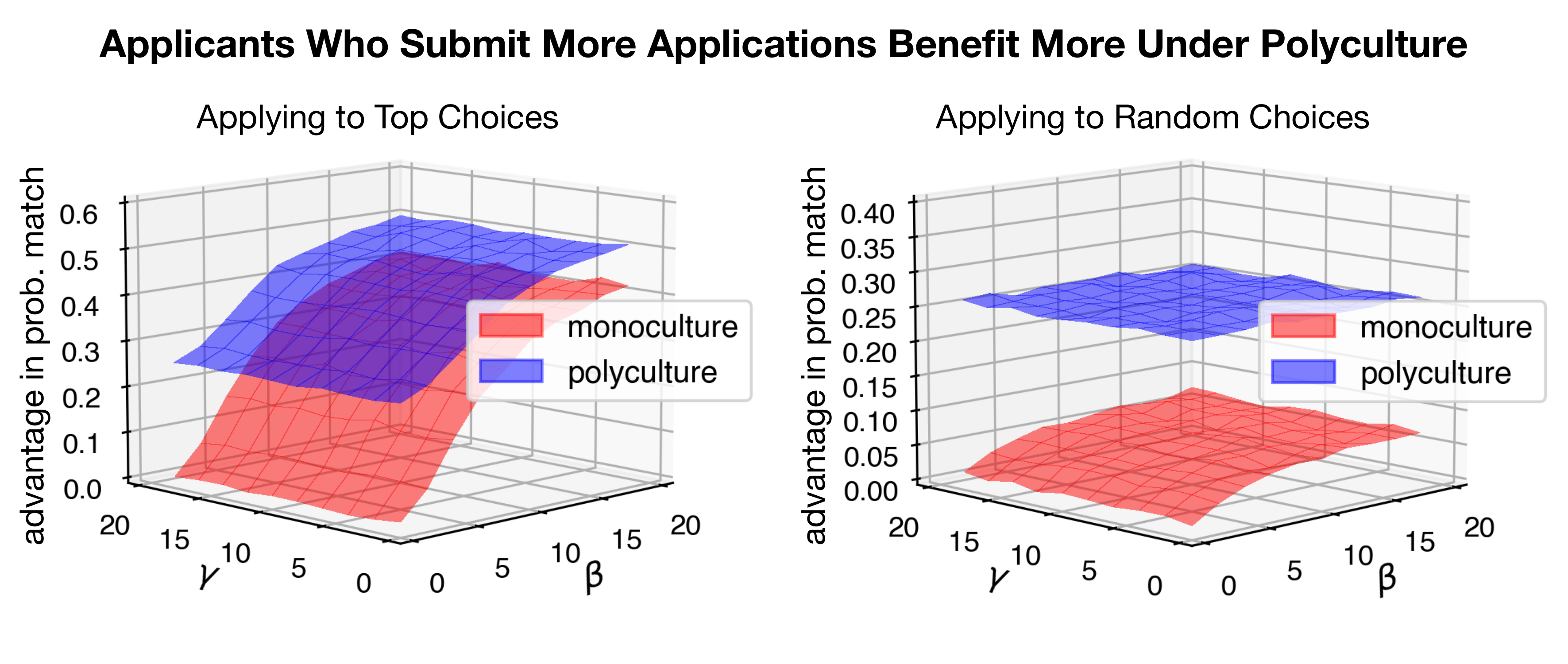}
    \caption{Difference in probability of matching among applicants who can apply to between 6 and 10 firms, and those who can apply to between 1 and 5 firms. This difference is always higher in polyculture than in monoculture across the parameters we consider. This holds both when applicants apply to their top choices (left) and when they apply to random firms in their order of preference (right).}
    \label{fig:exp-diff-access}
\end{figure}

We further test our prediction that firm welfare is affected more by differential application access under polyculture than under monoculture. To do this, we consider the change in firm welfare when moving from ``uniform application access'' where all applicants apply to all firms to differential application access according to $\kappa$. As shown in \Cref{fig:exp-diff-access-college-welfare}, this change is more significant under polyculture than under monoculture, for both application strategies. This supports our prediction.

\begin{figure}
    \includegraphics[width=\linewidth]{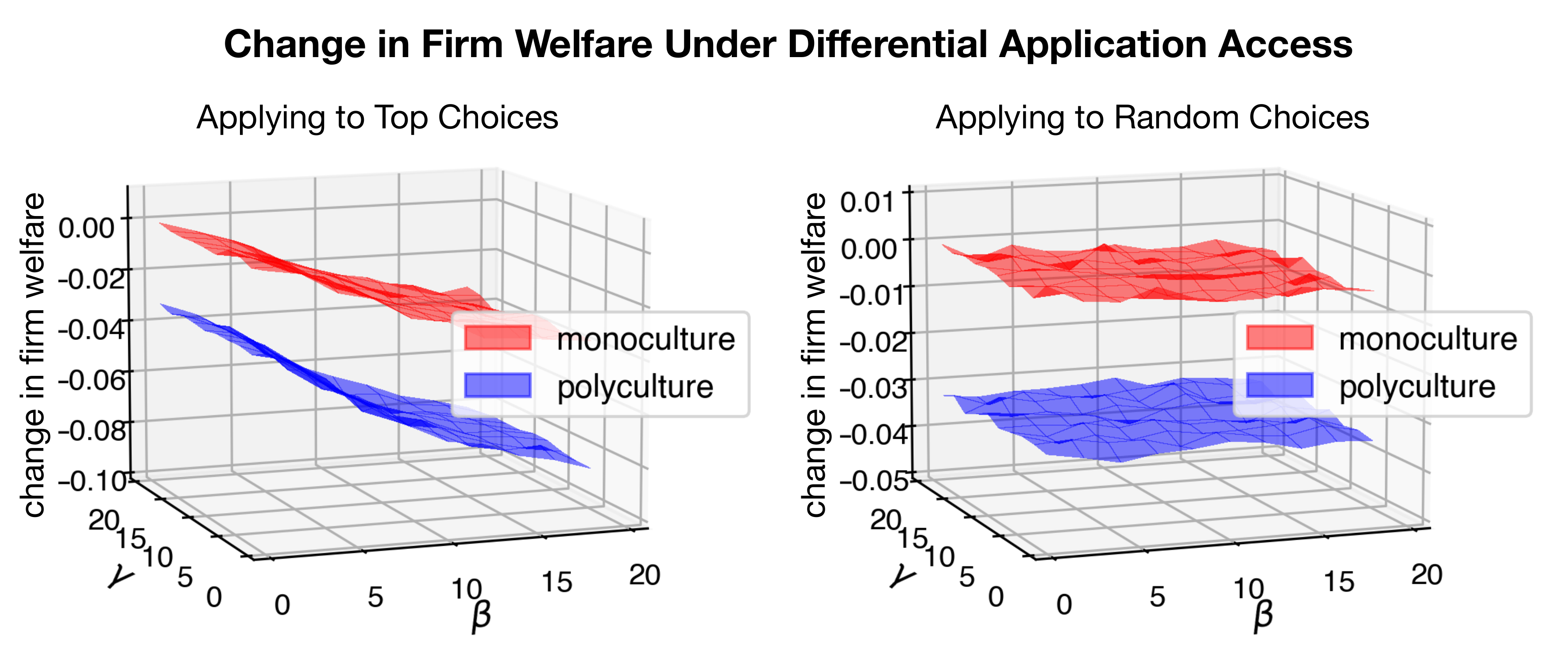}
    \caption{The change in college welfare when moving from (1) a market with no differential application access, to (2) a market with differential application access. Note that college welfare decreases more under polyculture. This holds both when applicants apply to their top choices (left) and when they apply to random firms in their order of preference (right).}
    \label{fig:exp-diff-access-college-welfare}
\end{figure}
\section{Extended Related Work}
\label{sec:extended_related}
Here, we further detail our relationship to the algorithmic monoculture and matching markets literature. We conclude with a discussion of how our work bridges these two lines of work, and reconciles some competing intuitions. 

\paragraph{Algorithmic monoculture.}

Closest to our work is that of \cite{kleinberg2021algorithmic}, who introduce a model of algorithmic monoculture in which there are two decision-makers who each hire one applicant. As in our model, decision-makers have shared true preferences over a set of applicants, instantiated as a ranked list. However, each decision-maker only has access to a noisy version of this ranked list: they each can either use an idiosyncratic noisy list or a common (but more accurate) noisy list. One decision-maker then selects the top applicant on their list. The second decision-maker chooses the top applicant on their list that has not yet been chosen. Welfare is measured by the rank of the chosen applicants according to the true preference list. \cite{kleinberg2021algorithmic} show that the decision-makers can be better off if they both use their idiosyncratic (more noisy) lists in comparison to using the shared list. In our work, we show that this effect becomes more pronounced when considering a large market with many decision-makers and applicants. By building on the model of \cite{azevedo2016supply}, we overcome issues of tractability, while also gaining a cleaner interpretation for why the effect found in \cite{kleinberg2021algorithmic} holds, in terms of distributional order statistics and their effect on admission thresholds for each decision-maker. Moreover, we are able to model two-sided effects, where applicants also have preferences over decision-makers. Absent in our work is the game-theoretic analysis of \cite{kleinberg2021algorithmic}, who show that there are cases where both decision-makers use the shared list in the Nash equilibrium, even when both would be better off using idiosyncratic lists. %

While \cite{kleinberg2021algorithmic} focus on the welfare of decision-makers, \cite{bommasani2022picking} consider the effect of algorithmic monoculture on applicants. In their setup, applicants are subject to binary decisions from several decision-makers. They define the rate of systemic failure to be the probability that a randomly drawn applicant receives a negative decision from all decision-makers. In computational experiments, they provide evidence that when decision-makers train machine learning models using the same data, the rate of systemic failure tends to increase. \cite{toups2023ecosystem} document homogeneous outcomes in several commercial APIs, which they suggest could be a consequence of algorithmic monoculture arising from shared data or model usage. These empirical studies do not consider market effects such as fixed capacity constraints of decision-makers; in the setup we consider, since decision-makers have fixed capacity, the number of applicants who experience systemic failure remains constant (assuming that applicants too only ``accept'' one offer).

Indeed, decision-maker capacity constraints play an important part in our results. \cite{jagadeesan2023improved} consider a different structure, in which several ``model-providers'' compete over users. In their setting, a user (noisily) prefers model-providers that produce a more accurate prediction for the user. A key difference between this model and ours is that the analog to the decision-maker (the model provider) does not have limited capacity; rather, a single model-provider can serve an unlimited number of users. Moreover, the model provider does not seek to evaluate users, but rather to make predictions in their service. \cite{jagadeesan2023improved} shows that as the ability of model-providers to produce more accurate predictions increases overall, the social welfare of users can decrease as a consequence of growing homogeneity in predictions. (Users benefit from diversity since it means that there is likely at least one model provider that is accurate for them.) We also note that in \cite{jagadeesan2023improved}'s model, monoculture arises endogenously, while monoculture is specified exogenously in our model.

An emerging line of work also considers normative aspects of monoculture. \cite{creel2022algorithmic} argue that \textit{arbitrariness} in algorithmic decision-making is not generally of moral concern---however, they argue, arbitrary decision-making \textit{at scale} is of concern, since it can result in \textit{systemic exclusion}: ``Exclusion from a broad swath of opportunity in an important sector of life is likely to be morally problematic.'' Our work challenges the implicit assumption that the widespread adoption of a common algorithm necessarily increases the number of individuals excluded from all opportunities. When the number of ``opportunities'' (positions at a firm, seats at a college) is constant, the number of individuals that receive these opportunities (and who are denied from all opportunities) should remain constant. Of course, not all domains of algorithmic decision-making share this feature. In the domains that do, our work suggests systemic exclusion is not clearly the primary concern, but rather \textit{who} faces systemic exclusion, as well as the other measures of welfare (e.g., worker power). \cite{jain2023algorithmic} argue that algorithmic decision-making should be viewed through the lens of \textit{opportunity pluralism} \citep{fishkin2014bottlenecks}. A key implication of their argument is that the use of algorithmic decision-making by one or multiple decision-makers should be analyzed with respect to its impact on how opportunities are restricted at large. Our work contributes to this type of analysis, helping develop a more precise understanding of the effects of algorithmic decision-making at a system level.

\paragraph{Matching markets.}
The model we introduce here builds on the broader analysis of matching markets. In particular, a substantial literature analyzes the outcomes of random matching markets. Theoretical results have established the expected number of stable matchings (i.e., the size of the core) as well as the distribution of outcomes for each participant (e.g., \cite{pittel1989average, immorlica2003marriage, ashlagi2017unbalanced}). 

We depart from this literature by modeling the process of \textit{preference approximation}. Whereas existing analysis focuses on a single set of stated preferences, we assume that participants (specifically, decision-makers) can only obtain or communicate an \textit{approximation} of their own preferences. As a result, stable matchings are computed with respect to these approximate preferences. We then analyze the matching outcome with respect to the \textit{true} preferences of participants. Our model can thus be viewed as having two components:
\begin{itemize}
    \item First, a \textit{preference approximation} model, in which each participant acquires approximate knowledge of their true preferences over the other side.\footnote{Note that in our model, only the preferences of decision-makers are ``approximated,'' while applicants are assumed to know their own true preferences. This preference approximation further relates to recent literature in modeling the role of standardized testing in admissions, where tests (among other features) help schools estimate their value for each applicant \cite{emelianov2022fair,gargtesting21,liuoptionaltests21,niumultipletests22}. However, studying more than two schools in this framework has proven theoretically challenging. \cite{castera2022statistical} analyzes a related model to understand the effect of statistical discrimination in matching; they too focus on a two school setting. A broader insight of our analysis is that the limiting regime in which the number of schools grows large can be more tractable than the strictly finite setting.}
    \item Second, a \textit{stable matching} model, which takes as input the approximate preferences of participants, and outputs a stable matching with respect to these preferences.
\end{itemize}
For the stable matching model, we adopt a model with a continuum of students \citep{abdulkadirouglu2015expanding, azevedo2016supply}. This allows for an especially tractable analysis. A limitation of this approach is that we do not capture the effects of stochasticity in small markets (particularly, where the capacity of decision-makers is small). In our framework, it is possible to substitute a stable matching model that is more amenable to such analysis, such as the standard finite model or that of \cite{arnosti2022continuum}.

An emerging line of work similarly considers matching markets (both centralized and decentralized) in which participants are unsure of their true preferences or broader information about the market. Existing work primarily focuses on settings where participants gain information sequentially (e.g., \cite{liu2020competing, liu2021bandit, dai2021learning, ionescu2021incentives, jeloudar2021decentralized, immorlica2020information}). Our work focuses instead on a one-shot setting, where we analyze the outcomes of matchings that arise given the current information of each participant. 

Closest to our work is a model introduced by \cite{castera2022statistical} to analyze statistical discrimination\footnote{See \cite{fang2011theories} for a survey of the statistical discrimination literature.} in stable matching. In their model, the values of applicants from two groups are estimated by two decision-makers. Like us, they leverage a continuum model of applicants. Their findings focus on aggregate measures of applicant-side welfare across groups. They show, for example, that the group of applicants evaluated with less correlation is matched with higher probability. More directly related to our work, they show that decreased correlation (even in one group), reduces the number of applicants who receive their top choice; this parallels our finding in \Cref{thm:top-choice}(ii), which implies that fewer applicants are matched to their top choice under polyculture (i.e., when there is total correlation). This finding reflects a general result developed in the school choice literature: that increasing correlation in how decision-makers evaluate applicants also increases the total number of applicants matched to their top choice. We discuss this line of literature next.

\paragraph{Tie-breaking rules in school choice.}
Our work has close connections to the matching markets literature studying tie-breaking rules in the context of school choice, where lottery numbers are often used to determine the school preferences inputted into stable matching mechanisms \citep{abdulkadirouglu2003school}. A significant line of empirical and theoretical work compares the use of \textit{single tie-breaking} (STB) where each student is assigned a single lottery number across all schools, and \textit{multiple tie-breaking} (MTB) where each student is assigned a different random lottery number for each school. 

A main finding across this literature is that STB results in more students being assigned to their top choice school (e.g., \cite{abdulkadirouglu2009strategy, de2023performance, ashlagi2019assigning, arnosti2023lottery, allman2023rank}). This finding is echoed in our \Cref{thm:top-choice}. Indeed, one may directly compare monoculture to STB, since each applicant in our model has a single estimated value used by all decision-makers. From the perspective of population-level statistics of overall student welfare, monoculture and STB are both equivalent to a model in which all firms (schools) have the same preferences over applicants. At a high level, both sets of results rely on a similar intuition: in settings with two-sided preferences, applicants as a whole benefit if \textit{their} preferences most influence matches; this happens when schools share preferences (as induced by either monoculture or STB, but not by polyculture or MTB).

Our setup departs from the tie-breaking literature by having each applicant's estimate depend on the applicant's \textit{true value}. Thus, the ``noise'' in their estimate (the equivalent of a tie-breaking score) is not just differentiating between equivalent applicants; it can move a lower-value applicant over a higher-value one. As a result, our analysis further focuses on comparing monoculture and polyculture from the perspective of an \textit{individual} applicant with a particular value. This also allows us to analyze welfare from the perspective of the firm (who prefer to be matched with applicants of higher true value). Polyculture and MTB also share differences even at the population level; in polyculture, there remains some correlation between ``lottery numbers'' across schools (firms) depending on the applicant's true value.

In addition to showing that STB results in more students being matched to their top choice,\footnote{In fact, \cite{arnosti2023lottery} shows more precise results, characterizing the probability that a student is matched to one of their top $k$ choices.} \cite{arnosti2023lottery} shows that students who list many schools are more likely to be matched under MTB, while students who list few schools are more likely to be matched under STB. This accords with our finding in \Cref{thm:diff-app-access} that applicants who submit more applications gain an advantage under polyculture but not under monoculture. We show that this further implies that (1) students with lower true values but who submit more applications can be matched over students with higher true values but who submit fewer applications, and that (2) differences in the number of applications submitted by students reduces firm welfare under polyculture, but not under monoculture.

We anticipate that the techniques developed in the study of tie-breaking rules may be useful for analyzing outcomes in algorithmic monoculture, especially on the applicant side. The introduction of heterogeneity on the applicant side adds a dimension of analysis beyond what is considered by the existing tie-breaking literature, and which is ripe for further inquiry.

\paragraph{Bridging the literatures.} Our work bridges two lines of work: algorithmic monoculture and matching markets. We briefly elaborate on the nature of this connection. The initial inquiry into algorithmic monoculture has been limited by a lack of market-level analysis. The choice of a firm to adopt a certain hiring algorithm not only affects outcomes at that firm, but also other firms. The nature of this effect is further dictated by the preferences of applicants. We have shown how machinery in the matching markets literature is well-equipped to handle these effects. Adopting this machinery enhances our understanding of algorithmic monoculture. While the algorithmic monoculture literature has largely pointed to negative effects of monoculture, both on the firm and applicant side, we showed that monoculture can have positive welfare effects for applicants---a competing intuition that can be found in the matching markets literature on tie-breaking, which has found that correlation in school-side preferences can benefit students in a school choice setting. In this way, we anticipate that the matching markets literature can provide further insight into the emerging study of algorithmic monoculture.

At the same time, our work expands the scope of questions that can be analyzed using a matching markets framework. While existing work in matching markets largely takes preferences at face value, we further model a stage of \textit{preference approximation}, in which participants form noisy approximations of their true preferences. In this work, we focused on correlation in this approximation process, but the question can be posed more generally. Our results suggest that the structure of the approximation process can have intriguing consequences on the quality of the resulting matching.

\section{Conclusion}

In this paper, we studied the effects of algorithmic monoculture by employing a matching markets model. This allowed us to capture a key feature of domains that may be affected by monoculture: competition between and the limited capacity of decision-makers (firms, colleges). By incorporating these features, and by considering large markets with many participants, we strengthened, challenged, and expanded upon the existing understanding of monoculture. We presented three main theoretical findings, which we verified in computational experiments. All else equal, monoculture
\begin{enumerate}
    \item selects less-preferred applicants,
    \item yields higher overall applicant welfare, though effects for individual applicants depend on their value for decision-makers and risk tolerance, 
    \item is more robust to differences in the number of applications submitted by different applicants.
\end{enumerate}
The framework we have introduced here inherits the tractability of a recent matching markets model (that of \cite{azevedo2016supply}), but we expect is also flexible to other approaches. At a higher level, the framework we have taken here can be described as follows: consider a market where participants have \textit{true} preferences, specify how preferences are realized in practice as \textit{approximate} preferences, and then studying the market outcome (here, the stable matching with respect to the approximate preferences) with respect to participants' true preferences. We hope that this approach can be adopted to study a broader range of questions---in addition to furthering our understanding of algorithmic monoculture.

\newpage
{\small
\bibliography{bib}
}

\newpage
\appendix
\section{Preliminary Results}

\subsection{Observations About Measures with Connected Support} 

To aid subsequent proofs, we recall our assumption on the probability measure $\eta$ (of student values) and the probability distribution $\DD$ (the noise distribution); namely, that they have connected support. We then make two useful observations.

Let $\pi$ be the probability measure associated with $\DD$. Then we make the assumption that both $\eta$ and $\pi$ have \textit{connected support}, where we mean that the smallest closed set of measure $1$ is an interval. Let these respective intervals be $[V_-, V_+]$ and $[X_-, X_+]$. (Note that $V_-, X_-$ may be equal to $-\infty$ and $V_+, X_+$ may be equal to $\infty$.) In what follows we let $V$ and $X$ be random variables distributed according to the probability measures $\eta$ and $\pi$, and let $F_V$ and $F_X$ denote their cdfs.

\begin{proposition}\label{prop:cdf-increasing}
The following hold:
\begin{itemize}
    \item[(i)] $F_V$ is strictly increasing on the interval $(V_-, V_+),$ and $F_V(v)\in (0,1)$ when $v\in (V_-, V_+).$
    \item[(ii)] $F_X$ is strictly increasing on the interval $(X_-, X_+),$ and $F_X(x)\in (0,1)$ when $x\in (X_-, X_+).$
    \item[(iii)] For all integers $\numcolleges\ge 1,$ $F_X^\numcolleges$ is strictly increasing on the interval $(X_-, X_+),$ and $F_X^\numcolleges(x)\in (0,1)$ when $x\in (X_-, X_+).$
\end{itemize}
\end{proposition}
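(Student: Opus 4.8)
The plan is to derive all three parts from the definition of connected support together with elementary properties of cumulative distribution functions. Recall that for any random variable $W$ with cdf $F_W$, one has $F_W(b) - F_W(a) = \Pr[a < W \le b]$ for $a < b$, so $F_W$ is strictly increasing precisely on the interior of its support. The first step is to verify exactly this for $V$: by assumption the smallest closed measure-$1$ set for $\eta$ is the interval $[V_-, V_+]$. I would argue that for any $V_- < a < b < V_+$ we must have $\eta((a,b)) > 0$; otherwise $[V_-,a] \cup [b, V_+]$ (or an appropriate closed subset) would be a closed set of measure $1$ strictly smaller than $[V_-,V_+]$ in the sense of not being an interval containing it, contradicting minimality. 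This gives $F_V(a) < F_V(b)$, i.e. strict monotonicity on $(V_-,V_+)$. For the range statement, I note that $F_V(v) > 0$ for $v \in (V_-,V_+)$ since $\eta((V_-, v]) > 0$ (again by the no-gaps property applied near $V_-$, or by minimality of the support), and similarly $F_V(v) < 1$ since $\eta((v, V_+]) > 0$; hence $F_V(v) \in (0,1)$. This establishes part (i), and part (ii) is the identical argument applied to $X$ and the interval $[X_-, X_+]$.

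For part (iii), the key observation is that $F_X^{\numcolleges}$ is the cdf of the maximum order statistic $X^{(\numcolleges)}$ of $\numcolleges$ i.i.d.\ draws from $\DD$ (as already noted in the excerpt, $\Pr[X^{(\numcolleges)} \le x] = \Pr[X_1 \le x, \dots, X_{\numcolleges} \le x] = F_X(x)^{\numcolleges}$). So it suffices to show $F_X^{\numcolleges}$ inherits strict monotonicity and the open range from $F_X$. Strict monotonicity is immediate: $t \mapsto t^{\numcolleges}$ is strictly increasing on $[0,1]$ for every integer $\numcolleges \ge 1$, and by part (ii) $F_X$ is strictly increasing on $(X_-,X_+)$ with values in $(0,1)$, so the composition $x \mapsto F_X(x)^{\numcolleges}$ is strictly increasing there. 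For the range, $F_X(x) \in (0,1)$ implies $F_X(x)^{\numcolleges} \in (0,1)$ since a power of a number strictly between $0$ and $1$ stays strictly between $0$ and $1$. This completes part (iii).

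I do not anticipate a substantive obstacle; the only point requiring care is the precise topological argument that connected support (``smallest closed set of measure $1$ is an interval'') forces $\eta((a,b)) > 0$ for every open subinterval $(a,b)$ of $(V_-,V_+)$ — one must phrase it so as to genuinely use the minimality/interval condition rather than silently assuming full support. The cleanest formulation: if $\eta((a,b)) = 0$ for some $V_- \le a < b \le V_+$ with $b - a > 0$ and $(a,b) \subseteq (V_-, V_+)$, then $\big([V_-, V_+] \setminus (a,b)\big)$, whose closure is $[V_-,a] \cup [b, V_+]$, is a closed set of measure $1$ that is not an interval, contradicting the assumption that $[V_-,V_+]$ is the smallest such set (any closed measure-$1$ set contained in $[V_-,V_+]$ that is an interval must be all of $[V_-,V_+]$, but here we exhibit a smaller closed measure-$1$ set, hence $[V_-,V_+]$ is not even a \emph{subset}-minimal closed measure-$1$ interval in a way consistent with the hypothesis). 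Once this lemma-level fact is pinned down, everything else is a one-line deduction, so the write-up should be short.
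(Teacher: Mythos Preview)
Your proposal is correct and follows essentially the same approach as the paper: both argue by contradiction that if $F_V$ failed to be strictly increasing on some subinterval $(v_1,v_2)\subseteq(V_-,V_+)$, then $[V_-,V_+]\setminus(v_1,v_2)$ would be a closed set of $\eta$-measure $1$ properly contained in $[V_-,V_+]$, violating minimality. Your treatment of part~(iii) via the composition $x\mapsto F_X(x)^{\numcolleges}$ is slightly more explicit than the paper's (which simply declares (ii) and (iii) analogous to (i)), but this is an elaboration rather than a different route.
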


\begin{proof}
We show the result for $F_V$; the result for $F_X$ is (clearly) analogous. 

First suppose that $F_V$ were not strictly increasing on $V_-, V_+$, such that there exists $v_1, v_2$ such that $V_- < v_1 < v_2 < V_+$ and $F_V(v_1) = F_V(v_2)$. Then $\eta((v_1, v_2)) = F_V(v_2) - F_V(v_1) = 0.$ This implies that $\eta([V_-, V_+]\setminus [v_1, v_2]) = 1 - 0 = 1,$ which contradicts the assumption that $[V_-, V_+]$ is the smallest closed set of $\eta$-measure $1$.

To show that $F_V(v)\in (0,1)$ when $v\in (V_-, V_+)$, it suffices to show that $F_V(V_-)=0$ and $F_V(V_+)=1.$ Assume otherwise, such that $F_V(V_-) > 0$ or $F_V(V_+) < 1.$ This would imply that $\eta([V_-, V_+]) < 1$, which gives a contradiction.
\end{proof}

\begin{proposition}\label{prop:nonzero-measure}
    Let $I$ be an interval such that
    \begin{equation}
        I \cap (V_-, V_+) \neq \emptyset.
    \end{equation}
    Then $\eta(I) > 0.$
\end{proposition}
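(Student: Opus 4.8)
The goal is to show that if an interval $I$ meets the open interior $(V_-, V_+)$ of the support of $\eta$, then $\eta(I) > 0$. The plan is to reduce this to a statement about the cdf $F_V$ and then invoke Proposition~\ref{prop:cdf-increasing}(i), which says $F_V$ is strictly increasing on $(V_-, V_+)$. First I would pick a point $v_0 \in I \cap (V_-, V_+)$. Since this intersection is nonempty and both $I$ and $(V_-, V_+)$ are intervals (hence their intersection is an interval), and since $v_0$ lies in the \emph{open} set $(V_-, V_+)$, I can choose a small nondegenerate subinterval around $v_0$ that is contained in $I \cap (V_-, V_+)$; more carefully, I would split into cases according to whether $v_0$ is an interior point of $I$ or an endpoint of $I$, but in either case there is a nondegenerate interval $[a,b]$ with $a < b$, $[a,b] \subseteq I$ (or $(a,b] \subseteq I$, etc.), and $[a,b] \subseteq (V_-, V_+)$.

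The key step is then: $\eta(I) \ge \eta((a,b]) = F_V(b) - F_V(a)$, and by Proposition~\ref{prop:cdf-increasing}(i), since $V_- < a < b < V_+$, we have $F_V(a) < F_V(b)$, so $\eta(I) > 0$. The only mild subtlety is handling the boundary cases cleanly: if $v_0$ is the left endpoint of $I$ and $I$ is half-open or closed there, I take $[v_0, b] \subseteq I$ for some $b > v_0$ with $b < V_+$ (possible since $v_0 < V_+$); if $I$ is open at $v_0$, then points slightly to the right of $v_0$ are still in $I$ and I proceed similarly; the symmetric cases at the right endpoint are analogous. In every case one extracts a genuine two-point gap $a < b$ inside $I \cap (V_-, V_+)$, which is all that is needed.

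I do not expect any real obstacle here — this is essentially a bookkeeping argument about intervals. The one place to be careful is ensuring the extracted subinterval is \emph{nondegenerate} (has positive length) and sits strictly inside $(V_-, V_+)$ so that Proposition~\ref{prop:cdf-increasing}(i) applies with strict inequality; degenerate choices (a single point) would only give $\eta \ge 0$, which is vacuous. An alternative phrasing avoids cases entirely: let $v_0 \in I \cap (V_-, V_+)$; since $(V_-, V_+)$ is open, pick $\delta > 0$ with $(v_0 - \delta, v_0 + \delta) \subseteq (V_-, V_+)$, and since $I$ is an interval containing $v_0$, at least one of $(v_0 - \delta, v_0]$ or $[v_0, v_0 + \delta)$ is contained in $I$ (because $I$, being an interval, contains a one-sided neighborhood of any of its points unless that point is an isolated endpoint, which cannot happen for an interval); on that half-neighborhood $\eta$ assigns positive mass by the strict monotonicity of $F_V$. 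I would write up whichever of these is shortest.
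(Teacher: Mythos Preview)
Your proposal is correct and follows essentially the same approach as the paper: find a nondegenerate open subinterval $(a,b) \subseteq I \cap (V_-, V_+)$ and then use the strict monotonicity of $F_V$ on $(V_-, V_+)$ from Proposition~\ref{prop:cdf-increasing}(i) to conclude $\eta(I) \ge \eta((a,b)) > 0$. The paper simply asserts the existence of such $a < b$ in one line, whereas you spell out the interval bookkeeping in more detail; the underlying argument is the same.
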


\begin{proof}
    There exists $a,b$ such that $V_-\le a<b\le V_+$ such that $(a,b)\subseteq I \cap (V_-, V_+)$. Therefore, $\eta(I) > \eta((a,b)).$ We have that $\eta((a,b)) = F_V(b) - F_V(a) > 0$, where we used that $F_V$ is strictly increasing on $[V_-, V_+]$, as shown in \Cref{prop:cdf-increasing}. This gives the result.
\end{proof}

\subsection{The Lattice Theorem for Stable Matchings}
A powerful result in the stable matching literature shows that stable matchings have a complete lattice structure. In this section, we state this result in our setup (Theorem A.1 in \cite{azevedo2016supply}). Consider the lattice operators $\vee$ and $\wedge$ on cutoffs in $\RR^C$, where for a set $Z$ of cutoffs,
\begin{equation}
    \left(\bigvee_{P\in Z} P\right)_\college = \sup_{P\in Z} P_\college
\end{equation}
and
\begin{equation}
    \left(\bigwedge_{P\in Z} P\right)_\college = \inf_{P\in Z} P_\college.
\end{equation}

\begin{proposition}[The Lattice Theorem]\label{prop:lattice}
The set of market-clearing cutoffs forms a complete lattice with respect to $\vee$ and $\wedge$ as defined above.
\end{proposition}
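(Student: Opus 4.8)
The statement is a restatement of Theorem A.1 of \cite{azevedo2016supply}, so the shortest route is to observe that our economies are special cases of theirs — the measure $\eta$ together with $\theta_{\mono}$ or $\theta_{\poly}$ induces the type distribution $\eta'$ (see the footnote defining $\eta'$), and the standing connected-support hypotheses on $\eta$ and $\DD$ put us squarely in the regular regime of \cite{azevedo2016supply} — and then to quote the result, noting that the lattice operations there are the coordinatewise $\vee,\wedge$ of the statement. The plan, though, is to also record a short self-contained argument, since in our setting it collapses to a few easy ingredients, which I list first and then combine.

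The first ingredient is the comparative statics of aggregate demand. From the cutoff description of demand — an applicant of type $\theta$ demands $\college$ at $P$ exactly when $e^\theta_\college \ge P_\college$ and $\college \succ^\theta \college'$ for every $\college'$ they can afford — one reads off that $D^\college(P)$ is nonincreasing in $P_\college$ and nondecreasing in every $P_{\college'}$ with $\college'\ne\college$: raising a college's own cutoff only drops applicants from its demand, while raising a rival's cutoff only shrinks the menu an applicant can afford and hence can only push applicants onto $\college$. The second ingredient is the accounting identity $\sum_{\college}D^\college(P) + U(P) = 1$, where $U(P)$ is the measure of applicants who can afford no college; note $U$ is nondecreasing in $P$. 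The third ingredient is continuity of each $D^\college$ in $P$ on the relevant range, which is where the connected-support hypotheses enter (ruling out atoms/gaps in $\eta'$ that would break continuity). Finally, every market-clearing cutoff lies in a fixed compact box $B=\prod_\college[a_\college,b_\college]$: as $P_\college\to+\infty$ the measure of applicants affording $\college$ tends to $0<S/m$, forcing $D^\college(P)=0$; as $P_\college\to-\infty$ the measure of applicants who rank $\college$ first and afford $\college$ tends to $1/m>S/m$, forcing $D^\college(P)>S/m$ — in our symmetric setting these limits follow from \Cref{prop:cdf-increasing} and uniformity of preferences. The box $B$ is a complete lattice under the coordinatewise $\vee,\wedge$ of the statement, and by \cite{azevedo2016supply} it contains at least one market-clearing cutoff, so the set $\mathcal P$ of market-clearing cutoffs is nonempty.

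It then suffices to show $\mathcal P$ is closed in $B$ under arbitrary coordinatewise $\vee$ and $\wedge$, since a nonempty subset of a complete lattice closed under arbitrary joins and meets is itself a complete lattice with the inherited operations. Take $Z\subseteq\mathcal P$ and $Q=\bigvee_{P\in Z}P$. Fix $\college$, pick $P^{(k)}\in Z$ with $P^{(k)}_\college\to Q_\college$, and let $R^{(k)}$ be the cutoff vector equal to $Q$ off coordinate $\college$ and to $P^{(k)}_\college$ in coordinate $\college$. Since $P^{(k)}\le Q$, passing from $P^{(k)}$ to $R^{(k)}$ only raises rival cutoffs, so ingredient one gives $D^\college(R^{(k)})\ge D^\college(P^{(k)})=S/m$; since $R^{(k)}\to Q$, continuity gives $D^\college(Q)\ge S/m$, for every $\college$. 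On the other hand $Q\ge P$ for each $P\in Z$, so $U(Q)\ge U(P)=1-S$, whence $\sum_\college D^\college(Q)=1-U(Q)\le S=\sum_\college (S/m)$. Comparing, $D^\college(Q)=S/m$ for all $\college$, i.e.\ $Q\in\mathcal P$. The argument for $\bigwedge_{P\in Z}P$ is symmetric, with ``raise'' and ``lower'' swapped and the two inequalities reversed.

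The main obstacle is ingredient three, continuity of aggregate demand (together with pinning down the box $B$): this is the only step where regularity of the type distribution does real work, and it is also what guarantees that the coordinatewise sup or inf of an infinite family of market-clearing cutoffs remains an honest market-clearing cutoff rather than escaping to a boundary cutoff at which some college has zero demand. If one instead simply cites Theorem A.1 of \cite{azevedo2016supply}, the entire burden reduces to checking that $\eta'$ meets that paper's hypotheses — again the connected-support assumption.
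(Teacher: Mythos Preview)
The paper does not prove this proposition at all; it simply records it as Theorem~A.1 of \cite{azevedo2016supply} and invokes it downstream. Your proposal correctly identifies this and supplies the citation, so in that sense you match the paper exactly.

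Where you go further is in sketching a self-contained argument via the gross-substitutes monotonicity of $D^\college$, the accounting identity $\sum_\college D^\college + U = 1$, and continuity of aggregate demand. That direct closure argument (show $\bigvee Z$ and $\bigwedge Z$ stay in $\mathcal P$ by squeezing each $D^\college$ between $S/m$ from monotonicity and $S/m$ from the identity) is clean and correct, and it has the virtue of making transparent exactly which structural facts drive the lattice conclusion --- useful given that the paper downstream only ever needs the extreme cutoffs in a permutation-symmetric setting. The one soft spot, which you already flag, is ingredient three: connected support of $\eta$ and $\DD$ by itself does not rule out atoms in the induced score distribution, and atoms can break continuity of $D^\college$ at the relevant cutoff. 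So even your self-contained route quietly leans on the regularity hypotheses of \cite{azevedo2016supply}, and your closing paragraph is right that the honest burden reduces to checking those hypotheses for $\eta'$.
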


For our purposes, this will be useful to establish the two ``Equal Cutoffs Lemmas'' we prove (\Cref{prop:equal-cutoffs} and \Cref{lem:equal-cutoffs-diff-access}), in which we show that our economies induce unique stable matchings, in which all cutoffs are equal. The high-level idea is to show that the maximum and minimum cutoff vectors (1) are each comprised of identical cutoffs, and (2) coincide.

\section{Proofs for \Cref{sec:results}}

We begin by proving the Equal Cutoffs Lemma for $\theta_{\mono}$ and $\theta_{\poly}.$

\begin{proof}[\textbf{Proof of \Cref{prop:equal-cutoffs}}]
    For now, consider a generic $\theta\in \{\theta_{\mono} ,\theta_{\poly}\}$. Let $Z$ be the set of market-clearing cutoffs. Then by symmetry, if $P = (P_1, \cdots, P_\numcolleges) \in Z$, then any permutation of $P$ is also in $Z$. It follows that
    \begin{equation}
        \sup_{P\in Z} P_1 = \sup_{P\in Z} P_2 = \cdots = \sup_{P\in Z} P_\numcolleges
    \end{equation}
    and
    \begin{equation}
        \inf_{P\in Z} P_1 = \inf_{P\in Z} P_2 = \cdots = \inf_{P\in Z} P_\numcolleges.
    \end{equation}
    Call these two common values $P_+$ and $P_-$ respectively. By \Cref{prop:lattice}, $(P_+, \cdots, P_+), (P_-, \cdots, P_-)\in Z$. These are the greatest and least elements of the complete lattice of market-clearing cutoffs. Therefore, it suffices to show that $P_+ = P_-.$ We show this separately for $\theta = \theta_{\mono}$ and $\theta = \theta_{\poly}.$

    \paragraph{Case 1: Monoculture.} Consider $\theta = \theta_{\mono}.$ It suffices to show that $P_+ = P_-$. Assume for sake of contradiction that $P_+ > P_-$. (By definition, $P_+\ge P_-$.) By the definition of market clearing, the total measure of students matched is equal to $S$ under either $(P_-, \cdots, P_-)$ or $(P_+, \cdots, P_+)$. Therefore,
    \begin{equation}
        \int_\RR \Pr[v + X > P_+]\,d\eta(v) = \totalsupply = \int_\RR \Pr[v + X > P_-]\,d\eta(v).
    \end{equation}
    To give the desired contradiction, we show that
    \begin{equation}
        \int_\RR \Pr[v + X > P_+]\,d\eta(v) < \int_\RR \Pr[v + X > P_-]\,d\eta(v).
    \end{equation}
    Clearly, $\Pr[v + X > P_+] \le \Pr[v + X > P_-].$ Therefore, it suffices to show that $\Pr[v + X > P_+] < \Pr[v + X > P_-]$ on a set $I$ of positive $\eta$-measure. We show this is the case for $I = (P_- - X_+, P_- - X_-).$
    
    We first show that the inequality is satisfied on $I$. Note that for $v\in I,$ we have that $P_- - v \in (X_-, X_+).$ Therefore,
    \begin{align}
        \Pr[v + X > P_-] - \Pr[v + X > P_+] = F_X(P_+ - v) - F_X(P_- - v) > 0
    \end{align}
    where the last inequality follows from observing that by $P_+-v > P_- - v$ by assumption and applying \Cref{prop:cdf-increasing}, which says that $F_X$ is strictly increasing on $(X_-, X_+)$.

    It remains to show that the interval $I$ has positive $\eta$-measure. We begin by showing that
    $P_- > V_- + X_-.$
    This is true because for $P \le V_- + X_-$,
    \begin{equation}
        \int_\RR \Pr[v + X > P]\,d\eta(v) \ge \int_\RR \Pr[v + X > V_- + X_-]\,d\eta(v) = \int_\RR 1\,d\eta(v) = 1,
    \end{equation}
    contradicting the assumption that $S < 1$.
    Similarly, $P_- < V_+ + X_+.$ It follows that $P_- - X_+ < V_+$ and $P_- - X_- > V_-.$
    Therefore, the interval $I = (P_- - X_+, P_- - X_-)$ intersects $(V_-, V_+)$. So \Cref{prop:nonzero-measure} implies that it has positive measure.

    \paragraph{Case 2: Polyculture.} The result holds analogously for $\theta = \theta_{\poly}$, where we replace $X$ with $X^{(\numcolleges)}$ (and thus $F_X$ with $F_X^\numcolleges$).

    It suffices to show that $P_+ = P_-$. Assume for sake of contradiction that $P_+ > P_-$. (By definition, $P_+\ge P_-$.) By the definition of market clearing, the total measure of students matched is equal to $S$ under either $(P_-, \cdots, P_-)$ or $(P_+, \cdots, P_+)$. Therefore,
    \begin{equation}
        \int_\RR \Pr[v + X^{(\numcolleges)} > P_+]\,d\eta(v) = \totalsupply = \int_\RR \Pr[v + X^{(\numcolleges)} > P_-]\,d\eta(v).
    \end{equation}
    To give the desired contradiction, we show that
    \begin{equation}
        \int_\RR \Pr[v + X^{(\numcolleges)} > P_+]\,d\eta(v) < \int_\RR \Pr[v + X^{(\numcolleges)} > P_-]\,d\eta(v).
    \end{equation}
    Clearly, $\Pr[v + X^{(\numcolleges)} > P_+] \le \Pr[v + X^{(\numcolleges)} > P_-].$ Therefore, it suffices to show that $\Pr[v + X^{(\numcolleges)} > P_+] < \Pr[v + X^{(\numcolleges)} > P_-]$ on a set $I$ of positive $\eta$-measure. We show this is the case for $I = (P_- - X_+, P_- - X_-).$
    
    We first show that the inequality is satisfied on $I$. Note that for $v\in I,$ we have that $P_- - v \in (X_-, X_+).$ Therefore,
    \begin{align}
        \Pr[v + X^{(\numcolleges)} > P_-] - \Pr[v + X^{(\numcolleges)} > P_+] = F_X^\numcolleges(P_+ - v) - F_X^\numcolleges(P_- - v) > 0
    \end{align}
    where the last inequality follows from observing that by $P_+-v > P_- - v$ by assumption and applying \Cref{prop:cdf-increasing}, which says that $F_X^\numcolleges$ is strictly increasing on $(X_-, X_+)$.

    It remains to show that the interval $I$ has positive $\eta$-measure. We begin by showing that
    $P_- > V_- + X_-.$
    This is true because for $P \le V_- + X_-$,
    \begin{equation}
        \int_\RR \Pr[v + X^{(\numcolleges)} > P]\,d\eta(v) \ge \int_\RR \Pr[v + X^{(\numcolleges)} > V_- + X_-]\,d\eta(v) = \int_\RR 1\,d\eta(v) = 1,
    \end{equation}
    contradicting the assumption that $\totalsupply < 1$.
    Similarly, $P_- < V_+ + X_+.$ It follows that $P_- - X_+ < V_+$ and $P_- - X_- > V_-.$
    Therefore, the interval $I = (P_- - X_+, P_- - X_-)$ intersects $(V_-, V_+)$. So \Cref{prop:nonzero-measure} implies that it has positive measure.
\end{proof}

As a consequence of the above proof,
\begin{equation}\label{eq:interval-mono}
    \eta((P_{\mono} - X_+, P_{\mono} - X_-)) > 0
\end{equation}
and
\begin{equation}\label{eq:interval-poly}
    \eta((P_{\poly} - X_+, P_{\poly} - X_-)) > 0,
\end{equation}
which will be useful in subsequent proofs.

\begin{proof}[\textbf{Proof of \Cref{cor:cutoff-inequality}}]
    Noting that the total measure of students matched is the same under both monoculture and polyculture (equal to $\totalsupply$), we have, using \eqref{eq:match-mono} and \eqref{eq:match-poly} that
    \begin{equation}
        \int_{\RR} \Pr[v + X > P_{\mono}]\,d\eta(v) = \totalsupply = \int_{\RR} \Pr[v + X^{(\numcolleges)} > P_{\poly}]\,d\eta(v).
    \end{equation}
    Assume for the sake of contradiction that $P_{\mono}\ge P_{\poly}.$ To give the desired contradiction, we will show that this implies
    \begin{equation}
        \int_{\RR} \Pr[v + X > P_{\mono}]\,d\eta(v) < \int_{\RR} \Pr[v + X^{(\numcolleges)} > P_{\poly}]\,d\eta(v).
    \end{equation}
    Clearly,
    \begin{equation}
        \Pr[v + X > P_{\mono}] \le \Pr[v + X^{(\numcolleges)} > P_{\poly}] 
    \end{equation}
    for all $v$. We now show that
    \begin{equation}
        \Pr[v + X > P_{\mono}] < \Pr[v + X^{(\numcolleges)} > P_{\poly}] 
    \end{equation}
    for $v\in (P_{\poly}-X_+, P_{\poly}-X_-)$, an interval which we will show has positive $\eta$-measure. We have that
    \begin{equation}
        \Pr[v + X > P_{\mono}] \le \Pr[v + X > P_{\poly}],
    \end{equation}
    so it suffices to show that 
    \begin{equation}\label{eq:water}
        \Pr[v + X > P_{\poly}] < \Pr[v + X^{(\numcolleges)} > P_{\poly}].
    \end{equation}
    We have that
    \begin{align}
        \Pr[v + X > P_{\poly}] &= 1 - F_X(P_{\poly} - v)\\
        \Pr[v + X^{(\numcolleges)} > P_{\poly}] &= 1 - F_X^\numcolleges(P_{\poly} - v).
    \end{align}
    Therefore, since $\numcolleges\ge 2$, \eqref{eq:water} holds whenever $F_X(P_{\poly} - v)\in (0,1)$, which, by \Cref{prop:cdf-increasing}, is true when $P_{\poly} - v \in (X_-, X_+).$ Equivalently, $v \in (P_{\poly} - X_+, P_{\poly} - X_-).$ This interval has positive measure \eqref{eq:interval-poly}.
\end{proof}

\subsection{Proof of \Cref{thm:wisdom}}
We first prove part (i) and then part (ii).

\subsubsection{\Cref{thm:wisdom}(i)} 
The proof of \Cref{thm:wisdom}(i) requires a few key observations, which we provide as separate (but closely related) parts in the following lemma.

\begin{lemma}\label{lem:potato}
If $\DD$ is maximum concentrating, the following hold for all $\epsilon, \delta > 0$ when $\numcolleges$ is sufficiently large:
\begin{enumerate}
    \item[(i)] For all $v < P_{\poly} - \EE\left[X^{(\numcolleges)}\right] - \frac{\delta}{3},$
    \begin{equation}
        \Pr[\mu_{\poly}(v)\in \numcolleges] < \epsilon.
    \end{equation}
    \item[(ii)] For all $v > P_{\poly} - \EE\left[X^{(\numcolleges)}\right] + \frac{\delta}{3},$
    \begin{equation}
        \Pr[\mu_{\poly}(v)\in \numcolleges] > 1 - \epsilon.
    \end{equation}
    \item[(iii)] 
    \begin{equation}
\left|\left(P_{\poly} - \EE\left[X^{(\numcolleges)}\right]\right) - v_\totalsupply\right| < \frac{2\delta}{3}.
    \end{equation}
\end{enumerate}
\end{lemma}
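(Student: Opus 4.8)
The plan is to exploit the two ingredients already in hand: the cutoff characterization from Proposition \ref{prop:prob-match} (more precisely \eqref{eq:match-poly-order}), which says $\Pr[\mu_{\poly}(v)\in\colleges] = \Pr[v + X^{(\numcolleges)} > P_{\poly}]$, and the maximum-concentrating hypothesis on $\DD$, which says $X^{(\numcolleges)}$ concentrates around its mean $\EE[X^{(\numcolleges)}]$ in the sense of a weak law. Writing $\mu_m := \EE[X^{(\numcolleges)}]$, the event $\{v + X^{(\numcolleges)} > P_{\poly}\}$ is the event $\{X^{(\numcolleges)} - \mu_m > (P_{\poly} - \mu_m) - v\}$. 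So everything hinges on the sign and magnitude of the gap $(P_{\poly} - \mu_m) - v$, which is exactly why part (iii) — locating $P_{\poly} - \mu_m$ near $v_\totalsupply$ — is what makes parts (i) and (ii) snap into place.

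**Parts (i) and (ii).** First I would fix $\epsilon, \delta > 0$. For part (i), take $v < P_{\poly} - \mu_m - \delta/3$, so that $(P_{\poly} - \mu_m) - v > \delta/3 > 0$. Then
\begin{equation}
    \Pr[\mu_{\poly}(v)\in\colleges] = \Pr\!\left[X^{(\numcolleges)} - \mu_m > (P_{\poly}-\mu_m) - v\right] \le \Pr\!\left[\,|X^{(\numcolleges)} - \mu_m| > \tfrac{\delta}{3}\,\right],
\end{equation}
and by Definition \ref{def:maximum-concentrating} (applied with $\epsilon$ there equal to $\delta/3$) the right-hand side is below $\epsilon$ once $\numcolleges$ is large enough. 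The bound is uniform in $v$ because the threshold $\delta/3$ does not depend on $v$ — this uniformity is the point of stating the lemma the way it is. For part (ii), take $v > P_{\poly} - \mu_m + \delta/3$, so $(P_{\poly} - \mu_m) - v < -\delta/3$; then $\Pr[\mu_{\poly}(v)\in\colleges] = 1 - \Pr[X^{(\numcolleges)} - \mu_m \le (P_{\poly}-\mu_m)-v] \ge 1 - \Pr[X^{(\numcolleges)} - \mu_m \le -\delta/3] \ge 1 - \Pr[|X^{(\numcolleges)} - \mu_m| > \delta/3 - o(1)]$, which exceeds $1-\epsilon$ for $\numcolleges$ large by the same concentration statement. (One should be slightly careful with the weak vs.\ strict inequality at the boundary, but since $F_X^\numcolleges$ is continuous on $(X_-,X_+)$ by Proposition \ref{prop:cdf-increasing} — or one simply absorbs it by shrinking the slack from $\delta/3$ to $\delta/4$ — this is harmless.)

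**Part (iii): the main obstacle.** This is the substantive step. The market-clearing condition \eqref{eq:match-poly-order} with the Equal Cutoffs Lemma gives $\int_\RR \Pr[v + X^{(\numcolleges)} > P_{\poly}]\,d\eta(v) = \totalsupply$. Using parts (i) and (ii) of this same lemma (which only used the concentration hypothesis, not (iii), so there's no circularity): for $\numcolleges$ large, the integrand is $< \epsilon$ for $v < P_{\poly} - \mu_m - \delta/3$ and $> 1-\epsilon$ for $v > P_{\poly} - \mu_m + \delta/3$. Hence the measure of matched applicants satisfies
\begin{equation}
    (1-\epsilon)\,\eta\big((P_{\poly}-\mu_m + \tfrac{\delta}{3},\, \infty)\big) \;\le\; \totalsupply \;\le\; \eta\big((P_{\poly}-\mu_m-\tfrac{\delta}{3},\,\infty)\big) + \epsilon.
\end{equation}
Since $\eta((v_\totalsupply,\infty)) = \totalsupply$ by definition of $v_\totalsupply$, and since $F_V$ is strictly increasing (hence $v\mapsto \eta((v,\infty))$ is strictly decreasing and, by Proposition \ref{prop:cdf-increasing}, continuous) on $(V_-,V_+)$, these two inequalities force $P_{\poly}-\mu_m - \delta/3$ and $P_{\poly}-\mu_m + \delta/3$ to straddle a neighborhood of $v_\totalsupply$; sending $\epsilon\to 0$ (i.e.\ taking $\numcolleges$ large enough relative to $\delta$) pins $P_{\poly} - \mu_m$ to within $\delta/3 + o(1) < 2\delta/3$ of $v_\totalsupply$, which is the claim. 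The one real technical point to watch is the regularity of $\eta$ near $v_\totalsupply$: one needs $v_\totalsupply \in (V_-, V_+)$ (so that small perturbations of the cutoff move the tail mass by a controlled amount) and that $\eta$ places no atom at the threshold; both follow from the connected-support assumption and Propositions \ref{prop:cdf-increasing}–\ref{prop:nonzero-measure}, together with $\totalsupply \in (0,1)$. I would also double-check that $P_{\poly} - \mu_m$ stays bounded (within $[V_-, V_+]$ up to the noise support width) as $\numcolleges\to\infty$, which is exactly the kind of estimate established in the proof of Corollary \ref{cor:cutoff-inequality} and the Equal Cutoffs Lemma, so it can be quoted rather than redone.
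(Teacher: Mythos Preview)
Your proposal is correct and follows essentially the same route as the paper: parts (i) and (ii) are proved exactly as you outline (rewrite $\Pr[\mu_{\poly}(v)\in\colleges]$ as a tail of $X^{(\numcolleges)}-\EE[X^{(\numcolleges)}]$ and invoke the maximum-concentrating definition with threshold $\delta/3$), and part (iii) is obtained by integrating the market-clearing identity $\int \Pr[\mu_{\poly}(v)\in\colleges]\,d\eta(v)=\totalsupply$, bounding the integrand via (i)--(ii), and inverting using the strict monotonicity of $F_V$ from the connected-support assumption. The only cosmetic difference is that the paper derives one side of the sandwich in (iii) explicitly and declares the other analogous, whereas you write both inequalities at once; your caveats about weak-versus-strict inequalities and the regularity of $\eta$ near $v_\totalsupply$ are well taken and handled the same way.
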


We provide some intuition for each part of the lemma, and how they come together to show \Cref{thm:wisdom}. First consider the value $P_{\poly} - \EE\left[X^{(\numcolleges)}\right],$ which is a natural ``threshold'' in the following sense: The expected maximum score of a student with value $P_{\poly} - \EE\left[X^{(\numcolleges)}\right]$ is exactly the cutoff $P_{\poly}.$ Part (i) of the lemma says that a student with even a slightly lower score than $P_{\poly} - \EE\left[X^{(\numcolleges)}\right]$ has a very low chance of having their maximum score exceed the cutoff $P_{\poly}.$ Meanwhile, part (ii) says that a student with even a slightly higher score than $P_{\poly} - \EE\left[X^{(\numcolleges)}\right]$ is almost certain to have their maximum score exceed $P_{\poly}.$ Part (iii) of the lemma says that this threshold $P_{\poly} - \EE\left[X^{(\numcolleges)}\right]$ cannot differ much from $v_\totalsupply$, which we recall is defined such that exactly an $S$ proportion of students have value higher than $v_\totalsupply$. Parts (i) and (ii) essentially show that there is a threshold at which the probability a student is matched jumps from nearly $0$ to almost $1$; part (iii) specifies the location of this threshold. In combination, these parts show \Cref{thm:wisdom}:

\begin{proof}[\textbf{Proof of \Cref{thm:wisdom} using \Cref{lem:potato}.}]
We first show that for $v < v_\totalsupply,$ for all $\epsilon > 0$, $\Pr[\mu_{\poly}(v)\in \colleges] < \epsilon$ for all $\numcolleges$ sufficiently large.
Consider any $v < v_\totalsupply$. Then there exists $\delta > 0$ such that $v < v_\totalsupply - \delta$. Using \Cref{lem:potato}(iii) and taking $\numcolleges$ sufficiently large, it follows from $v < v_\totalsupply - \delta$ that $v < P_{\poly} - \EE\left[X^{(\numcolleges)}\right] - \frac{\delta}{3}.$ In turn, again taking $\numcolleges$ sufficiently large, \Cref{lem:potato}(i) implies that
\begin{equation}
    \Pr[\mu_{\poly}(v)\in \colleges] < \epsilon.
\end{equation}

We next show that for $v > v_\totalsupply,$ for all $\epsilon > 0$, $\Pr[\mu_{\poly}(v)\in \colleges] > 1 - \epsilon$ for all $\numcolleges$ sufficiently large. Consider any $v > v_\totalsupply$. Then there exists $\delta > 0$ such that $v > v_\totalsupply + \delta$. Taking $\numcolleges$ sufficiently large, \Cref{lem:potato}(iii) shows that $v > v_\totalsupply + \delta$ implies that $v > P_{\poly} - \EE\left[X^{(\numcolleges)}\right] + \frac{\delta}{3},$ and \Cref{lem:potato}(ii) implies that
\begin{equation}
    \Pr[\mu_{\poly}(v)\in \colleges] > 1 - \epsilon.
\end{equation}
This completes the proof.
\end{proof}

\begin{proof}[\textbf{Proof of \Cref{lem:potato}}]
We show the three parts in order.

    We first show part (i). Suppose that $v < P_{\poly} - \EE\left[X^{(\numcolleges)}\right] - \frac{\delta}{3}$. Then, using \Cref{prop:prob-match}(ii),
    \begin{align}
        \Pr\left[\mu_{\poly}(v)\in \colleges\right] &= \Pr\left[v + \max_{\college\in \colleges}X_\college > P_{\poly}\right]\\
        &= \Pr\left[X^{(\numcolleges)} > P_{\poly} - v\right] \\
        &< \Pr\left[X^{(\numcolleges)} > \EE\left[X^{(\numcolleges)}\right] + \frac{\delta}{3}\right]\\
        &= \Pr\left[X^{(\numcolleges)} - \EE\left[X^{(\numcolleges)}\right] > \frac{\delta}{3}\right] \\
        &< \epsilon,
    \end{align}
    where the last line follows from the definition of $\DD$ being maximum concentrating and by taking $\numcolleges$ sufficiently large. 
    
    Part (ii) can be shown analogously.

    To show part (iii), we use parts (i) and (ii). For $\numcolleges$ sufficiently large,
    \begin{align}
        S &= \int_{-\infty}^\infty \Pr[\mu_{\poly}(v)\in \colleges] \,d\eta(v)\\
        &= \int_{-\infty}^{P_{\poly} - \EE\left[X^{(\numcolleges)}\right] - \frac{\delta}{3}} \Pr[\mu_{\poly}(v)\in \colleges] \,d\eta(v)
        + \int_{P_{\poly} - \EE\left[X^{(\numcolleges)}\right] - \frac{\delta}{3}}^\infty \Pr[\mu_{\poly}(v)\in \colleges] \,d\eta(v)\\
        &< \int_{-\infty}^{P_{\poly} - \EE\left[X^{(\numcolleges)}\right] - \frac{\delta}{3}} \epsilon \,d\eta(v)
        + \int_{P_{\poly} - \EE\left[X^{(\numcolleges)}\right] - \frac{\delta}{3}}^\infty 1 \,d\eta(v) \label{eq:lime-1} \\
        &= \eta\left(\left(-\infty, P_{\poly} - \EE\left[X^{(\numcolleges)}\right] - \frac{\delta}{3}\right)\right) \cdot \epsilon + \eta\left(\left(P_{\poly} - \EE\left[X^{(\numcolleges)}\right] - \frac{\delta}{3}, \infty \right)\right),\label{eq:lime-2}
    \end{align}
    where \eqref{eq:lime-1} follows from part (i) and by taking $\numcolleges$ sufficiently large.

    Now define $A := \eta\left(\left(P_{\poly} - \EE\left[X^{(\numcolleges)}\right] - \frac{\delta}{3}, \infty \right)\right).$ Substituting this notation into \eqref{eq:lime-2}, we have
    \begin{align}
        S &< (1 - A)\cdot \epsilon + A\\
        S - \epsilon &< (1 - \epsilon)A \\
        A &> \frac{S - \epsilon}{1 - \epsilon}.
    \end{align}
    Taking $\epsilon$ sufficiently small, $A$ is arbitrarily close to $S$, which implies---using that $\eta$ has connected support---that $P_{\poly} - \EE\left[X^{(\numcolleges)}\right] - \frac{\delta}{3} < v_\totalsupply + \delta'$ for any $\delta' > 0$ when $\numcolleges$ is sufficiently large. Setting $\delta' = \frac{\delta}{3},$ we have that
    \begin{equation}\label{eq:dragonfruit-1}
        \left(P_{\poly} - \EE\left[X^{(\numcolleges)}\right]\right) - v_\totalsupply < \frac{2\delta}{3}.
    \end{equation}
    when $\numcolleges$ is sufficiently large.

    We may analogously use (ii) to show that
    \begin{equation}\label{eq:dragonfruit-2}
        v_\totalsupply - \left(P_{\poly} - \EE\left[X^{(\numcolleges)}\right]\right) < \frac{2\delta}{3}.
    \end{equation}
    Combining \eqref{eq:dragonfruit-1} and \eqref{eq:dragonfruit-2} shows the result.
\end{proof}

\subsubsection{\Cref{thm:wisdom}(ii)}
\begin{proof}[Proof of \Cref{thm:wisdom}(ii)]
We first show that $P_{\mono}$ is fixed for all $\numcolleges$. Indeed,
\begin{align}
    S &= \int_{-\infty}^\infty \Pr[\mu_{\mono}(v)\in \colleges]\,d\eta(v)\\
    &= \int_{-\infty}^\infty \Pr[v + X > P_{\mono}]\,d\eta(v).\label{eq:red}
\end{align}
It follows from the proof of \Cref{prop:equal-cutoffs} that there is a unique solution to \eqref{eq:red}. Therefore, $P_{\mono}$ is the same for all $\numcolleges$. It follows directly that $\Pr[v + X > P_{\mono}]$ is constant in $\numcolleges$. 

It remains to show that college welfare is suboptimal for $\mu_{\mono}$---i.e., that
\begin{equation}
    \int_{-\infty}^\infty v \Pr[v + X > P_{\mono}]\,d\eta(v) < \int_{v_\totalsupply}^\infty v\,d\eta(v).
\end{equation}
Since $\int_{-\infty}^\infty \Pr[v + X > P_{\mono}]\,d\eta(v) = \totalsupply = \int_{v_\totalsupply}^\infty v\,d\eta(v),$ it suffices to show that $\Pr[v+X > P_{\mono}]\in (0,1)$ on a set $I$ of positive $\eta$-measure. The inequality clearly holds for $I = (P_{\mono} - X_+, P_{\mono} - X_-)$, where we use that $X$ has connected support $[X_-, X_+]$. (Specifically, \Cref{prop:cdf-increasing} gives that $F_X(x)$ is strictly increasing and lies in $(0,1)$ for $x\in (X_-, X_+)$.) Finally, we have from \eqref{eq:interval-mono} that $I$ has positive measure.
\end{proof}

\subsection{Proofs for \Cref{sec:results-student}}

\begin{proof}[Proof of \Cref{thm:top-choice}]
We show the three parts separately:
\\
\\
\textbf{Part (i).} To show part (i), note that 
\begin{equation}
    \Pr[\Rank_v(\mu_{\mono}(v)) = 1] = \Pr[v + X > P_{\mono}].
\end{equation}
This can be contrasted with the polyculture setting, in which
\begin{equation}
    \Pr[\Rank_v(\mu_{\poly}(v)) = 1] = \Pr[v + X > P_{\poly}].
\end{equation}
By \Cref{cor:cutoff-inequality}, $P_{\mono} < P_{\poly}$ for all $\numcolleges$, from which the non-strict inequality. To characterize when the inequality is strict, note that
\begin{align}
    \Pr[\Rank_v(\mu_{\mono}(v)) = 1] &= \Pr[v + X > P_{\mono}] = 1 - F_X(P_{\mono} - v)\\
    \Pr[\Rank_v(\mu_{\poly}(v)) = 1] &= \Pr[v + X > P_{\poly}] = 1 - F_X(P_{\poly} - v).
\end{align}
When $v\in (P_{\mono} - X_+, P_{\mono} - X_-),$ we have that
\begin{equation}
    P_{\mono} - v \in (X_-, X_+).
\end{equation}
Noting that $P_{\poly} - v > P_{\mono} - v,$ the result follows from \Cref{prop:cdf-increasing}, which shows that $F_X$ is strictly increasing on $(X_-, X_+).$ Also, we have already shown that this interval is of positive measure \eqref{eq:interval-mono}.
\\
\\
\textbf{Part (ii).} Part (ii) is a direct consequence of the Equal Cutoffs Lemma (\Cref{prop:equal-cutoffs}), since if, under monoculture, the student's score is above the cutoff for one college, then it is above the cutoff for all colleges. Therefore, any student who receives an offer from one college receives an offer from all colleges, and matches with their top choice.
\\
\\
\textbf{Part (iii).} 
We first show that $v_\totalsupply < P_{\mono} - X_-$. Indeed,
\begin{align}
    \int_{P_{\mono} - X_-}^\infty 1 \,d\eta(v) &= \int_{P_{\mono} - X_-}^\infty \Pr[\mu_{\mono}(v)\in \colleges] \,d\eta(v)\\
    &< S\\
    &= \int_{v_\totalsupply}^\infty 1\,d\eta(v).
\end{align}
It is also clear that $v_\totalsupply < V_+$.
Therefore, $(v_\totalsupply, P_{\mono} - X_-)$ is a non-empty interval that intersects $(V_-, V_+),$ and so has positive $\eta$-measure from \Cref{prop:nonzero-measure}. It suffices to show that $\Pr[\mu_{\mono}(v)\in \colleges] < \Pr[\mu_{\poly}(v)\in \colleges]$ on this interval.

When $v < P_{\mono} - X_-,$
    \begin{equation}
        \Pr[\mu_{\mono}(v)\in \colleges] = 1 - F_X(P_{\mono} - v) < 1.
    \end{equation}
    Note here that $1 - F_X(P_{\mono} - v)$ is constant in $\numcolleges$, so for all $v\in (v_\totalsupply, P_{\mono} - X_-),$
    \begin{equation}
        \lim_{C\rightarrow \infty} \Pr[\mu_{\mono}(v)\in \colleges] < 1.
    \end{equation}
    Meanwhile, for any $v > v_\totalsupply$, \Cref{thm:wisdom} gives that
    \begin{equation}
        \lim_{C\rightarrow \infty} \Pr[\mu_{\poly}(v)\in \colleges] = 1.
    \end{equation}
    It follows that for all $(v_\totalsupply, P_{\mono} - X_-)$, for $\numcolleges$ sufficiently large,
    \begin{equation}
        \Pr[\mu_{\poly}(v)\in \colleges] > \Pr[\mu_{\mono}(v)\in \colleges].
    \end{equation}
\end{proof}

\section{Proofs for \Cref{sec:diff-access}}

\begin{proof}[\textbf{Proof of \Cref{lem:equal-cutoffs-diff-access}}]
Let $Z$ be the set of market-clearing cutoffs. Then by symmetry, if $P = (P_1, \cdots, P_\numcolleges) \in Z$, then any permutation of $P$ is also in $Z$. It follows that
    \begin{equation}
        \sup_{P\in Z} P_1 = \sup_{P\in Z} P_2 = \cdots = \sup_{P\in Z} P_\numcolleges
    \end{equation}
    and
    \begin{equation}
        \inf_{P\in Z} P_1 = \inf_{P\in Z} P_2 = \cdots = \inf_{P\in Z} P_\numcolleges.
    \end{equation}
    Call these two common values $P_+$ and $P_-$ respectively. By \Cref{prop:lattice}, $(P_+, \cdots, P_+), (P_-, \cdots, P_-)\in Z$. These are the greatest and least elements of the complete lattice of market-clearing cutoffs. Therefore, it suffices to show that $P_+ = P_-.$
    
    We show the result for $\theta = \theta_{\poly, \kappa},$ and the result follows similarly for$\theta = \theta_{\mono, \kappa}.$ It suffices to show that $P_+ = P_-$. Assume for sake of contradiction that $P_+ > P_-$. (By definition, $P_+\ge P_-$.) By the definition of market clearing, the total measure of students matched is equal to $S$ under either $(P_-, \cdots, P_-)$ or $(P_+, \cdots, P_+)$. Therefore,
    \begin{equation}
        \int_\RR \Pr[v + X^{(\kappa)} > P_+]\,d\eta(v) = \totalsupply = \int_\RR \Pr[v + X^{(\kappa)} > P_-]\,d\eta(v).
    \end{equation}
    To give the desired contradiction, we show that
    \begin{equation}
        \int_\RR \Pr[v + X^{(\kappa)} > P_+]\,d\eta(v) < \int_\RR \Pr[v + X^{(\kappa)} > P_-]\,d\eta(v).
    \end{equation}
    Clearly, $\Pr[v + X^{(\kappa)} > P_+] \le \Pr[v + X^{(\kappa)} > P_-].$ Therefore, it suffices to show that $\Pr[v + X^{(\kappa)} > P_+] < \Pr[v + X^{(\kappa)} > P_-]$ on a set $I$ of positive $\eta$-measure. We show this is the case for $I = (P_- - X_+, P_- - X_-).$
    
    We first show that the inequality is satisfied on $I$. Note that for $v\in I,$ we have that $P_- - v \in (X_-, X_+).$ Therefore,
    \begin{align}
        \Pr[v + X^{(\kappa)} > P_-] - \Pr[v + X^{(\kappa)} > P_+] = \sum_{k=1}^\numcolleges \Pr[\kappa = k] \left(F_X^k(P_+ - v) - F_X^k(P_- - v)\right) > 0
    \end{align}
    where the last inequality follows from observing that by $P_+-v > P_- - v$ by assumption and applying \Cref{prop:cdf-increasing}, which says that $F_X^k$ is strictly increasing on $(X_-, X_+)$.

    It remains to show that the interval $I$ has positive $\eta$-measure. We begin by showing that
    $P_- > V_- + X_-.$
    This is true because for $P \le V_- + X_-$,
    \begin{equation}
        \int_\RR \Pr[v + X^{(\kappa)} > P]\,d\eta(v) \ge \int_\RR \Pr[v + X > V_- + X_-]\,d\eta(v) = \int_\RR 1\,d\eta(v) = 1,
    \end{equation}
    contradicting the assumption that $S < 1$.
    Similarly, $P_- < V_+ + X_+.$ It follows that $P_- - X_+ < V_+$ and $P_- - X_- > V_-.$
    Therefore, the interval $I = (P_- - X_+, P_- - X_-)$ intersects $(V_-, V_+)$. So \Cref{prop:nonzero-measure} implies that it has positive measure.
\end{proof}

\begin{proof}[\textbf{Proof of \Cref{thm:diff-app-access}}]
    To show part (i), note that $P_{\mono} = P_{\mono, \kappa}$.
    
    To show part (ii), note that
    \begin{align}
        \Pr[\mu_{\poly, \kappa}(v)\in \colleges\,|\,k(v)=k]
        &= \Pr[v + X^{(k)} > P_{\poly, \kappa}]\\
        &= 1 - F_X^k(P_{\poly, \kappa} -v),\label{eq:clock}
    \end{align}
    which is increasing in $k$. Moreover, when $v\in (P_{\poly, \kappa} - X_+, P_{\poly, \kappa} - X_-),$ we have that $P_{\poly, \kappa} - v\in (X_-, X_+),$ so by \Cref{prop:cdf-increasing}, $F_X(P_{\poly, \kappa})\in (0,1),$ in which case \eqref{eq:clock} is strictly increasing in $k$.
\end{proof}

\end{document}